\documentclass[11pt]{article}
\usepackage{header}
\usetikzlibrary{patterns.meta}

\tikzcdset{every label/.append style = {font = \normalsize}}

\newcommand{\SPRS}{{\mathsf{Isometry1PRS}}}

\newcommand{\calS}{{\mathcal{S}}}
\newcommand{\mylambda}{\lambda}
\newcommand{\N}{{\mathbb{N}}}

\newcommand{\expectation}{\operatorname*{\mathbb E}}

\makeatletter

\usepackage{authblk}
\title{On the Limits of Stretching Quantum Pseudorandomness}
\author[1]{Boyang Chen
}
\author[2]{Andrea Coladangelo
}
\author[3]{Yao-Ting Lin
}
\author[2]{Nikos Skoumios
}
\author[2]{Justin Tysdal
}
\author[2]{Yiming Wang
}
\affil[1]{Department of Computer Science and Technology, Tsinghua University}
\affil[2]{Paul G. Allen School of Computer Science \& Engineering, University of Washington}
\affil[3]{UC Santa Barbara}

\makeatother

\begin{document}
\maketitle

\begin{abstract}

Pseudorandom states, introduced by Ji, Liu, and Song~\cite{JLS18}, are quantum analogues of classical pseudorandom generators. A fundamental property of classical pseudorandom generators is that their output can be stretched to arbitrary polynomial length. Whether an analogous stretching property holds for quantum pseudorandom states remains unclear. 

In this work, we prove the first black-box separation between single-copy secure pseudorandom states ($\oPRS$) with different output lengths. Specifically, we construct a quantum oracle relative to which $\oPRS$ with output length $m(n)=1.1n$ exist, but $\oPRS$ with output length $m(n)=\Omega(n^{2+\epsilon})$ do not, for any $\epsilon>0$. Our proof leverages the Common Haar Random State (CHRS) model introduced by Chen, Coladangelo, and Sattath~\cite{CCS24}, and introduces a technique to bound the effective number of resource CHRS states utilized by any $\oPRS$ generator in this model.

\end{abstract}

\newpage
\tableofcontents

\newpage

\section{Introduction}

One of the foundational challenges in cryptography is to understand the minimal assumptions under which cryptographic primitives can exist. In the classical world, this landscape is relatively well-understood: the existence of One-Way Functions ($\mathsf{OWF}$) is regarded as the ``minimal'' computational assumption. It implies the existence of pseudorandom generators ($\mathsf{PRG}$s), digital signatures, commitments, and symmetric-key encryption. Conversely, the existence of virtually any cryptographic primitive with computational security implies the existence of $\mathsf{OWF}$s.

The quantum setting, however, presents a starkly different picture. We now know of a variety of quantum cryptographic primitives that are sufficient for building secure protocols but are \emph{potentially weaker} than $\mathsf{OWF}$s. To capture this landscape, Morimae coined the term \emph{Microcrypt}, extending Impagliazzo's five worlds to include a world where quantum cryptography exists even if classical $\mathsf{OWF}$s do not. One of the central primitives in Microcrypt are \emph{pseudorandom state generators} ($\mathsf{PRS}$), first introduced by Ji, Liu, and Song~\cite{JLS18}. These can be thought of as the quantum analogue of $\mathsf{PRG}$s. A $\mathsf{PRS}$ is a family of efficiently generatable quantum states indexed by a classical key, which are computationally indistinguishable from Haar random states, given polynomially many copies. Following~\cite{JLS18}, numerous other primitives populating Microcrypt have been introduced, including pseudorandom function-like states ($\mathsf{PRFS}$)~\cite{AGQY22}, EFI pairs (efficiently samplable statistically far-but-computationally-indistinguishable quantum states)~\cite{yan22,BCQ22}, one-way state generators ($\mathsf{OWSG}$)~\cite{MY22b}, and one-way puzzles ($\mathsf{OWPuzz}$)~\cite{KT24,CGG24}. Most of these primitives have been shown to be qualitatively weaker than $\mathsf{OWF}$s (in the formal sense of a black-box separation), starting with a result of Kretschmer~\cite{Kre21}.

In this work, we focus on \emph{quantum pseudorandomness}. The concept of pseudorandomness is ubiquitous in classical cryptography and complexity theory. Similarly, it is becoming clear that quantum pseudorandomness plays a central role not only in quantum cryptography, but also beyond, thanks to connections with fundamental physics~\cite{bouland2019complexity,brandao2021models,kim2023complementarity,engelhardt2024cryptographic,brakerski2023black}. Here, we focus more concretely on the notion of \emph{single-copy secure} pseudorandom states ($\oPRS$), formalized by Morimae and Yamakawa~\cite{MY22a}. The notion of a $\oPRS$ is analogous to that of $\mathsf{PRS}$, the only difference being that indistinguishability from a Haar random state is only required to hold when the adversary is given a \emph{single} copy of the a state from the family. It is easy to see that, in order for this notion to be non-trivial, the output length $m(n)$ of states from the family must satisfy $m(n) > n$, where $n$ is the key size. In other words, a $\oPRS$ must be ``stretching''. $\oPRS$ are arguably one of the most fundamental primitives in Microcrypt, as they are ``very close'' to being minimal: recent work by Cavalar et al.~\cite{CCC25} established that a \emph{non-uniform} version of $\oPRS$ is equivalent to EFI pairs, which are typically regarded as the minimal assumption required for quantum cryptography; and standard (i.e.\ uniform) $\oPRS$ are implied by all other primitives mentioned above.

\paragraph{The Challenge of Stretching Quantum Pseudorandomness.} 
Many basic properties of \emph{classical} pseudorandomness have been known since the early days and are now taken for granted (e.g., $\mathsf{PRG}$s can be ``shrunk'' and ``stretched''). \emph{Quantum} pseudorandomness seems to be, in some respects, much more subtle. For example, while shrinking is trivial for a classical $\mathsf{PRG}$ (one can just discard bits of a pseudorandom string, and obtain a shorter pseudorandom string), this is not the case for a $\mathsf{PRS}$: discarding one qubit of a pseudorandom state in general yields a \emph{mixed} state! This barrier was formalized recently: work of Bouaziz--Ermann and Muguruza~\cite{PRSshrunk}, as well as the combination of concurrent work by Chen et al.~\cite{CCS24} with work by Barhoush et al.~\cite{SignFromShortPRS}, showed that the output of a $\mathsf{PRS}$ cannot be shrunk in a black-box~way.

What about stretching? In the classical setting, any $\mathsf{PRG}$ with a non-trivial stretch can be iteratively applied to extend the output to any arbitrary polynomial length. 
However, this iterative paradigm fails right away in the quantum setting: the output of a $\oPRS$ is a quantum state, not a classical bit string, so one cannot simply use the output state as a ``seed'' for the next iteration.  Surprisingly, very little is known about whether stretching quantum pseudorandomness is possible in general or not. 

Here are some of the known results. Gunn et al.~\cite{GJMZ} showed how to stretch the output of a $\mathsf{PRS}$ by an arbitrary polynomial, but at the cost of reducing the number of copies available to the adversary. So, to get $\oPRS$ with long stretching, one needs to start with a multi-copy secure $\mathsf{PRS}$, which is a qualitatively stronger object compared to $\oPRS$~\cite{CCS24,ananth2024cryptographycommonhaarstate}. Cui et al.~\cite{Schuster-new} and Levy and Vidick~\cite{PRSlengthexpansion} introduced techniques to ``glue'' the generators of phase states together, obtaining stretching constructions for restricted types of $\mathsf{PRS}$, namely (binary or general) phase $\mathsf{PRS}$. However, their techniques rely heavily on the structure of the state families, and do not seem applicable to general unstructured $\mathsf{PRS}$. Zhandry~\cite{C:Zhandry25} showed that if a $\oPRS$ satisfies certain \emph{anti-correlation} and \emph{junk-free} properties, then its output length can be generically extended by one qubit. Schuster, Haferkamp, and Huang~\cite{SHHscience} showed that, by ``gluing'' pseudorandom unitaries, one can produce a larger pseudorandom unitary. However, their proof extensively used properties of Haar random \emph{unitaries} which do not appear applicable to pseudorandom \emph{states}. Finally, on the negative side, Bouaziz--Ermann et al.~\cite{BHMV25} show the impossibility of black-box $\mathsf{PRS}$ stretching for constructions with certain structural restrictions, obtaining results incomparable to ours.\footnote{Specifically, they consider stretching for standard (multi-copy) $\mathsf{PRS}$, starting from a shorter-output (multi-copy) $\mathsf{PRS}$—a (potentially) strictly stronger primitive than an $\oPRS$ of the same length.}

For the general setting without any restrictions, the question has remained elusive, and the only known result is that $\oPRS$ can be stretched in a black-box way by an additive $n^c$, for any $c<1$~\cite{CCC25} (i.e.\ going from a seed of length $n$ to a state of length $n+n^c$). Thus, it remains a wide open question whether quantum pseudorandomness can be stretched to arbitrary polynomial length, like its classical counterpart. This leads us to the central question that we study in this work:
\begin{center}
\emph{Can $\oPRS$ be stretched to arbitrary polynomial length in a black-box way?}
\end{center}

\subsection{Our Results}
Our main result is a negative answer to the above question. We prove a black-box separation between $\oPRS$ of different output lengths. 
Concretely, we construct a quantum oracle relative to which short-stretch 
$\oPRS$ exist, but sufficiently long-stretch $\oPRS$ do not.

\begin{thm}[Informal]
There exists a quantum channel\footnote{We expand on the significance of such an oracle separation in Section~\ref{sec:oracles}.} oracle $\OO$ relative to which 
$\oPRS$ with output length $m(n)=1.1n$ exist, 
but $\oPRS$ with output length $m(n)=\Omega(n^{2+\epsilon})$ 
do not exist, for any $\epsilon>0$.
\end{thm}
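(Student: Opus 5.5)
The oracle $\OO$ I would use is the Common Haar Random State (CHRS) model of~\cite{CCS24}, augmented with a $\mathsf{PSPACE}$ (or $\mathsf{QPSPACE}$) oracle. For each $n$, the oracle hands out unlimited copies of a single Haar random state $\ket{\psi_n}$ on $\ell(n)=1.1n$ qubits, in the form of a channel that outputs a fresh copy on each call. The argument then has two halves.

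\textbf{Existence of short $\oPRS$.} I would use the construction of~\cite{CCS24,ananth2024cryptographycommonhaarstate}. On key $k\in\{0,1\}^n$, take one copy of $\ket{\psi}$ on roughly $1.1n$ qubits and apply a keyed Pauli (or a phase/Clifford twirl) $P_k$, for instance $X^{a}Z^{b}$ on a subset of qubits, choosing parameters so that the output has length $1.1n$. Security against a single-copy adversary who also holds $t=\mathrm{poly}(n)$ copies of $\ket{\psi}$ reduces to bounding
\[
\bigl\|\expectation_{\psi}\expectation_k (P_k\psi P_k^\dagger)\otimes\psi^{\otimes t}-\expectation_\psi\expectation_{\phi}\phi\otimes\psi^{\otimes t}\bigr\|_1 .
\]
This follows from symmetric-subspace calculations: the averaged Pauli twirl is close to maximally mixed on the symmetric part, with error $\mathrm{poly}(t)\cdot 2^{-\Omega(n)}$. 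Because the adversary sees only one copy, the $\mathsf{PSPACE}$ oracle gives no help, since the bound is information-theoretic.

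\textbf{Nonexistence of long $\oPRS$.} Suppose a generator $G$ on key $k\in\{0,1\}^n$ makes $q=\mathrm{poly}(n)$ calls, receiving copies $\ket{\psi_{\ell_i}}$, and outputs $m=\Omega(n^{2+\epsilon})$ qubits. The key step is to \emph{bound the effective number of CHRS copies}. First, by a purification/compression argument, copies of $\ket{\psi_\ell}$ with $\ell>c\log n$ beyond a few can be replaced by simulated ones: the adversary can itself request the same copies, and copies of a state are symmetric, so the generator's use of them is equivalent to a state on the symmetric subspace. I then want to show that the output $\rho_k$ is close to $\Phi_k(\sigma)$, where $\sigma$ is a state of at most $O(n)$ or $O(\sqrt m\,)$ ``effective'' qubits and $\Phi_k$ is computable without the oracle. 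The intended mechanism is that, over the randomness of $\psi$, $\psi^{\otimes t}$ averages to the normalized symmetric projector, whose entropy is only $\log\binom{2^\ell+t-1}{t}\approx t\ell$. The attack then uses its own $\mathrm{poly}$ copies, together with a $\mathsf{PSPACE}$ oracle to perform an optimal measurement, and distinguishes $\rho_k$ from Haar random by estimating the output's overlap with a low-dimensional subspace. The pseudorandom ensemble $\{\rho_k\}$ averaged over $k$, conditioned on the adversary's copies, has rank at most $2^n\cdot(\text{effective dimension})$, whereas a Haar random $m$-qubit state has negligible weight on any subspace of dimension $\ll 2^m$. The quadratic threshold arises because each of the $q\le$ poly calls contributes up to about $\ell_i\cdot t_i$ effective qubits; after tomography with $n$ copies the residual uncertainty is about $n\cdot n$, so $m\gg n^2$ is needed. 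I would make this precise by showing that the adversary, with $\mathrm{poly}$ copies of each $\ket{\psi_\ell}$, can learn a subspace $V$ of dimension $2^{O(n^{2}\log n)}$ such that $\mathrm{Tr}(\Pi_V\rho_k)\ge 1/2$ for most $k$.

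The main obstacle is the last step: constructing $V$ when the generator uses many copies of large-$\ell$ states, where learning $\psi$ exactly is impossible. I would try to handle this by showing that, given $T$ copies, the symmetric-subspace state $\psi^{\otimes q}$ lies (up to small error) in the span of $\phi^{\otimes q}$ for $\phi$ ranging over a learned set of $\mathrm{poly}$ candidates, using the fact that shadow tomography with $\mathsf{PSPACE}$ gives an estimate $\hat\psi$ with fidelity $1-1/\mathrm{poly}$. I would then bound the error of replacing $\psi$ by $\hat\psi$ in each of the $q$ calls via a hybrid argument, and let $V$ be the span of the generator's outputs run on $\hat\psi$ over all keys, which has dimension at most $2^n$ times the slack.
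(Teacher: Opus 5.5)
There is a genuine gap in the choice of oracle. Relative to CHRS plus a $\mathsf{PSPACE}$ (or $\mathsf{QPSPACE}$) oracle, every QPT adversary still holds only $t=\poly(n)$ copies of each CHRS state. For $\ket{\phi_k}=(Z^k\otimes\II)\ket{\psi_m}$ the trace-distance bound is $O(t^2/2^n)$, which depends on the key length $n$ but not on $m$ (Theorem~\ref{statistically secure 1PRS}). A generator on security parameter $n$ can simply query a CHRS state of length $\approx n^{3}$ (in your indexing, the one with index $n^3$). So your own existence argument---the bound is information-theoretic, hence $\mathsf{PSPACE}$ does not help---applies verbatim to every $m(n)>n$. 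It shows that long-stretch $\oPRS$ \emph{do} exist relative to your oracle, so the nonexistence half cannot go through.

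It breaks exactly at the step you flagged. From $\poly(n)$ copies of a Haar-random $\ell$-qubit state with $\ell=\omega(\log n)$, no procedure outputs an estimate $\hat\psi$ of fidelity $1-1/\poly$. Pure-state tomography needs $\Omega(2^\ell)$ copies, and shadow tomography only estimates expectation values of prescribed observables. The rank bound also fails, because one copy of $\ket{\psi_m}$ averages to $\II/2^m$, which has full rank.

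The missing idea is that the oracle must hand out a number of copies that grows exponentially in the CHRS length $i$. It has to be calibrated so that it is too few to break length $1.1n$ but enough to break length $\gg n$. The paper's $\OO_2$ runs a circuit (produced by a doubly-exponential-time machine) on $2^{2i/5}$ copies of each $\ket{\psi_i}$ and returns a single bit.
\begin{itemize}
\item For $m=1.1n$, an adversary effectively holds fewer than $2^{0.495n}$ copies of $\ket{\psi_m}$. The $O(t^2/2^n)$ bound, together with L\'evy's lemma and Borel--Cantelli, therefore still gives security.
\item For a state of length $i\ge 3n$, a Permutation test against $2^{2i/5}$ copies has soundness $O(2^{-1.1n})$, which survives an OR test over $2^n$ keys.
\end{itemize}

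The price is that the generator also gets exponentially many copies. Handling this needs an ingredient with no counterpart in your proposal. First, with probability one over the CHRS family, purity of the output for that family implies purity for every family (a zero-locus argument for the purity polynomial). Second, unique factorization of $\prod_i\langle\theta_i'|\theta_i\rangle^{r(i)}$ into irreducibles forces the output to be an isometric image of $\bigotimes_i\Sym_{\ell(i)}\CC^{2^i}$ with $\sum_i\ell(i)\le m^2+cm\log m$.

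Only after this compression does the case analysis work. In one case, at least a $1/n^2$ fraction of keys use a state of length at least $3n$, or at least $8n$ copies of states of length at least $\log n$; then one inverts $U_k$ and combines Permutation/SWAP tests via the OR lemma. Otherwise, the key-averaged output is $O(1/n^2)$-close in trace distance to a state of rank $2^{O(n^2)}\ll 2^m$. That rank bound is the actual source of the $n^{2+\epsilon}$ threshold, not a residual-uncertainty-after-tomography heuristic.
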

Our separating oracle is inspired by the Common Haar Random State (CHRS) oracle, previously used in other black-box separations~\cite{CCS24,ananth2024cryptographycommonhaarstate,BCN24,BMM+24,GZ25} (e.g.\ to separate $\oPRS$ from multi-copy $\mathsf{PRS}$). The CHRS oracle provides access to a family of states $\{\ket{\psi_i}\}_{i \in \mathbb{N}}$, one for each length, sampled from the respective Haar measures. Here, we make a key modification to this oracle: we augment it so that it gives the ability to run very large quantum circuits on an \emph{exponential} (in $i$) number of copies of each of the $\ket{\psi_i}$. Along the way, a key step in our proof, which may be of independent interest, is a new technique to bound the effective number of copies of a CHRS state utilized by any generator that always outputs a
polynomial-sized \emph{pure} state (Lemma~\ref{Boyang's Lemma}). We refer to the Technical Overview (Section~\ref{sec:tech-overview}) for a much more detailed exposition. 

As a corollary, our oracle separation implies that one cannot stretch the output length of a $\oPRS$ in a black-box way (when the construction is given coherent isometry access to the shorter $\oPRS$ generator).



\begin{thm}[Informal]
There is no fully black-box construction of a $\oPRS$ with output length 
$m(n)=\Omega(n^{2+\epsilon})$ from a $\oPRS$ with output length $m(n)=1.1n$, 
even when the construction is granted coherent (isometry) access 
to the shorter generator.
\end{thm}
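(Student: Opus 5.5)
The second informal theorem should follow from the oracle separation of the first informal theorem by the standard argument that a fully black-box construction relativizes. I will argue by contradiction.

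\textbf{Setup.} Suppose there is a fully black-box construction $(G^{(\cdot)}, R^{(\cdot)})$ of a $\oPRS$ with output length $m'(n)=\Omega(n^{2+\epsilon})$ from a $\oPRS$ with output length $1.1n$. Here $G$ is an efficient generator with coherent isometry access to the short generator. $R$ is an efficient reduction: given oracle access to the short generator (as an isometry) and to any adversary $\mathcal{A}$ that breaks single-copy security of $G$, the reduction $R$ breaks single-copy security of the short generator. Both properties must hold for every, possibly inefficient, isometry implementing a $1.1n$-output generator, and for every adversary.

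\textbf{Step 1: instantiate with the oracle.} Take the oracle $\OO$ from the first informal theorem, and let $\mathsf{Gen}^{\OO}$ be the $1.1n$-output $\oPRS$ that is secure relative to $\OO$. We need $\mathsf{Gen}^{\OO}$ to be an isometry (or a unitary purification of one) that $G$ can call coherently. I expect the short $\oPRS$ in the paper to be built from the CHRS states, e.g.\ by applying key-dependent operations to a copy of $\ket{\psi_i}$ or by a Stinespring dilation. Since $\OO$ is a channel oracle, I will explicitly check that the dilation preserves single-copy security when the adversary also holds access to $\OO$. This is the one step I expect could fail as stated; if so, the fix is to note that the paper defines $\SPRS$ precisely so that the generator is an isometry relative to $\OO$. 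Given this, $G^{\mathsf{Gen}^{\OO}}$ is an efficient oracle algorithm relative to $\OO$ that outputs pure $m'(n)$-qubit states.

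\textbf{Step 2: break the long generator.} By the first informal theorem, no $\oPRS$ with output length $\Omega(n^{2+\epsilon})$ exists relative to $\OO$. So there is an efficient $\mathcal{A}^{\OO}$ that distinguishes one copy of $G^{\mathsf{Gen}^{\OO}}(k)$ from a Haar-random state with non-negligible advantage, at least for infinitely many $n$.

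\textbf{Step 3: run the reduction.} Feed $\mathcal{A}^{\OO}$ to $R$. The resulting algorithm $R^{\mathsf{Gen}^{\OO},\mathcal{A}^{\OO}}$ is an efficient algorithm relative to $\OO$, since it is a composition of polynomially many efficient oracle calls. It breaks single-copy security of $\mathsf{Gen}^{\OO}$, contradicting the first informal theorem.

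\textbf{Bookkeeping.} Two points need care. First, the reduction's advantage loss must be polynomial, so that a non-negligible advantage stays non-negligible. Second, the ``infinitely many $n$'' quantifiers must be matched: the reduction should map security parameter $n$ of $G$ to a polynomially related parameter of the short generator. Both are standard. The main obstacle is therefore the coherent-access point in Step 1, not the relativization itself.
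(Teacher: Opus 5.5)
Your overall route is the paper's: relativize the construction through $\OO$ (Theorem~\ref{relativize}), plug in the short generator of Theorem~\ref{Superposition1PRS relative to O}, observe that it is an $\SPRS$, and contradict Theorem~\ref{main}. This is Theorem~\ref{separation-main}. However, Step~3 has a genuine gap.

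You never say how $R$ accesses the adversary. The claim that $R^{\mathsf{Gen}^{\OO},\mathcal{A}^{\OO}}$ is efficient relative to $\OO$ holds only if $R$ uses $\mathcal{A}$ as a \emph{channel}: a challenge state goes in and a decision bit comes out. The adversary supplied by Theorem~\ref{main} queries $\OO_2$. That oracle runs a circuit on $2^{2i/5}$ copies of each $\ket{\psi_i}$ and $2^{T}$ ancillas, returns one bit, and discards its exponential workspace. Relative to $\OO$, this adversary therefore exists only as a channel. There is no way to run it coherently, read its work registers, feed it non-zero ancillas, or invert it.

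If your notion of black-box reduction allows any of these (as the isometry- and unitary-access notions discussed in Section~\ref{sec:5} do), then $R^{\mathcal{A}^{\OO}}$ need not be implementable relative to $\OO$, and no contradiction follows. The argument only goes through once ``fully black-box'' restricts the reduction to channel access to the adversary. That is exactly how the paper formalizes it (Definition~\ref{fullybb-channel}), and it is the limitation flagged in Section~\ref{sec:oracles}.

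You have also placed the difficulty in the wrong step. Step~1 is immediate. The short generator is $\ket{k}\mapsto\ket{k}\otimes(Z^k\otimes\II)\ket{\psi_m}$: one call to the isometry $\OO_{1,m}$ followed by $Z$ gates controlled on the key register. It therefore generates in superposition with no ancilla garbage. Security of an $\SPRS$ is by definition the same single-copy game on $\ket{\phi_k}$, so there is no dilation whose security needs checking.

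What the oracle does constrain on the generator side is the \emph{type} of access. $\OO_1$ accepts only the one-dimensional input $\ket{0}$ and has no inverse. Relative to $\OO$, the short generator is available as this isometry, but not as a unitary on arbitrary ancilla inputs, nor with its inverse. So your parenthetical ``or a unitary purification'' is not available, and this is why the theorem is restricted to isometry access for $G$.

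Two smaller points. Your bookkeeping about advantage loss and matching security parameters is harmless but already absorbed into the relation ``$\mathcal P$-breaks''. Letting $R$ also query the short generator (the paper's $S$ only gets $A$) is fine, since that isometry is efficiently implementable from $\OO_1$.
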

Note that this rules out constructions that have \emph{isometry} access to the shorter generator (meaning they have access to the generation procedure, but are not allowed to modify the state of the auxiliary registers, which must be set to zeros). This means that any construction that stretches the length of a $\oPRS$ must either be non-black-box, or, if it is black-box, then it must query the shorter $\oPRS$ generator with some non-zero ancilla qubits, or query the \emph{inverse} of the generator. Our result does not rule out black-box constructions of the latter kind. We expand on this next.

\subsection{The type of oracle in our separation}\label{sec:oracles}
Our separation is relative to a quantum \emph{channel} oracle. That is, the generation procedure as well as the adversary have access to a certain quantum channel. This kind of separation is less desirable than a unitary oracle separation for a few reasons outlined below (although we very much maintain that it is still a meaningful separation):
\begin{itemize}
\item It only rules out black-box constructions where the security reduction has the following form: there is an adversary for the shorter $\oPRS$ that has \emph{channel} access to the adversary for the longer $\oPRS$ (meaning a query to the latter adversary takes as input a challenge state for the $\oPRS$ distinguishing task, and returns only the output bit, but not any work registers). We do not find this to be a significant restriction (compared to a security reduction that has \emph{unitary} access to the adversary) since almost all known security reductions between adversaries for distinguishing games only utilize the decision bit of the first adversary on some suitably chosen distinguishing task (or a set of such tasks).
\item Our separation does not rule out constructions where the generation procedure of the longer 1PRS has \emph{unitary} access (or inverse unitary access) to the generation procedure of the shorter one. Instead, as mentioned earlier, it only rules out constructions where the generation of the longer 1PRS has \emph{isometry} access to the generation of the shorter one.\footnote{The reason why we rule out constructions that have isometry access, not just channel access, is that the 1PRS are, by definition, required to output pure states. We refer the reader to Section~\ref{sec:5} for more details.} While one could certainly imagine generation procedures that leverage unitary and inverse unitary access, our separation, for example, rules out a large class of natural constructions that behave as follows: generate a short $\oPRS$ state $\ket{\phi_k}$, or many copies of such a $\oPRS$ (or a superposition over such states), and by acting on them (and some auxiliary registers) generate a $\oPRS$ state with longer output.
\end{itemize} 
We refer the reader to Section~\ref{sec:5} for a formal description of the connection between our oracle separation result and the corresponding impossibility of a black-box construction.

The expert reader may wonder why one cannot apply techniques from, e.g.\ Goldin--Zhandry~\cite{GZ25} to lift our separating oracle to be a unitary (and its inverse). In short, we do not currently see how to apply these techniques in our setting, and we discuss this in more detail at the end of the technical overview (Section~\ref{sec:tech-overview}).

\subsection{Future directions}\label{sec:future-direction}

Our work leaves open several directions for future research. Here are some of the most natural and compelling ones.
\begin{itemize}
\item Our result only rules out black-box constructions with isometry access to the shorter $\oPRS$. This is because our separating oracle is a quantum \emph{channel}. Can one show a separation relative to a {\it unitary} oracle (with access to inverse, conjugate or transpose queries)? 
This would rule out a broader class of black-box constructions that have access to inverse, conjugate, or transpose queries. 

\item Our result rules out black-box constructions that stretch $\oPRS$ with output length $1.1n$ to $\Omega(n^{2+\epsilon})$ for any $\epsilon>0$. Is it possible to rule out black-box stretching by a \emph{linear} amount? As pointed out earlier, it is already known, on the positive side, that $\oPRS$ can be stretched in a black-box way by an additive $n^c$, for any $c<1$~\cite{CCC25} (i.e.\ going from a seed of length $n$ to a state of length $n+n^c$). So, the question is whether one can extend this to $c=1$, or whether there is a black-box impossibility. 

\end{itemize}

\section{Technical Overview}\label{sec:tech-overview}

\renewcommand{\calS}{\mathrm{S}}

This section is organized as follows. We start by recalling the common Haar random state (CHRS) oracle model, the construction of a $\mathsf{1PRS}$ in this model, and the impossibility of constructing $\mathsf{PRS}$ relative to the same oracle. We will then build (and significantly expand) on this idea to define an oracle relative to which $\mathsf{1PRS}$ with a certain stretch exist, but $\mathsf{1PRS}$ with much longer stretch do not. We will achieve this step by step, introducing several new ideas along the way.

\paragraph{Common Haar Random State model (CHRS)} The Common Haar Random State (CHRS) model can be viewed as a quantum state generalization of the Common Reference String (CRS) model 
introduced by~\cite{canetti2001universally}. In the CHRS model, introduced in~\cite{CCS24}, we assume a trusted third party, who prepares a family of states $\calS = \{\ket{\psi_m}\}_{m \in \N}$, where $\ket{\psi_m}$ is sampled according to the Haar measure on $m$ qubits $\mu_{m}$. The states in the set $\calS$ are called CHRS states. All parties in a protocol (including the adversary) have access to polynomially many (in the security parameter $\mylambda$) copies of states from $\calS$. Formally, parties have access to the family of isometries $\{V_m\}_{m \in \mathbb{N}}$, where $V_m: \mathbb{C} \rightarrow \mathbb{C}^{2^m}$\footnote{Notice that the domain is one-dimensional.} is such that 
$$ V_m: \ket{0} \mapsto \ket{\psi_m}\,.$$
Equivalently, for any state $\ket{\alpha}$ of any dimension, one query to $V_m$ performs the map:
$$ \ket{\alpha} \mapsto \ket{\alpha} \ket{\psi_m} \,.$$
We clarify that, in this model, parties cannot query the different isometries ``in superposition''. Rather, they can query each $V_m$ individually (provided they have enough space to store the $m$-qubit output state $\ket{\psi_m}$). The model is meant to capture the scenario where parties can request copies of $\ket{\psi_m}$, for any $m$ of their choice, from the trusted third party, as long as they have enough space to store the requested state.

\paragraph{Quantum pseudorandomness from common Haar random states.} Recall that, informally, a $\mathsf{1PRS}$ of input length $n$ and output length $m$ is a $\mathrm{QPT}$ algorithm that outputs $\ket{\phi_k}$\footnote{Notice that in the definition of single-copy secure pseudorandom states we ask for the output of the generation algorithm to be pure in order to rule out the trivial construction of always outputting $\II/2^m$.} for each $k\in \{0,1\}^n$, such that $$\expectation_k\ketbra{\phi_k}{\phi_k}\approx_c\frac{\II}{2^m}$$ where $\approx_c$ denotes computational indistinguishability.

Chen, Coladangelo, and Sattath~\cite{CCS24} show that a $\mathsf{1PRS}$ exists relative to the CHRS model, i.e.\ for some $m(n)>n$, there exists a family $\{U_k\}_{k\in \{0,1\}^n}$, such that $\ket{\phi_k} = U_k \ket{\psi_m}$ is a $\oPRS$. Concretely, the security guarantee is the following:
$$
\left\Vert\expectation_{\ket{\psi_m}}\left[\expectation_k \left(U_k\ketbra{\psi_m}{\psi_m}U_k^\dagger\otimes \ketbra{\psi_m}{\psi_m}^{\otimes(t-1)}\right)-\frac{\II}{2^m}\otimes \ketbra{\psi_m}{\psi_m}^{\otimes(t-1)}\right]\right\Vert_1
=O\left(\frac{(t^2+t\sqrt{m})5^{m-n}}{2^{n/2}}\right).
$$
This was improved by Ananth, Gulati, and Lin~\cite{ananth2024cryptographycommonhaarstate}, who show that the surprisingly simple family given by $U_k=Z^k\otimes \II$, for $k \in \{0,1\}^n$, where $Z^k = \bigotimes_{i = 1}^n Z^{k_i}$, already yields:
$$\left\Vert\expectation_{\ket{\psi_m}}\left[\expectation_k \left(U_k\ketbra{\psi_m}{\psi_m}U_k^\dagger\otimes \ketbra{\psi_m}{\psi_m}^{\otimes(t-1)}\right)-\frac{\II}{2^m}\otimes \ketbra{\psi_m}{\psi_m}^{\otimes(t-1)}\right]\right\Vert_1=O\left(\frac{t^2}{2^n}\right). $$
Note that the $\oPRS$ construction is simply $\ket{\phi_k} = U_k \ket{\psi_m}$ (i.e.\ it only uses one copy of the CHRS state) and the additional $t$ copies of the CHRS state $\ket{\psi_m}$ should be thought of as given to the adversary. Note also that the guarantee above is \emph{statistical}, rather than computational: security holds against any unbounded adversary, provided it only receives at most $t = \poly(n)$ copies of the CHRS state $\ket{\psi_m}$.

\paragraph{Non-existence of multi-copy $\mathsf{PRS}$ in the CHRS model.}

Chen, Coladangelo, and Sattah~\cite{CCS24} also show that, while $\mathsf{1PRS}$ exist in the CHRS model, (multi-copy) $\mathsf{PRS}$ do not. We review their attack here, which will be the starting point for our new oracle separation. 

Suppose there is a family of quantum algorithms $\{\Gen_k\}$, such that $\{\Gen_k\ket{\psi_m}\}$ is a $\mathsf{PRS}$ family\footnote{ Technically, in full generality, $\Gen_k$ could act on multiple copies of $\ket{\psi_m}$, and even on common Haar states of different sizes, but, for simplicity, we focus on $\mathsf{PRS}$ of the form $\{\Gen_k\ket{\psi_m}\}$ in this technical overview, as the general case is similar.}.

To determine whether a challenge state $\ket\phi$ is Haar random or equal to $\Gen_k\ket{\psi_m}$ for some $k$, the first idea that comes to mind, \emph{if one knew $k$}, is to perform a SWAP test between $\ket\phi$ and $\Gen_k\ket{\psi_m}$. If $\ket\phi=\Gen_k\ket{\psi_m}$, this passes the SWAP test with probability $1$. Otherwise, a Haar random $\ket\phi$ passes the SWAP test with probability roughly $\frac{1}{2}$. This gap can be amplified since the adversary is allowed to obtain multiple copies of $\ket\phi$ (since we are considering a multi-copy $\mathsf{PRS}$). By performing, say, $10n$ SWAP tests, the probability of passing all of them in the Haar random case is suppressed to $o(2^{-n})$. 

Crucially, of course, the adversary does not know $k$. However, one can still attack the $\mathsf{PRS}$ by leveraging the {\it quantum OR lemma} (Lemma~\ref{OR-test}), which says that there exists an algorithm (which we refer to as the {\it OR tester}) that, given a series of tests (that is, projections) $\{\Pi_k\}_{k=1}^N$ and a state $\ket\phi$, distinguishes between the following two cases.
\begin{itemize}
    \item (``YES" case) There exists $k$ such that $\ket\phi$ passes the test $\Pi_k$ with probability $\Omega(1)$; in this case the OR tester will accept with probability $\Omega(1)$.
    \item (``NO" case) For every $k$, $\ket\phi$ passes the test $\Pi_k$ with probability $o(1/N)$; in this case the OR tester will accept with probability $o(1)$.
\end{itemize}

The adversary runs the OR tester with $\Pi_k$ corresponding to the test that creates $10n$ copies of $\Gen_k\ket{\psi_m}$, and performs $10n$ SWAP tests with $10n$ copies of $\ket\phi$. In our case, $N = 2^n$, and so $10n$ SWAP tests are enough to suppress the acceptance probability in the NO case to $o(2^{-n})$ for each $k$.

\paragraph{Separating 1PRS with different stretch: the first attempt.} Now, we turn our attention to separating $\mathsf{1PRS}$ with different stretch. The first roadblock is that we can no longer run an OR tester as above, since the latter required $10n$ copies of the challenge state $\ket{\phi}$: a multi-copy $\mathsf{PRS}$ adversary can obtain that many copies, but a $\mathsf{1PRS}$ adversary can only obtain one. Recall that the $10n$ copies were required because of an inherent limitation of the SWAP test, which accepts orthogonal pairs of states with probability $\frac12$.

Our goal is to modify the oracle in such a way that it still allows us to construct a secure $\mathsf{1PRS}$ of small stretch (say $1.1n$ output length), but gives us enough power to break a longer stretch $\mathsf{1PRS}$. The question we need to address then is: what concrete difference between short stretch and long stretch $\mathsf{1PRS}$ can we leverage? 

Our first key observation is the following. Recall that in the $\mathsf{1PRS}$ construction from Ananth, Gulati, and Lin~\cite{ananth2024cryptographycommonhaarstate}, the (statistical) distinguishing advantage is $O(t^2/2^n)$, where $t$ is the number of copies of the common Haar state $\ket{\psi_m}$, and $n$ is the key-length. Crucially, the security is a function of the key length $n$, and not of the output length~$m$! Thus, \emph{regardless of the stretch}, the \cite{ananth2024cryptographycommonhaarstate} $\mathsf{1PRS}$ is secure as long as an (unbounded) adversary has $o(2^{n/2})$ copies of the common Haar state. This suggests the following idea: consider an adversary with ``access'' to, say, $2^{m/3}$ copies of the common Haar state of length $m$. This number of copies is not enough to break an \cite{ananth2024cryptographycommonhaarstate} $\mathsf{1PRS}$ with stretch $m(n) = 1.1 n$, since $2^{m/3} = 2^{1.1n/3} =o( 2^{n/2})$. However, a $\mathsf{1PRS}$ with stretch, say, $m(n) = 10n$ is no longer guaranteed to be secure, since $2^{m/3} \gg 2^{n/2}$! Now, is there actually an attack on any $\mathsf{1PRS}$ with stretch $m(n) = 10n$ in this model? Yes, one can build an attack based on a generalized version of the SWAP test, namely a ``Permutation'' test~\cite{KNY08}, which checks overlap with the symmetric subspace on more than two registers. The Permutation test satisfies the following, for any pair of states $\ket{\phi}$ and $\ket{\xi}$ of the same length $i$:
\begin{equation}\label{eq:GSWAP}\tr\left(\Pi_{\text{sym}}^{(i,r)}\cdot \ketbra{\phi}{\phi}\otimes \ketbra{\xi}{\xi}^{\otimes (r-1)}\right)=\frac{1}{r}+\frac{r-1}{r}|\braket{\phi|\xi}|^2\,,\end{equation}
where $\Pi_{\text{sym}}^{(i,r)}$ is the projection corresponding to acceptance in the Permutation test. Notice that the ``soundness'' of the test is controlled by the number of copies $r$ (improving inverse-linearly with $r$), and the fidelity between $\ket{\phi}$ and $\ket{\xi}$. Crucially, the latter, on average over a Haar random state $\ket{\phi}$, scales inverse-exponentially with the number of qubits of the state (which in our case is the output length $m$ of the $\mathsf{1PRS}$). 

Let us now concretely see how the Permutation test can be used to break a $\mathsf{1PRS}$ with sufficiently long stretch, e.g.\ input length $n$ and output length $m=10n$. Here, for now, we again assume that \emph{the proposed $\mathsf{1PRS}$ generator $\Gen_k$ only uses a single copy of $\ket{\psi_m}$, and no other CHRS states, and outputs a state of the same length $m$} (in general, the $\oPRS$ is allowed to use many copies of the CHRS states - we will see later that this is a crucial obstacle that we will have to overcome). Denote $\ket{\phi_k} = \Gen_k \ket{\psi_m}$. Given, say, $r = 2^{m/3}$ copies of $\ket{\psi_m}$, an unbounded adversary is able to generate $2^{m/3}$ copies of $\ket{\phi_k}$ for an \emph{arbitrary} $k$. Then, consider running an OR tester, where each test $\Pi_k$, for $k \in \{0,1\}^n$, corresponds to one Permutation test between the challenge state $\ket{\phi}$ and $2^{m/3}$ copies of $\ket{\phi_k}$. There are two cases: 
\begin{itemize}
\item If there exists $k$ such that $\ket{\phi} =\ket{\phi_k}$, then $\ket{\phi}$ passes the corresponding $\Pi_k$ Permutation test with probability 1. So, the OR tester accepts with probability $\Omega (1)$.
\item If $\ket{\phi}$ is sampled from the Haar measure on $m$ qubits, then, with overwhelming probability, it holds for all $k$ that $|\langle \phi|\phi_k\rangle|^2 = O(2^{-2n})$. This means that each Permutation test accepts with probability 
$$ \frac{1}{r}+\frac{r-1}{r}|\braket{\phi|\phi_k}|^2  = 2^{-m/3} + O(2^{-2n}) = o(2^{-n})\,,$$
since $m = 10n$.
So, the OR tester accepts with probability $o(1)$.
\end{itemize}

So, the OR tester is able to break the proposed $\mathsf{1PRS}$ candidate. This discussion suggests that, if we wish to separate $\mathsf{1PRS}$ with different stretch, we should ``augment'' the CHRS oracle so that it does not only provide \emph{one} copy of a common Haar state $\ket{\psi_m}$ per query, but it additionally gives the ability to run an OR tester algorithm that uses \emph{exponentially many copies}\footnote{Here, ``exponentially many" means an exponential in the number of qubits $m$, not in the security parameter.} of the length-$m$ state $\ket{\psi_m}$ for each $m$.

\paragraph{Where the first attempt fails.} So, let us imagine augmenting the oracle so that it provides access to exponentially many copies of the CHRS state (more precisely, imagine that the oracle allows one to prescribe a quantum circuit to run on that many copies, and obtain a short output). Does this work?

\subparagraph{Issue 1.}
The main outstanding issue that we have not considered is the following: whenever we give more power to the adversary, by considering a stronger oracle, a $\mathsf{1PRS}$ generator also gains the same additional power. As a result, if the adversary is allowed to use $2^{m/3}$ copies of $\ket{\psi_m}$ thanks to the new oracle, the \emph{generator} can also use the same number of copies. So, for instance, a candidate $\mathsf{1PRS}$ may now be of the form $\ket{\phi_k} = \Gen_k \ket{\psi_m}^{\otimes 2^{m/3}}$ (where $\Gen_k$ is some CPTP map). Then, we run into a problem: previously, the OR tester relied on generating $2^{m/3}$ copies of $\ket{\phi_k}$ to use in a Permutation test, with each copy costing only one copy of $\ket{\psi_m}$. Now, each copy of $\ket{\phi_k}$ itself requires $2^{m/3}$ copies of $\ket{\psi_m}$ to generate, for a total of $2^{m/3}\cdot 2^{m/3}$ copies of $\ket{\psi_m}$ needed by the adversary. However, we had started by stipulating that the oracle only provides $2^{m/3}$ copies of $\ket{\psi_m}$. If we try to strengthen the oracle so that it gives $2^{m/3}\cdot 2^{m/3}$ copies, then the generator $\Gen_k$ can also use that many, and so on: it is an endless arms race!

\subparagraph{Issue 2.}
There is a second issue that we have not considered, namely that the $\mathsf{1PRS}$ generator is not limited to using $\ket{\psi_m}$, but it can also use common Haar random states of other lengths. In particular, suppose the generator only uses a single copy of a ``short'' CHRS state, e.g.\ $\ket{\psi_6}$ (of $6$ qubits), but nonetheless outputs a state of length $m$. The issue is that the oracle we considered gives access to a number of copies of each common Haar state that depends on its dimension, namely $2^{\ell/3}$ copies of the state $\ket{\psi_\ell}$. This means that an adversary now only has access to $2^{6/3}=4$ copies of the state $\ket{\psi_6}$. Thus, any Permutation test can at most involve $4$ copies of $\ket{\phi_k}$, rather than $2^{m/3}$, and the acceptance probability on \emph{any} state will be at least $\frac15$.

\paragraph{Resolving Issue 1: Bounding the ``effective'' number of CHRS states used by the generator.}
The key insight is that, while $\Gen_k$ may use an overall number of copies of CHRS states that is \emph{exponential} in $n$, it is still required to output an $m(n)$-qubit state. So, whatever circuit it is running on the exponentially many copies (thanks to the help of the oracle) must output a \emph{pure} state $\ket{\phi_k}$ on the first $m$ qubits (i.e.\ on a \emph{polynomial} number of qubits). This is a very strong constraint. In fact, what we show (in Corollary~\ref{Gen acts on poly copies}) is that, with probability one over the choice of the CHRS state family $\{\ket{\psi_i}\}_{i \in \mathbb{N}}$, any $\Gen_k$ that outputs a pure state on the first $m$ qubits, must have an equivalent implementation $\widetilde{\Gen}_k$ that only uses \emph{polynomially} many copies of the CHRS states. So, for instance, if $\ket{\phi_k} = \Gen_k \ket{\psi_m}^{\otimes 2^{m/3}}$ (where $\Gen_k$ is some CPTP map whose output is just $m$ qubits), then it is also the case that there is some other CPTP map $\widetilde{\Gen}_k$ such that $\ket{\phi_k} = \widetilde{\Gen}_k  \ket{\psi_m}^{\otimes t(m)}$ for some polynomial $t$. An analogous statmement holds if $\Gen_k$ acts on states of different sizes. The intuition for this phenomenon is that any algorithm that acts non-trivially on too many copies of the CHRS states must either generate entanglement between the first $m$ qubits and the rest, or the output state on the first $m$ qubits only depends on a small $\poly(m)$ number of copies.

\begin{lemma}[Informal version of Corollary~\ref{Gen acts on poly copies}]\label{boyangInformal1}
    There exists a fixed polynomial $p$ such that the following holds. Let $s \in \mathbb{N}$. Let $G$ be a quantum channel that takes as input a state of the form $\bigotimes_{i=1}^{s}\ket{\psi_i}^{\otimes r(i)}$, where $r$ is a function that we only assume to be bounded by some exponential. Suppose $G$ outputs a pure state on $m$ qubits for all families $\{\ket{\psi_i}\}_{i\in [s]}$. Then, there is a {{\bf unitary}} $U_G$ acting on $p(m)$ copies of states in the family, plus ancillas, that outputs the same pure state as $G$.
\end{lemma}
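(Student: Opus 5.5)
We plan to prove the statement by a representation-theoretic argument that holds for every choice of the family $\{\ket{\psi_i}\}_{i\in[s]}$, not just on average. The idea is to bound the dimension of the span of the inputs the channel can see, and then show that a pure-output channel only ever depends on a low-dimensional quantity.

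First, we Stinespring-dilate $G$ to an isometry $W$ from the input registers to $A\otimes E$, where $A$ holds the $m$ output qubits. Purity of the output for every input $\ket{\Psi}=\bigotimes_i\ket{\psi_i}^{\otimes r(i)}$ gives
\[
W\ket{\Psi}=\ket{\phi(\Psi)}_A\otimes\ket{e(\Psi)}_E .
\]
Next, we restrict attention to the span $\mathcal{V}=\bigotimes_i \mathrm{Sym}^{r(i)}(\mathbb{C}^{2^i})$, which contains all such inputs.

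The key claim is that the map $\Psi\mapsto\ketbra{\phi(\Psi)}{\phi(\Psi)}$ has a very restricted form. Its entries are polynomials in the $\psi_i,\bar\psi_i$ of bidegree at most $(r(i),r(i))$. Purity means the operator has rank one for every $\Psi$, and it acts on a space of dimension only $2^m$. We would show that such a rank-one, product-decomposing family must factor through the span of the operators $\ket{\phi(\Psi)}$, viewed as a subspace of $\mathbb{C}^{2^m}$. In other words, the linear map $\mathcal{V}\to A$ obtained by contracting $W$ against a suitably chosen fixed dual vector on $E$ has rank at most $2^m$.

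We then use the following fact. The output $\ket{\phi(\Psi)}$, up to phase and normalization, is a vector-valued polynomial of degree $r(i)$ in each $\psi_i$ that spans at most $2^m$ dimensions. The subtle point is that being a product for all inputs forces the polynomial degree to drop. Indeed, $\ket{\phi(\Psi)}\otimes\ket{e(\Psi)}$ is a multilinear embedding, so each coordinate of $\ket{\phi}$ is a ratio of polynomials. Because the Haar states range over the full sphere, it is in fact a polynomial of degree at most $p(m)$ in each $\psi_i$. Here we would use the Veronese structure: if the degree in some $\psi_i$ exceeded $\dim$-based bounds, the rank-one condition on a low-dimensional target would fail.

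Once $\ket{\phi(\Psi)}=\sum_i L(\psi^{\otimes d_i})$ has degree $d_i\le p(m)$ in total, we construct $U_G$ directly. A degree-$d$ homogeneous polynomial map of unit-norm outputs is implemented by an isometry acting on $\mathrm{Sym}^d$, namely on $d$ copies. We extend this to a unitary with ancillas, and use norm preservation to check that it is indeed an isometry on the symmetric inputs. Only the $i\le m'$ with $d_i>0$ contribute, and the total number of copies is bounded by $p(m)$.

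The main obstacle is the degree-reduction step: showing that the pure-output condition forces dependence on only $\mathrm{poly}(m)$ copies. This must hold independently of the possibly exponential $r(i)$, and it must handle phases, since $\ket{\phi}$ is defined only up to a global phase that could itself depend on $\Psi$. To handle this, we would first work with the projector $\ketbra{\phi}{\phi}$ and apply a Schur–Weyl and U$(2^i)$-equivariance decomposition of $W$ restricted to $\mathcal{V}$. The target is to show that only irreps of dimension at most $2^{O(m)}$ can be mapped nontrivially onto $A$, and that these correspond to symmetric powers of degree $\mathrm{poly}(m)$.
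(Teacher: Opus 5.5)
Your final step is sound, and it is essentially how the paper finishes: a multi-homogeneous polynomial map with unit-norm outputs on unit inputs extends, by polarization, to an isometry on $\bigotimes_i\mathrm{Sym}^{\ell(i)}\mathbb{C}^{2^i}$. The gap is that the step you flag as the main obstacle is never established, and the tools you sketch do not deliver it. Contracting $W$ with a fixed dual vector on $E$ gives $\braket{f|e(\Psi)}\ket{\phi(\Psi)}$, which still has degree $r(i)$ in $\psi_i$. The bound ``rank $\le 2^m$'' for a map into $A$ is vacuous. ``Ranging over the full sphere'' yields at most polynomiality, not a degree bound. There is also nothing to decompose via Schur--Weyl: $\mathcal V$ is already an irreducible representation of $\prod_i U(2^i)$, and $W$ is not equivariant.

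The missing idea is the paper's unique-factorization argument. Write $\ket{\theta_i}=\sum_j x^i_j\ket{j}$ and $\ket{\theta_i'}=\sum_j\overline{y^i_j}\ket{j}$, and take polynomial representatives of the two tensor factors. One way is the rank-one factorization $A_{kl}=p_kq_l$ of the output amplitudes, which also absorbs the $\Psi$-dependent phase you worry about. Then
\[
\braket{\zeta_{\Theta'}|\zeta_\Theta}\,\braket{\tilde\zeta_{\Theta'}|\tilde\zeta_\Theta}=\braket{\Theta'|\Theta}=\prod_{i=1}^{s}\Bigl(\sum_j x^i_jy^i_j\Bigr)^{r(i)} .
\]
Each $\sum_j x^i_jy^i_j$ is irreducible, so $\braket{\zeta_{\Theta'}|\zeta_\Theta}=c'\prod_i\braket{\theta_i'|\theta_i}^{\ell(i)}$ with $\ell(i)\le r(i)$, and normalization fixes $c'$. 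This gives inner-product preservation directly, and hence the isometry $V:\bigotimes_i\mathrm{Sym}^{\ell(i)}\mathbb{C}^{2^i}\to\mathbb{C}^{2^m}$.

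In the paper the copy bound is then a \emph{consequence} of $V$, not a separate step. Injectivity forces $\prod_i\binom{2^i+\ell(i)-1}{\ell(i)}\le 2^m$. This gives $\ell(i)\le m$ for $i>\log m$, and $\sum_{i:\ell(i)>0}i\le m$, so $\sum_{i>\log m}\ell(i)\le m^2$.

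This counting says nothing for small $i$, and there your stated target is false. You aim to show the reduction holds independently of $r(i)$, because irreps of dimension $2^{O(m)}$ are symmetric powers of degree $\mathrm{poly}(m)$. But $\dim\mathrm{Sym}^d\mathbb{C}^2=d+1$. With $r(1)=2^m-1$, a unitary can map $\ket{\psi_1}^{\otimes r(1)}$ onto an $m$-qubit register tensored with $\ket{0}$'s. This is a pure-output channel with overlaps $|\braket{\psi_1'|\psi_1}|^{2^m-1}$. Any implementation on $\ell$ copies has overlaps of modulus at least $|\braket{\psi_1'|\psi_1}|^{\ell}$, which forces $\ell\ge 2^m-1$. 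The paper handles $i\le\log m$ solely through the hypothesis $r(i)\le c\cdot 2^i$, which gives $\sum_{i\le\log m}\ell(i)\le cm\log m$ via $\ell(i)\le r(i)$. Your argument must use this hypothesis explicitly, and the resulting bound $m^2+cm\log m$ necessarily depends on $c$.
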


Now, thanks to Lemma~\ref{boyangInformal1}, we can consider an equivalent generation procedure that only uses $poly(m)$ copies of $\ket{\psi_m}$. Thus, we can afford to generate, say, $r = 2^{m/4}$ copies of $\ket{\phi_k}$, which are sufficient to run the Permutation test with sufficiently good soundness (when $m=10n$), by using ``only'' $2^{m/4} \cdot \poly(m) < 2^{m/3}$ copies of $\ket{\psi_m}$.

So, our new candidate oracle, for now, allows one to prescribe a quantum channel to run on $2^{m/3}$ copies of the CHRS state $\ket{\psi_m}$, for all $m$ (eventually, we will put a computational bound on the kinds of quantum channels that can be prescribed, but this is not important to understand the main ideas).

\paragraph{Resolving Issue 2.} The remaining issue to resolve is that the generator may use CHRS states of different lengths. In particular, it may use some short states (for which our oracle does not provide enough copies to run a sound enough Permutation test!). 

We start by changing our viewpoint. Lemma~\ref{boyangInformal1} tells us that we can think of the $\oPRS$ state $\ket{\phi_k}$ on $m$ qubits as the state generated by applying a \emph{unitary} $U_k$ to a small number of copies of CHRS states $\ket{\psi_i}$. The fact that $U_k$ is a unitary is very important. Now, suppose we apply $U_k^\dagger$ on a challenge state. Then:
\begin{itemize}
    \item if the challenge state is generated by $U_k$ acting on some CHRS states $\ket{\psi_i}$, we get the original CHRS states back;
    \item if the challenge state is Haar random, then we get fresh Haar random states $\ket{\psi'_i}$ on the registers corresponding to each CHRS original state $\ket{\psi_i}$.\footnote{The latter is by the unitary invariance of the Haar measure: we can think of a Haar random state as the output of the unitary $U_k$ acting on a Haar random input state.}
\end{itemize} 
We will thus change our viewpoint and consider an attacker that runs an OR tester, where each test $\Pi_k$ corresponds to the following: first apply $U_k^{\dagger}$ on the challenge state, and then run a Permutation test on many copies of the CHRS state $\ket{\psi_i}$ and the register that would correspond to $\ket{\psi_i}$ after applying $U_k^{\dagger}$.

Now, recall the success probability of the Permutation test: if, in Equation~\eqref{eq:GSWAP}, we take $\ket{\phi}$ to be Haar random and $\ket\xi$ to be the fixed CHRS state $\ket{\psi_i}$, the success probability is upper-bounded by $\frac{1}{r} + \frac{1}{2^i}$, where $r$ is the number of copies in the test, and $i$ is the length of each state. In order for an OR test to be sound, the success probability of the Permutation test must be bounded by $o(2^{-n})$ when the challenge state is Haar random. If our oracle gives the attacker access to $r = 2^{i/3}$ copies of the CHRS state $\ket{\psi_i}$, this kind of Permutation test will be sound as long as $i \geq (3+\delta)n$ for some $\delta>0$.

What if $i \leq 3n$? 
In this paragraph, we discuss how to deal with the case where $U_k$ acts exclusively on CHRS states $\ket{\psi_i}$ with $i \leq 3n$. We split this analysis into two cases:

\begin{itemize}
\item The first case is that $U_k$ uses a relatively large number of copies of a particular $\ket{\psi_i}$. In this case,  we can actually use the original SWAP test, as in~\cite{CCS24}: if $i\geq \log n$, then with probability $1-\negl(n)$ over the sampling of a Haar random state $\ket{\psi'_i}$, a two-register SWAP test between $\ket{\psi_i}$ and $\ket{\psi_i'}$ passes with probability at most $9/10$. More precisely, if $U_k$ uses, say, $8n$ copies of some state $\ket{\psi_i}$, then after applying $U_k^\dagger$ on $\ket{\phi}= \ket{\psi_k}$, we expect to end up with $8n$ copies of $\ket{\psi_i}$, whereas if $\ket{\phi}$ is Haar random, we expect to end up with $8n$ copies of a fresh Haar random state $\ket{\psi_i'}$. Thus, we can perform $8n$ SWAP tests between the unknown copies and the $8n$ copies of $\ket{\psi_i}$ (obtained via the oracle). The probability of passing all $8n$ SWAP tests in the Haar random case is $(9/10)^{8n}$, which is $o(2^{-n})$. This means that, \emph{as long as $U_k$ uses a linear number of copies of at least logarithmic-length states}, the OR test is still sound.

So, we have established that if $\ket{\phi_k}$ is generated either using a CHRS state $\ket{\psi_i}$ with $i\geq 3n$ or using a linear number of copies of $\ket{\psi_i}$ with $i\geq \log n$, then we can find a corresponding projection that can distinguish $\ket{\phi_k}$ from Haar random with advantage at least $1-o(2^{-n})$. This means that we can combine all of the projection tests for different $k$'s in an OR tester that breaks the $\oPRS$. We refer to this first case as {\bf{Case A}}, as plotted in Fig.~\ref{fig1}.

\item The second case is that $U_k$ only uses CHRS states of length smaller than $3n$, and it uses only a small number of copies of states $\ket{\psi_i}$ with $i \geq \log n$. We refer to this as \textbf{Case B} as plotted in Fig.~\ref{fig1}. The key observation is that, in Case B, the degrees of freedom coming from the CHRS states are actually not enough, and there exists a projective measurement, \emph{independent of the CHRS states}, that distinguishes the output of the $\oPRS$ candidate from maximally mixed with high probability.

This can be seen by a counting argument. The dimension of the subspace of $\ell$ copies of an $i$-qubit state is $\binom{2^i+\ell-1}{2^i-1} \leq \min\{2^{i\ell}, \ell^{2^i}\}$. Then we can show that this is not enough: when $\ell = \Theta (n)$ and $i= \Theta(n)$, the dimension of the symmetric subspace is $2^{O(\ell i)} = 2^{O(n^2)}$. In case that $\ell=\poly(n)$ and $i\leq \log n$, the dimension of the symmetric subspace is at most $\ell^{2^{i}} \leq 2^{O(n \log n)}$. As a result, we can conclude that when both the number of copies and the number of qubits are not large enough, the symmetric subspace is of dimension $2^{O(n^2)}$. Since, in \textbf{Case B}, the $\oPRS$ generator only uses states of length up to $O(n)$, the rank of the mixed state produced by the $\oPRS$ generator (when average over the key) has rank at most $\left(2^{O(n^2)}\right)^n = 2^{O(n^3)}$. Now, as long as we take $m(n)=\Omega(n^{3+\epsilon})$, the output of the $\oPRS$ can be distinguished from Haar random even \emph{without knowledge of the CHRS states}. In fact, a more refined analysis of the above setting (which we carry out in Section \ref{sec:attack_cases}) shows that the same attack works for any $\oPRS$ with $m(n)=\Omega(n^{2+\epsilon})$.
\end{itemize}
One small remaining subtlety is that whether $U_k$ falls in \textbf{Case A} or \textbf{Case B} may vary across $k$'s, so some care needs to be taken in choosing the right distinguishing test that works on average over the $k$'s.

\begin{figure}[H]
    \centering
    \usetikzlibrary{patterns.meta}


\begin{tikzpicture}[scale=1]

\def\xlog{3}
\def\xn{7.5}
\def\ytheta{4}
\def\xmax{12}
\def\ymax{7}

\draw[thick] (0,0) -- (\xmax,0);
\draw[thick] (0,0) -- (0,\ymax);


\fill[
    pattern={Lines[
        angle=45,
        distance=8pt,
        line width=0.5pt]},
    pattern color=red!60
] (0,0) rectangle (\xlog+0.08,\ymax);

\fill[
    pattern={Lines[
        angle=-45,
        distance=8pt,
        line width=0.5pt]},
    pattern color=red!60
] (0,0) rectangle (\xlog+0.08,\ymax);

\fill[
    pattern={Lines[
        angle=0,
        distance=8pt,
        line width=0.5pt]},
    pattern color=green!60!black
] (0,0.08) rectangle (\xn+0.08,\ytheta);

\fill[
    pattern={Lines[
        angle=-45,
        distance=8pt,
        line width=0.5pt]},
    pattern color=blue!60
] (\xlog-0.08,\ytheta) rectangle (\xmax,\ymax);

\fill[
    pattern={Lines[
        angle=45,
        distance=8pt,
        line width=0.5pt]},
    pattern color=black!40
] (\xn-0.08,0) rectangle (\xmax,\ymax);


\draw[dashed,gray!70,thick] (\xlog,0) -- (\xlog,\ymax);
\draw[dashed,gray!70,thick] (\xn,0) -- (\xn,\ymax);
\draw[dashed,gray!70,thick] (0,\ytheta) -- (\xmax,\ytheta);

\draw[red] (\xlog+0.08,0) -- (\xlog+0.08,\ymax);
\draw[blue] (\xlog-0.08,\ytheta-0.05) -- (\xmax,\ytheta-0.05);
\draw[blue] (\xlog-0.08,\ytheta-0.05) -- (\xlog-0.08,\ymax);
\draw[black] (\xn-0.08,0) -- (\xn-0.08,\ymax);
\draw[green!60!black] (0,\ytheta+0.08) -- (\xn+0.08,\ytheta+0.08);
\draw[green!60!black] (\xn+0.08,0) -- (\xn+0.08,\ytheta+0.08);


\node[below] at (\xlog,0) {$\log n$};
\node[below] at (\xn,0) {$\Theta(n)$};

\node[left] at (0,\ytheta) {$\Theta(n)$};

\node[above left] at (0,\ymax) {$\ell(i)$};
\node[below right] at (\xmax,0) {$i$};


\node[
    red,
    font=\bfseries\fontsize{15}{30}\selectfont
] at (1.6,5.2) {Case B};

\node[
    green!50!black,
    font=\bfseries\fontsize{15}{30}\selectfont
] at (5.0,2.3) {Case B};

\node[
    blue,
    font=\bfseries\fontsize{15}{30}\selectfont
] at (6.0,6.0) {Case A};

\node[
    blue,
    font=\fontsize{15}{26}\selectfont
] at (6.0,5.5) {many copies};

\node[
    black,
    font=\bfseries\fontsize{15}{30}\selectfont
] at (10.2,2.7) {Case A};

\node[
    black,
    font=\fontsize{15}{26}\selectfont
] at (10.2,2.2) {large $i$};

\end{tikzpicture}

    \caption{A visualization of the four possible cases based on the lengths $i$ and the copies $\ell(i)$ of the \textit{effective} CHRS states used by the generation procedure. In Case A (black and blue areas) the adversary can achieve a noticeable advantage by using either Permutation tests or pairwise SWAP tests via the OR Lemma. In Case B, the degrees of freedom coming from the CHRS states are significantly less than those of an $m$-qubit state, and thus there is a successful distinguishing measurement that is independent of the CHRS states.}
    \label{fig1}
\end{figure}

\paragraph{Why the Goldin--Zhandry lifting technique does not directly apply.} A natural question is whether one can use the Goldin--Zhandry technique~\cite{GZ25} to upgrade our separating oracle from a quantum channel oracle to a fully unitary oracle. The main obstruction is that our oracle has two components, which play very different roles. The first component is the CHRS isometry oracle, while the second component is a quantum channel that is used to implement the attack on large-stretch 1PRS candidate constructions. Crucially, the second component is allowed to use \emph{exponentially many} copies of the CHRS states in its internal computation. The technique of~\cite{GZ25} does not seem to apply to such an oracle, at least without additional new ideas. In particular, since the adversary itself is required to be QPT, we outsource the exponential-size attack to the second component. If the second component were instead defined to be a unitary oracle, then it would have to coherently maintain and return its internal workspace, including registers containing exponentially many CHRS copies (or their variants). This would require the adversary to store and manipulate exponentially many qubits, which is is not something a QPT adversary can afford to do.

Even if one is content with replacing first oracle component by a unitary oracle while keeping the second as a channel, the Goldin--Zhandry technique still does not immediately go through. The difficulty is that one of our key technical ingredients, stated informally as Lemma~\ref{boyangInformal1} above, and more formally as Lemma~\ref{Boyang's Lemma} (and Corollary~\ref{Gen acts on poly copies}) crucially relies on the fact that one can represent the adversary’s attack as a fixed unitary acting on copies of CHRS states. This representation is no longer available in the same form when the first component is replaced by a unitary oracle: the Goldin--Zhandry technique \cite{GZ25} introduces an inverse-polynomial error, and our proof relies on the $\oPRS$ being perfectly pure. Therefore, obtaining a fully unitary-oracle separation appears to require ideas beyond a direct application of the Goldin--Zhandry technique.

\section{Preliminaries}\label{Preliminaries}

\subsection{Notations and Definitions}
We will use the letter $n$ to denote the security parameter. We will use the letter $\epsilon$ to denote an arbitrary positive small constant. We denote by $\mu_d$ the Haar measure in a $2^d$-dimensional Hilbert space. The notation $\ket{\psi}\sim \mu_d$ denotes sampling a state according to $\mu_d$. For any finite set $K$, we write $k \gets K$ to mean that $k$ is sampled uniformly at random from $K$. For an operator $H$, we use the notation $\|H\|_1$ to denote its trace norm and $\|H\|_\mathrm{op}$ to denote its operator norm. We also use $\|u-v\|_2$ to denote the Euclidean distance between two vectors $u,v$. Let $L\in \NN$, then a function $f:\CC^{n}\to \mathbb{R}$ is an $L$-Lipschitz function with the respect to the Euclidean distance, if $|f(\ket{\psi})-f(\ket{\phi})|\leq L\|\ket{\psi}-\ket{\phi}\|_2$, for all states $\ket{\psi},\ket{\phi}\in \CC^n$. We use the notation $A^{(\cdot)}$ to refer to an algorithm (classical or quantum) that makes queries to an oracle. For a Hilbert space $\mathcal H$, we denote the $k$-th symmetric tensor space of $\mathcal H$ by $\Sym_{k}\mathcal H$. We will use $\Pi_{\text{sym}}^{(n,k)}$ to refer to the projection from the $k$-th tensor product space $(\mathbb C^n)^{\otimes k}$ onto the $k$-th symmetric tensor subspace $\Sym_{k}\mathbb C^n$. We will make use of the SWAP test (as defined, e.g. in~\cite{SWAPtest}) and its generalization, i.e. the Permutation test, a description of which can be found in~\cite{KNY08}.

\subsection{Pseudorandom States}

\begin{definition}[Pseudorandom States ($\PRS$), adapted from~\cite{JLS18}]\label{def:PRS}
    A pseudorandom state generator with output length $m(\cdot)$ is a QPT algorithm $\mathsf{Gen}$ that, on input $(1^n,k)$ with $k\in\{0,1\}^n$, outputs a \textit{pure} state $\ket{\phi_k}$ consisting of $m(n)$ qubits, such that, for any polynomial $t=t(n)$ and any QPT adversary $\mathcal A$, there exists a negligible function $\negl$ such that for all $n$,
    \begin{equation}\label{eq:def of PRS security}
        \left|\Pr_{k\in \{0,1\}^n}\left[\mathcal A(\ket{\phi_k}^{\otimes t(n)})=1\right]-\Pr_{\ket{\phi}\sim \mu_m}\left[\mathcal A(\ket{\phi}^{\otimes t(n)})=1\right]\right|=\negl(n),
    \end{equation}
    where $\mu_m$ is the Haar measure on $m=m(n)$ qubits.
\end{definition}

\begin{definition}[Single-copy Pseudorandom States ($\oPRS$), adapted from~\cite{MY22a}] A single-copy
pseudorandom state generator ($\oPRS$) with output length $m(\cdot)$ is a $\PRS$ (as in Definition~\ref{def:PRS}) with $m(n)>n$, for every $n\in \NN$, except that the security definition in Equation~\ref{eq:def of PRS security} only holds for $t=1$.
\end{definition}

Notice that the \emph{stretch} requirement in the definition of $\oPRS$, i.e.\ $m(n)>n$ for every $n$, is essential to rule out trivial constructions. For instance, one can easily verify that by simply setting $\ket{\phi_k}=\ket{k}$, the uniform mixture of the states $\{\ket{\phi_k}\}_{k\in\{0,1\}^m}$ is computationally indistinguishable from the maximally mixed state on $n$ qubits. Similarly, as discussed in Section~\ref{sec:tech-overview}, the requirement that the output state of a $\oPRS$ is always \textit{pure} is necessary to invalidate the algorithm that trivially outputs the maximally mixed state. Our separation in Section~\ref{sec:separation} will heavily- rely on this restricting property.

\subsection{Quantum Information}
In our proof, we will have to use sufficiently precise implementations of arbitrary unitary operators. In particular, we will use the well-known Solovay-Kitaev algorithm.
\begin{thm}[Solovay-Kitaev]\label{Solovey Kitaev}
    Any unitary that operates on an $n$-qubit system can be approximated, with accuracy $\varepsilon$ with respect to the operator norm, by a circuit consisting of $O(2^{2n}\log^4 (2^{2n}/\epsilon))$ gates selected from an universal gate set. 
\end{thm}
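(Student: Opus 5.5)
The plan is the standard two-stage argument: first reduce an arbitrary $n$-qubit unitary $U$ exactly to a circuit of $L=O(2^{2n})$ elementary gates (single-qubit gates and CNOTs), then approximate each single-qubit gate using the single-qubit Solovay--Kitaev theorem. The final step is to control the accumulated error in operator norm.

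\textbf{Step 1 (exact decomposition).} Write $U\in U(2^n)$ as a product of two-level unitaries by Givens-style elimination: successively zero out the below-diagonal entries of each column. This gives at most $\binom{2^n}{2}=O(2^{2n})$ two-level unitaries. Each two-level unitary acts nontrivially on two computational basis states $\ket{a},\ket{b}$. It can be implemented by a Gray-code sequence of (multi-)controlled NOTs that maps $\ket{a},\ket{b}$ to a pair of states differing in a single bit, followed by a controlled single-qubit gate, followed by the inverse sequence. This naive route costs a factor $\poly(n)$ per two-level unitary. To reach the clean $O(2^{2n})$ count, I would instead invoke the known optimized decomposition (quantum Shannon decomposition / cosine--sine decomposition), which expresses any $n$-qubit unitary with $O(4^n)$ CNOTs and single-qubit gates. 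The count follows from a recursion $T(n)=4T(n-1)+O(2^{n})$ using uniformly controlled rotations, which solves to $T(n)=O(4^n)$.

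\textbf{Step 2 (approximating each gate).} CNOT belongs to the universal gate set. Each single-qubit gate $g_j$ is approximated to accuracy $\delta=\epsilon/L$ by the single-qubit Solovay--Kitaev theorem. That theorem gives a sequence $\tilde g_j$ of $O(\log^{c}(1/\delta))$ gates with $c\approx 3.97<4$, which we bound by exponent $4$. Its proof is the usual recursion: start from an $\epsilon_0$-net obtained by brute-force enumeration, and at each level refine the approximation by writing the residual as a balanced group commutator $VWV^\dagger W^\dagger$ of elements close to the identity. This improves the error from $\epsilon_k$ to $\epsilon_{k+1}=O(\epsilon_k^{3/2})$ at the cost of a factor $5$ in length.

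\textbf{Step 3 (error accumulation).} Operator-norm error is subadditive under products of unitaries:
\[
\Big\|\prod_j g_j-\prod_j\tilde g_j\Big\|_{\mathrm{op}}\le \sum_j\|g_j-\tilde g_j\|_{\mathrm{op}}\le L\delta=\epsilon .
\]
The total gate count is therefore $L\cdot O(\log^4(L/\epsilon))=O\big(2^{2n}\log^4(2^{2n}/\epsilon)\big)$, matching the statement.

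The main obstacle is Step 1, namely obtaining exactly $O(4^n)$ elementary gates rather than $O(n^2 4^n)$. If a fully self-contained argument were needed, I would prove the recursive cosine--sine bound. Given that the theorem is cited as a well-known tool, I would instead rely on that known decomposition, with the single-qubit Solovay--Kitaev recursion taken as standard.
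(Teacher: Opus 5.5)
The paper gives no proof of this statement. It quotes it as the standard Solovay--Kitaev bound, so there is no competing argument to compare against, and your reconstruction is the derivation the quoted bound implicitly relies on.

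Your argument is correct. The one nontrivial point is the one you flag yourself. Getting the count $O(2^{2n}\log^4(2^{2n}/\epsilon))$ with no extra $\mathrm{poly}(n)$ factor requires an exact decomposition into $L=O(4^n)$ CNOTs and single-qubit gates, for example via the quantum Shannon decomposition or Gray-code-ordered Givens rotations. After that, approximating each gate to accuracy $\epsilon/L$ and telescoping the operator-norm errors gives the stated bound, because $\log 5/\log(3/2)<4$.

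There are two small caveats that the citation also glosses over:
\begin{itemize}
\item The standard Solovay--Kitaev recursion needs the single-qubit part of the gate set to generate a dense subgroup of $\mathrm{SU}(2)$ (modulo phases) and to be closed under inverses. If CNOT itself is not in the set, you can run the same recursion in $\mathrm{SU}(4)$ with the same exponent.
\item The recursion only approximates $U$ up to a global phase.
\end{itemize}
Neither caveat matters for how the paper uses the theorem. The approximating circuits $\widetilde U_k$ (and $\widetilde Q$) only enter through conjugations, where the phase cancels. Moreover, these circuits are computed and run inside $\mathcal O_2$, so even the cruder $O(n^2 4^n)$ two-level-unitary decomposition would have been enough for the paper's purposes.
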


The following lemma shows that the precision of the implementation from Theorem~\ref{Solovey Kitaev} is an upper bound in the resulting difference in the measurement statistics.

\begin{lemma}\label{good approx of U => good approx of YES case}
    Let $U,V$ be two unitaries acting on $n$ qubits. Then for any projection $P$ acting on $n$ qubits and any $n$-qubit state $\ket{\psi}\in \CC^{2^n}$:
    $$\left|\Tr(U^\dagger PU\ketbra{\psi}{\psi})-\Tr(V^\dagger PV\ketbra{\psi}{\psi})\right|\leq \|U-V\|_{\mathrm{op}}$$
\end{lemma}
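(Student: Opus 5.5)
The plan is to rewrite both quantities as expectations of a single observable and then control that observable by the operator-norm distance. Set $A = U^\dagger P U$ and $B = V^\dagger P V$. The difference in question is then
\[
\Tr\bigl((A-B)\ketbra{\psi}{\psi}\bigr) = \bra{\psi}(A-B)\ket{\psi}.
\]
Its absolute value is at most $\|A-B\|_{\mathrm{op}}$, because $\ket{\psi}$ is a unit vector. Equivalently, one can invoke Hölder's inequality with $\|\ketbra{\psi}{\psi}\|_1 = 1$. So it suffices to bound $\|U^\dagger P U - V^\dagger P V\|_{\mathrm{op}}$ in terms of $\|U-V\|_{\mathrm{op}}$.

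The direct estimate is a telescoping split,
\[
U^\dagger P U - V^\dagger P V = U^\dagger P (U - V) + (U-V)^\dagger P V .
\]
Using $\|P\|_{\mathrm{op}}\le 1$ and unitary invariance, each term has norm at most $\|U-V\|_{\mathrm{op}}$. This gives a factor of $2$, which is weaker than the claimed bound, so I would not estimate the operators directly.

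To get the sharp constant, I will work at the level of the vectors $u = U\ket{\psi}$ and $v = V\ket{\psi}$. The quantity becomes $\bigl|\|Pu\|^2 - \|Pv\|^2\bigr|$. Since $\|Pu\|,\|Pv\|\le 1$, this factors as
\[
\bigl(\|Pu\| + \|Pv\|\bigr)\,\bigl|\|Pu\| - \|Pv\|\bigr|,
\]
which is at most $2\|u-v\|$. That is again a factor of $2$, so the crude route needs refining.

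The sharp bound should come from the trace-distance inequality for pure states. For unit vectors $u$ and $v$ and any projection $P$, $|\Tr(P(\ketbra{u}{u}-\ketbra{v}{v}))|$ is at most the trace distance $\tfrac12\|\ketbra{u}{u}-\ketbra{v}{v}\|_1$. For pure states this trace distance equals $\sqrt{1-|\braket{u|v}|^2}$. I then show
\[
\sqrt{1-|\braket{u|v}|^2} \le \|u-v\|.
\]
To see this, write $\|u-v\|^2 = 2 - 2\,\mathrm{Re}\braket{u|v} \ge 2 - 2|\braket{u|v}|$, and note $1 - x^2 \le 2 - 2x$ for $x\in[0,1]$, since the difference is $(1-x)^2\ge 0$. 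Finally, $\|u-v\| = \|(U-V)\ket{\psi}\|\le \|U-V\|_{\mathrm{op}}$, which concludes the argument.

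The main obstacle is getting the constant to be $1$ rather than $2$. The trace-distance step, which treats $P$ as a two-outcome measurement so that the $\tfrac12$ in the trace norm is absorbed, is what achieves this. For the Solovay--Kitaev application a constant factor would in any case be harmless.
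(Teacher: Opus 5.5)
Your proof is correct and takes the same route as the paper. Both bound the difference by the trace distance between $U\ket{\psi}$ and $V\ket{\psi}$ via the variational characterization, bound that pure-state trace distance by $\|U\ket{\psi}-V\ket{\psi}\|_2$, and finish with $\|(U-V)\ket{\psi}\|_2\le\|U-V\|_{\mathrm{op}}$. The only differences are that you explicitly verify $\sqrt{1-|\braket{u|v}|^2}\le\|u-v\|$, which the paper just cites, and that your factor-$2$ detours are unnecessary.
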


\begin{proof}
    We have that
    \begin{align*}
        \left|\Tr(U^\dagger PU\ketbra{\psi}{\psi})-\Tr(V^\dagger PV\ketbra{\psi}{\psi})\right|
        & \leq \frac{1}{2}\left\| U \ketbra{\psi}{\psi} U^\dagger - V \ketbra{\psi}{\psi} V^\dagger \right\| \\
        & \leq \| U \ket{\psi} - V \ket{\psi} \|_2 \\
        & \leq \| U- V \|_{\mathrm{op}}.
    \end{align*}
    where the first inequality follows from the variational characterization of the trace distance, and the second inequality follows from the fact that the trace distance between pure states is bounded by their Euclidean distance.
\end{proof}

For completeness, we also provide proofs for some basic properties of density matrices.

\begin{lemma} \label{partial trace and purity are poly}
    The partial trace and purity are both polynomials in the coefficient of the inputs.
\end{lemma}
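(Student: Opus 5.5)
The claim is elementary, so I would simply write out the relevant maps in coordinates and observe that every entry of the output is a polynomial in the input coefficients. Fix a pure input $\ket{\psi}=\sum_{x,y}\alpha_{x,y}\ket{x}_A\ket{y}_B$, or more generally a density matrix $\rho=\sum_{x,x',y,y'}\rho_{(x,y),(x',y')}\ket{x}\bra{x'}\otimes\ket{y}\bra{y'}$. I treat the real and imaginary parts of the coefficients as the variables, so that complex conjugation is also polynomial. Where needed, I would also allow the conjugate coefficients $\overline{\alpha}$ as formal variables.

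\textbf{Partial trace.} By definition,
\[
\bigl(\Tr_B\rho\bigr)_{x,x'}=\sum_{y}\rho_{(x,y),(x',y)}.
\]
Each entry is a linear form, hence a polynomial of degree one, in the entries of $\rho$. For a pure input $\rho=\ketbra{\psi}{\psi}$, the entries are $\rho_{(x,y),(x',y')}=\alpha_{x,y}\overline{\alpha_{x',y'}}$. This gives
\[
\bigl(\Tr_B\ketbra{\psi}{\psi}\bigr)_{x,x'}=\sum_y \alpha_{x,y}\overline{\alpha_{x',y}},
\]
a degree-two polynomial in the real and imaginary parts of the amplitudes.

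\textbf{Purity.} We have
\[
\Tr(\sigma^2)=\sum_{x,x'}\sigma_{x,x'}\sigma_{x',x},
\]
which is a quadratic polynomial in the entries of $\sigma$. To obtain the purity of a reduced state $\Tr(\Tr_B(\rho)^2)$, I compose this with the partial-trace map. A composition of polynomial maps is polynomial, so the result is a polynomial of degree two in the entries of $\rho$, or degree four in the amplitudes of $\ket{\psi}$.

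The only point needing care is to specify the variable set: the real and imaginary parts, or equivalently $\alpha$ together with $\overline{\alpha}$. With that fixed, the result is polynomial rather than merely real-analytic. This matters for later use, for instance in zero-set or measure-zero arguments over Haar-random families. There is no substantive obstacle.
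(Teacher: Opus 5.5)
Your proof is correct and is essentially the paper's argument. The paper writes the partial trace entrywise as $\rho'_{i,j}=\sum_{\kappa}\rho_{(i-1)2^l+\kappa,(j-1)2^l+\kappa}$, which is linear in the entries of $\rho$, and the purity as $\sum_{i,j}\rho_{i,j}\rho_{j,i}$, which is quadratic. Your extra care in fixing the variable set (real and imaginary parts, or equivalently $\alpha$ together with $\overline{\alpha}$) and composing with $\ket{\psi}\mapsto\ketbra{\psi}{\psi}$ matches how the lemma is later used in Lemma~\ref{Gen outputs pure with Pr 1}.
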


\begin{proof}
    Let $\rho=(\rho_{i,j})_{i,j\in[2^d]}$ be a density matrix on $d$ qubits, then for any $l< d$ the result of tracing out the last $l$ qubits is a density matrix
    $$\rho'=\sum_{\kappa=1}^{2^l}(\II\otimes \bra{\kappa})\rho(\II\otimes \ket{\kappa})=(\rho_{i,j}')_{i,j\in[2^{d-l}]}$$
    
    where $\rho_{i,j}'=\sum_{\kappa=1}^{2^l}\rho_{(i-1)2^l+\kappa,(j-1)2^l+\kappa}$, for every $i,j\in[2^{d-l}]$. Moreover, the purity of $\rho$ is     $$\Tr(\rho^2)=\sum_{i=1}^{2^d}\sum_{j=1}^{2^d}\rho_{i,j}\rho_{j,i}$$

    So, all entries of the partial trace of $\rho$ and the purity of $\rho$ are polynomials of the entries of $\rho$.
\end{proof}

\subsection{Probability}
Our proof relies heavily on the following lemma from probability theory. 

\begin{lemma}[Borel-Cantelli] \label{Borel-Cantelli}
    Suppose $\{E_n\}_{n\in \NN}$ is a series of events in a probability space $\Omega$. If $$\sum_{n\in \NN}\Pr(E_n)<\infty $$ then the probability that infinitely many of the events occur is $0$.
\end{lemma}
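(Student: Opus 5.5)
This is the standard first Borel--Cantelli lemma, and I would prove it by a direct measure-theoretic argument. The plan is to express the event ``infinitely many $E_n$ occur'' as the limit superior
$$\limsup_{n} E_n \;=\; \bigcap_{N\in\NN}\ \bigcup_{n\ge N} E_n .$$
An outcome $\omega\in\Omega$ lies in infinitely many $E_n$ exactly when, for every $N$, there is some $n\ge N$ with $\omega\in E_n$. I would check this set identity first, since everything else follows from it.

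Next, fix an arbitrary $N$. The intersection is contained in each of its terms, so monotonicity gives
$$\Pr\Bigl(\limsup_n E_n\Bigr)\le \Pr\Bigl(\bigcup_{n\ge N}E_n\Bigr).$$
Countable subadditivity (the union bound) then bounds the right-hand side:
$$\Pr\Bigl(\bigcup_{n\ge N}E_n\Bigr)\le \sum_{n\ge N}\Pr(E_n).$$

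By hypothesis $\sum_{n}\Pr(E_n)<\infty$, so the tail sums $\sum_{n\ge N}\Pr(E_n)$ tend to $0$ as $N\to\infty$. The probability of $\limsup_n E_n$ does not depend on $N$ and is bounded above by each tail sum. Letting $N\to\infty$ therefore forces it to be $0$, which is the claim.

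None of the steps is a real obstacle. The only points needing care are the set-theoretic identity and the justification for using countable (not just finite) subadditivity, which comes directly from the countable additivity axiom of a probability measure.
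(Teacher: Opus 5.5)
Your proof is correct and complete. The $\limsup$ identity, monotonicity, countable subadditivity, and the vanishing tails of a convergent series are all that is needed.

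The paper gives no argument of its own; it cites Durrett's Theorem 2.3.1, whose proof takes a different, equally standard route. It counts the events that occur, $X=\sum_{n}\mathbf{1}_{E_n}$, and uses Fubini (monotone convergence) to get $\mathbb{E}[X]=\sum_n\Pr(E_n)<\infty$. Hence $X<\infty$ almost surely, which is exactly the statement.

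Your version avoids integration theory entirely, using only the measure axioms. It also gives an explicit quantitative bound, $\Pr\bigl(\bigcup_{n\ge N}E_n\bigr)\le\sum_{n\ge N}\Pr(E_n)$, on the probability that some event beyond index $N$ occurs. The counting argument is a one-liner once Fubini is available, and it gives the slightly stronger conclusion that the number of events occurring has finite expectation.
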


\begin{proof}
    See, for example, \cite[Theorem 2.3.1]{durrett2019probability}.
\end{proof}

\subsection{Properties of the Haar Measure}

We will use the known fact that the probability that a Haar random state $\ket{\phi}$ has a noticeable overlap with any fixed state $\ket{\psi}$ is negligible in the size of the states. 

\begin{lemma}[\cite{Kre21}, Lemma 26]\label{haar-concentration} 
Let $n\in \mathbb{N}$. Let $\mu_n$ be the Haar measure on $n$-qubit states. Then, for any $n$-qubit state $\ket{\psi}$,
$$\Pr_{\ket{\phi} \sim \mu_n} \left[|\braket{\phi|\psi}|^2\geq \varepsilon\right]<e^{-\varepsilon (2^n-1)}$$
\end{lemma}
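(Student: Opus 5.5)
Since the overlap $|\braket{\phi|\psi}|^2$ has the same law for every fixed $\ket\psi$, I will use unitary invariance of $\mu_n$ to take $\ket\psi=\ket{0^n}$, so the quantity becomes $|\phi_0|^2$, the squared modulus of the first coordinate of a Haar random unit vector in $\CC^{N}$ with $N=2^n$. The plan is to compute the exact distribution of this single coordinate and then bound its tail by an exponential.

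For the representation, I will write $\ket\phi = g/\|g\|_2$ with $g=(g_1,\dots,g_N)$ i.i.d.\ standard complex Gaussians; this is the standard realization of $\mu_n$. The variables $X_j=|g_j|^2$ are i.i.d.\ $\mathrm{Exp}(1)$. Hence $|\phi_0|^2 = X_1/(X_1+\dots+X_N)$, which is $\mathrm{Beta}(1,N-1)$ distributed, with density $(N-1)(1-x)^{N-2}$ on $[0,1]$.

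The tail is then an elementary integral. For $\varepsilon\in(0,1)$,
$$\Pr\left[|\braket{\phi|\psi}|^2\ge\varepsilon\right]=\int_\varepsilon^1 (N-1)(1-x)^{N-2}\,dx=(1-\varepsilon)^{N-1}.$$
Using $1-\varepsilon< e^{-\varepsilon}$ for $\varepsilon>0$, this is strictly less than $e^{-\varepsilon(N-1)}=e^{-\varepsilon(2^n-1)}$. The edge cases need separate treatment.
\begin{itemize}
\item For $\varepsilon\ge 1$, the probability is $0$ (since $\Pr[|\phi_0|^2=1]=0$ for $N\ge2$), so the bound holds.
\item For $\varepsilon\le 0$, the right-hand side is $\ge 1$; the inequality is then non-strict and should be read as trivial.
\item For $n=0$ ($N=1$), the state space is trivial and the statement is vacuous or trivial; I will note this separately.
\end{itemize}

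The main obstacle is justifying the Beta law. To do so I will use the Gaussian construction above, or equivalently the fact that $\ket\phi$ is uniform on the sphere $S^{2N-1}\subset\mathbb R^{2N}$, where the marginal of $|\phi_0|^2$ is $\mathrm{Beta}(1,N-1)$ by the standard projection of uniform measure. Once that is in place, the rest is the one-line calculation above.
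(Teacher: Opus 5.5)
Your proof is correct and is the standard argument behind the cited lemma of \cite{Kre21} (the paper states the lemma without proof): you reduce to $\ket{\psi}=\ket{0^n}$ by unitary invariance, note that $|\braket{\phi|\psi}|^2\sim\mathrm{Beta}(1,2^n-1)$ so the tail is exactly $(1-\varepsilon)^{2^n-1}$, and finish with $1-\varepsilon<e^{-\varepsilon}$. One small correction to your edge cases: for $n=0$ with $\varepsilon\le 1$, and for $\varepsilon=0$ with any $n$, the case is not vacuous; both sides equal $1$, so the strict inequality fails, and the lemma must be read with $n\ge 1$ and $\varepsilon>0$, which covers both uses in the paper ($\varepsilon=2^{-2i/5}$ and $\varepsilon=4/5$).
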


We will also need the strong concentration of the Haar measure, as stated in \cite{ledoux2001concentration}.

\begin{lemma}[Lévy’s lemma \cite{ledoux2001concentration}]\label{Levy's Lemma}
    Let $d\in\NN$ and $f:\CC^{d}\to \mathbb{R}$ be an $L$-Lipschitz function with respect to the Euclidean distance in $\CC^d$. Then for every $\epsilon>0$, it holds that
    $$\Pr_{\ket{\phi}\sim\mu_d}\left[\Big|f(\ket{\phi})-\EE_{\ket{\psi}\sim\mu_{d}}f(\ket{\psi})\Big|\geq \epsilon\right]\leq 4\exp\left(-\frac{2d\epsilon^2}{9\pi^3L^2}\right)$$
\end{lemma}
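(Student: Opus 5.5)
Since this is a classical concentration-of-measure statement quoted from~\cite{ledoux2001concentration}, I would reprove it by reducing to concentration on a real sphere and then converting concentration around the median into concentration around the mean. Throughout, $d$ denotes the complex dimension of the ambient space.

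\textbf{Step 1 (reduction to a real sphere).} I identify $\CC^{d}$ with $\mathbb{R}^{2d}$ via real and imaginary parts. This map is an isometry for the Euclidean norm $\|\cdot\|_2$. Under it, the Haar measure on unit vectors is the uniform (rotation-invariant) probability measure $\sigma$ on $S^{2d-1}\subset\mathbb{R}^{2d}$, and $f$ stays $L$-Lipschitz. It therefore suffices to show
\begin{equation*}
\sigma\left(\left|f-\EE_\sigma f\right|\geq\epsilon\right)\leq 4\exp\left(-\frac{2d\epsilon^2}{9\pi^3L^2}\right)
\end{equation*}
for every $L$-Lipschitz $f:S^{2d-1}\to\mathbb{R}$.

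\textbf{Step 2 (concentration around the median).} Let $M$ be a median of $f$. Lévy's spherical isoperimetric inequality says spherical caps minimize the measure of $t$-neighbourhoods. A standard estimate of cap measure gives $\sigma(A_t)\geq 1-\sqrt{\pi/8}\,e^{-(2d-2)t^2/2}$ whenever $\sigma(A)\geq 1/2$. I apply this to $A=\{f\leq M\}$ and to $A=\{f\geq M\}$. Since $f$ is $L$-Lipschitz, $\{f\geq M+\epsilon\}$ lies outside the $(\epsilon/L)$-neighbourhood of $\{f\leq M\}$, and symmetrically for the other tail. Together these give
\begin{equation*}
\sigma(|f-M|\geq\epsilon)\leq 2\exp\left(-c\,d\,\epsilon^2/L^2\right)
\end{equation*}
with an explicit constant $c$.

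As a fallback that matches the $\pi$-factors in the statement, I would instead use the Maurey--Pisier Gaussian interpolation argument. Writing $\ket{\phi}=g/\|g\|$ with $g$ a standard Gaussian vector, this yields Gaussian concentration with constant $2/\pi^2$. I would then transfer it to the sphere using the fact that $\|g\|$ concentrates around $\sqrt{2d}$.

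\textbf{Step 3 (median to mean).} Integrating the tail bound from Step 2 gives
\begin{equation*}
|\EE f-M|\leq \EE|f-M|\leq \int_0^\infty 2e^{-cd t^2/L^2}\,dt = L\sqrt{\pi/(cd)}.
\end{equation*}
I then split into two cases.
\begin{itemize}
\item If $\epsilon\geq 2|\EE f-M|$, then $\{|f-\EE f|\geq\epsilon\}\subseteq\{|f-M|\geq\epsilon/2\}$. This gives a bound of the form $2e^{-cd\epsilon^2/(4L^2)}$.
\item If $\epsilon<2|\EE f-M|\leq 2L\sqrt{\pi/(cd)}$, the exponent $2d\epsilon^2/(9\pi^3L^2)$ is bounded by an absolute constant. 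I need to check that this constant is at most $\ln 4$, so that the right-hand side $4\exp(\cdots)\geq 1$ and the claim holds trivially.
\end{itemize}

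\textbf{Main obstacle.} The only real difficulty is bookkeeping of constants. I must ensure that the isoperimetric (or Gaussian) constant $c$ from Step 2, after losing a factor $4$ in Step 3 and accounting for the dimension shift $2d-2$ versus $2d$, still dominates $2/(9\pi^3)$. The constant in the statement is very generous, so I expect this to go through with a crude cap estimate.
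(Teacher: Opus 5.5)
Your approach is correct, and it differs from the paper simply because the paper gives no proof. It quotes Lévy's lemma as a black box from \cite{ledoux2001concentration} and uses it once, in the proof of Theorem~\ref{Superposition1PRS relative to O}, with $d=2^m$. There any absolute constant in the exponent would do, and your reading of $d$ as the complex dimension matches that use.

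You give the standard self-contained derivation: identify $\CC^d$ with $\mathbb{R}^{2d}$, apply the Lévy/Milman--Schechtman cap bound to $\{f\le M\}$ and $\{f\ge M\}$, then pass from the median to the mean. The constants close with a lot of room:
\begin{itemize}
\item From $\sigma(|f-M|\ge t)\le 2\sqrt{\pi/8}\,e^{-(d-1)t^2/L^2}$ you get $|\mathbb{E}f-M|\le \pi L/\sqrt{8(d-1)}$.
\item Your first case then needs $9\pi^3(d-1)\ge 8d$, which holds for every $d\ge 2$.
\item In your second case the exponent is below $d/(9\pi(d-1))\le 2/(9\pi)<\ln 4$.
\end{itemize}
In fact the same computation gives the much stronger bound $4e^{-(d-1)\epsilon^2/(4L^2)}$ for $d\ge 2$. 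The cost is that the deep input becomes the spherical isoperimetric inequality, so the citation is pushed one level down rather than removed.

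Two points need tidying.
\begin{itemize}
\item \emph{The case $d=1$.} Here the exponent $(2d-2)t^2/2$ vanishes, the cap estimate is vacuous, and the integral in your Step~3 diverges. So $d=1$ must be treated separately, which is easy. The unit circle has Euclidean diameter $2$, so $|f-\mathbb{E}f|\le 2L$ always. For $\epsilon\le 2L$ the right-hand side is at least $4e^{-8/(9\pi^3)}>1$. The same observation actually makes the statement vacuous for every $d\le 48$.
\item \emph{The Maurey--Pisier fallback.} You don't need it, and as sketched it would not work directly, because $g\mapsto f(g/\|g\|)$ is not globally Lipschitz. You would need the radial extension $x\mapsto \|x\|\,f(x/\|x\|)$ of a centred $f$. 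I would drop it.
\end{itemize}
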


In order to use Lemma~\ref{Levy's Lemma}, we prove that the accepting probability of any decision quantum algorithm that takes as input $T$ copies of a quantum state is $T$-Lipschitz.

\begin{lemma} \label{Pr of accepting is Lipschitz}
    Let $\adv$ be a quantum algorithm that takes as input $T$ copies of a quantum state $\ket{\psi}$ in $\mathbb C^d$. Then, the function $f:\CC^{d}\to \RR$ defined by $f(\ket{\psi})=\Pr[\adv(\ket{\psi}^{\otimes T})=1]$ is $T$-Lipschitz with respect to the Euclidean distance in $\mathbb C^d$.
\end{lemma}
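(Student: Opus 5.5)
We prove Lemma~\ref{Pr of accepting is Lipschitz} by writing the acceptance probability as an expectation of a fixed operator and using a telescoping (hybrid) argument. The quantum algorithm $\adv$, on input $T$ copies of $\ket{\psi}$, can be described (after Stinespring dilation and deferring measurements) by a two-outcome POVM $\{M, \II - M\}$ on $(\CC^d)^{\otimes T}$, with $0 \le M \le \II$. Thus
\[
f(\ket{\psi}) = \bra{\psi}^{\otimes T} M \ket{\psi}^{\otimes T}.
\]
The claim therefore reduces to showing that
\[
\left|\bra{\psi}^{\otimes T} M \ket{\psi}^{\otimes T} - \bra{\phi}^{\otimes T} M \ket{\phi}^{\otimes T}\right| \le T\,\|\ket{\psi}-\ket{\phi}\|_2
\]
for all unit vectors $\ket{\psi},\ket{\phi}$.

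First, we bound the difference of expectations of $M$ by the trace distance, and the trace distance of pure states by the Euclidean distance. This is exactly the chain already used in the proof of Lemma~\ref{good approx of U => good approx of YES case}. Since $0\le M\le \II$, the variational characterization gives
\[
|f(\ket{\psi})-f(\ket{\phi})| \le \tfrac12\left\|\ketbra{\psi}{\psi}^{\otimes T}-\ketbra{\phi}{\phi}^{\otimes T}\right\|_1 \le \left\|\ket{\psi}^{\otimes T}-\ket{\phi}^{\otimes T}\right\|_2 .
\]

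Second, we telescope the tensor-power difference through the hybrids $\ket{\psi}^{\otimes j}\otimes\ket{\phi}^{\otimes (T-j)}$ for $j=0,\dots,T$. We have
\[
\ket{\psi}^{\otimes T}-\ket{\phi}^{\otimes T}=\sum_{j=1}^{T}\ket{\psi}^{\otimes (j-1)}\otimes(\ket{\psi}-\ket{\phi})\otimes\ket{\phi}^{\otimes(T-j)}.
\]
Each summand has Euclidean norm exactly $\|\ket{\psi}-\ket{\phi}\|_2$, because the remaining tensor factors are unit vectors and the norm is multiplicative over tensor products. The triangle inequality then yields the bound $T\|\ket{\psi}-\ket{\phi}\|_2$.

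The only point needing care is that $f$ is defined on all of $\CC^d$ in the Lipschitz definition, while the algorithm only makes sense on unit vectors. The paper's definition quantifies over states, so we restrict to unit vectors, where both steps apply directly. One might also worry about global phases: $f$ is phase-invariant, so it is consistent with the bound, and the estimate holds for any choice of representative. Nothing here is a real obstacle; the main step is the telescoping identity.
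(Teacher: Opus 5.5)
Your proof is correct and is essentially the paper's argument: you bound the change in acceptance probability by the trace distance, run a hybrid over the $T$ copies, and use that the trace distance of pure states is at most their Euclidean distance. The only difference is the order. The paper telescopes $\rho^{\otimes T}-\sigma^{\otimes T}$ in trace norm and converts to Euclidean distance at the single-copy level, whereas you convert first (for the $T$-fold vectors) and then telescope $\ket{\psi}^{\otimes T}-\ket{\phi}^{\otimes T}$ in the Euclidean norm, which works equally well.
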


\begin{proof}
    Let $\ket{\psi},\ket{\phi}$ be two pure states in $\mathbb C^d$. Define $\rho=\ketbra{\psi}{\psi},\sigma=\ketbra{\phi}{\phi}$. The maximal probability of distinguishing $\rho$ and $\sigma$ is bounded by the trace distance. We have
    \begin{equation}\label{eq:Pr of accepting is Lipschitz}
        |f(\ket{\psi})-f(\ket{\phi})|\leq \frac{1}{2}\big\|\rho^{\otimes T}-\sigma^{\otimes T}\big\|\,.
    \end{equation}
    
    Next, we prove $\big\|\rho^{\otimes T}-\sigma^{\otimes T}\big\|_1\leq T\|\rho-\sigma\|_1$. This is because
    $$\rho^{\otimes T}-\sigma^{\otimes T}=\sum_{k=0}^{T-1} \rho^{\otimes k}\otimes (\rho-\sigma)\otimes \sigma^{\otimes T-k-1}\,.$$
    Taking trace norm on both sides, and using the triangle inequality:
    $$\big\|\rho^{\otimes T}-\sigma^{\otimes T}\big\|_1\leq \sum_{k=0}^{T-1} \big\|\rho^{\otimes k}\otimes (\rho-\sigma)\otimes \sigma^{\otimes T-k-1}\big\|_1\,.$$
    The trace norm is multiplicative under tensor products, so
    $$\big\|\rho^{\otimes T}-\sigma^{\otimes T}\big\|_1\leq T\cdot \|\rho-\sigma\|_1\,.$$
    Therefore, $f$ is $T$-Lipschitz under the trace distance of pure states. Note that for pure states, the trace distance is bounded by the Euclidean distance. 
   
    Hence, \eqref{eq:Pr of accepting is Lipschitz} yields
    $$|f(\ket{\psi}-f(\ket{\phi})|\leq T\cdot \left(\frac{1}{2}\|\rho-\sigma\|_1\right)\leq T\|\ket{\psi}-\ket{\phi}\|_2\,,$$
    and $f$ is $T$-Lipschitz with respect to the Euclidean distance of $\mathbb C^d$.
\end{proof}

\subsection{Property Tests}
As discussed in~\ref{sec:tech-overview}, our separation will rely on a variant of the \textit{quantum OR Lemma}. Informally, the following theorem guarantees that there exists a quantum algorithm that given a set of projections and a single copy of some quantum state $\ket{\phi}$, decides whether $\ket{\phi}$ has significant overlap with one of the projections, or if it has small overlap with all the projections.

\begin{thm}\label{OR-test}
Given a set of operators $\{0\preceq P_i\preceq \II\}_{i=1}^N$, and a state $\ket{\phi}$, define $p_i=\tr({P_i}\ketbra{\phi}{\phi})$. Define $$p_\downarrow=\max_{1\leq i\leq N} p_i,\ p_\uparrow=\sum_{i=1}^N p_i\,.$$ 
Then there exists an algorithm $\mathcal A$ such that
$$\frac{p^2_\downarrow}{7}\leq \Pr[\mathcal A(\ket{\phi})=1]\leq 2p_\uparrow\,.$$
Furthermore, if each two-outcome measurement $\{P_i, \II-P_i\}$ can be implemented in time $T$ with $S$ ancilla qubits, then the algorithm $\mathcal A$ can be implemented in time $O(NT)$ with at most $\poly(N,S)$ ancilla qubits.
\end{thm}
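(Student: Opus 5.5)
This is the quantum OR lemma in the form of Harrow–Lin–Montanaro (with the $p_\uparrow$ upper bound as in Aaronson's and later refinements). The plan is to reduce to the case of projectors and then analyze a randomized sequence of measurements, which is the "sequential measurement" proof. First, dilate each $P_i$ to a projector. Append one ancilla qubit $a_i$ per test, initialized to $\ket{0}$. Let $\Pi_i$ be the projector that applies the two-outcome measurement $\{P_i,\II-P_i\}$ coherently (via a Naimark/Stinespring isometry $W_i$ that writes the outcome into $a_i$) and checks whether $a_i$ holds $1$. Then $\tr(\Pi_i\,\ketbra{\phi}{\phi}\otimes\ketbra{0}{0})=p_i$. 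Each $\Pi_i$ is implementable in time $O(T)$ with $S+1$ ancillas. It therefore suffices to prove the statement for projectors acting on the enlarged state $\ket{\phi}\ket{0\cdots0}$, since the $p_i$ are unchanged.

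Next, define the algorithm $\mathcal A$. Add a control register in the state $\ket{+}$. With probability $1/2$, first measure the control register in the $\{\ket{+},\ket{-}\}$ basis and accept on outcome $\ket{-}$; this detects when the state leaves the "all tests reject" subspace, following the Harrow–Lin–Montanaro construction. Then, in a uniformly random order, perform each two-outcome measurement $\{\Pi_i,\II-\Pi_i\}$ once, and accept if any of them yields $\Pi_i$. Concretely, I will use the version with $\mathcal A$ = ``measure $\Pi_{\sigma(1)},\dots,\Pi_{\sigma(N)}$ sequentially for a random permutation $\sigma$, accept on the first success''. I will add the HLM trick of interleaving identity steps if it turns out to be needed for the lower bound constant. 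The running time is clearly $O(NT)$, and the ancillas needed are the $N$ dilation qubits, the workspaces, and a $\lceil\log N!\rceil$-bit register for $\sigma$, for a total of $\poly(N,S)$.

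For the upper bound, apply the union bound through a gentle sequential argument. Acceptance requires that some measurement succeeds while all previous ones failed. Bound $\Pr[\text{accept}]\le\sum_i \Pr[\text{$\Pi_i$ succeeds at its step}]$. This is not immediate, because earlier failed measurements disturb the state. The key lemma is the quantum union bound, in the form of Gao's/Sen's bound or the ``noncommutative union bound''. I expect the factor $2$ to come from the averaged version: the expected probability that $\Pi_i$ accepts after a random subset of rejecting projections is at most $2p_i$ (or $p_i$ plus a cross term that is handled by the random order). Summing over $i$ gives $2p_\uparrow$.

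The lower bound $p_\downarrow^2/7$ is the main obstacle. Let $j$ be the maximizer. If the process rejects on all measurements before reaching $j$, the state is $Q\ket{\phi}/\|Q\ket{\phi}\|$ with $Q=\prod(\II-\Pi_{i})$. Either $\|Q\ket{\phi}\|^2$ is small, in which case acceptance already happened with large probability, or $Q\ket{\phi}$ remains close to $\ket{\phi}$. In the latter case, the gentle-measurement inequality gives $\|Q\ket{\phi}-\ket{\phi}\|^2\le 1-\|Q\ket\phi\|^2$, so $\Pi_j$ still accepts with probability at least $(\sqrt{p_j}-\sqrt{1-\|Q\ket\phi\|^2})^2$. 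Optimizing over the two cases gives a bound of order $p_\downarrow^2$, and careful bookkeeping with a threshold at $p_\downarrow/2$, as in HLM, yields the constant $1/7$. If the plain sequential version loses a constant, I will instead invoke the HLM analysis with the random-$\ket{\pm}$ control, whose stated guarantee is exactly $p_\downarrow^2/7$.
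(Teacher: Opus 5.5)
\paragraph{The lower-bound argument fails.} You use $\|Q\ket\phi-\ket\phi\|^2\le 1-\|Q\ket\phi\|^2$ for $Q=\prod_i(\II-\Pi_i)$. This is the gentle-measurement bound for a \emph{single} projector, and it is false for a product of non-commuting projectors (the quantum Zeno effect). Write $\ket{\alpha}:=\cos\alpha\ket{0}+\sin\alpha\ket{1}$, and let $Q$ be the product of the projectors onto $\ket{\theta},\ket{2\theta},\dots,\ket{K\theta}$ with $\theta=\pi/(2K)$. Then $Q\ket{0}=\cos^K\theta\,\ket{1}$, so $\|Q\ket0\|^2\to1$ while $Q\ket0\perp\ket0$.

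Your argument also never uses that the order is random. If it were valid, it would prove the lower bound for every fixed order, and that is false. Take $\ket\phi=\ket0$, $P_k=\II-\ketbra{k\theta}{k\theta}$ for $k=1,\dots,N-1$ with $\theta=\pi/(2(N-1))$, and $P_N=\ketbra{0}{0}$. Then $p_\downarrow=1$, yet measuring in the order $1,\dots,N$ accepts with probability $1-\cos^{2(N-1)}\theta\le\pi^2/(4(N-1))$, which is below $1/7$ for large $N$. Controlling the disturbance caused by a long run of rejecting measurements is the whole content of the quantum OR lemma. The randomness of the sequence, or the Harrow--Lin--Montanaro ancilla construction, must be used exactly there. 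One example is the gentle random measurement lemma of Watts and Bostanci, which bounds the \emph{expected} disturbance over a random sequence by the square root of the acceptance probability.

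\paragraph{The upper bound and the fallback are also incomplete.} Decomposing acceptance by the step of the first success is fine. However, the claim that, averaged over the order, $\Pi_i$ accepts with probability at most $2p_i$ after earlier rejections is only conjectured. The order-independent quantum union bound gives only $4p_\uparrow$. Your fallback to Harrow--Lin--Montanaro does not give the statement as written either. Their $(1-\epsilon)^2/7$ guarantee is stated under the promise $p_\downarrow\ge1-\epsilon$ with $\epsilon<1/2$, and comes with $4p_\uparrow$ on the soundness side. Also, the initial $\{\ket{+},\ket{-}\}$ measurement of a freshly prepared $\ket{+}$ register, as you describe it, never yields $\ket{-}$.

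\paragraph{How the paper does it.} The paper does not prove these bounds itself. Its $\mathcal A$ runs $N$ rounds, each measuring an independently chosen uniformly random $P_i$, and accepts if any round accepts. Both inequalities are imported from \cite[Theorem 29]{WB22}, and the time and ancilla claims are then immediate. Your dilation to projectors and your resource count are fine. To finish, you must either invoke that theorem for the matching protocol or actually prove a random-order disturbance bound.
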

\begin{proof}
The algorithm runs as follows: it repeats $N$ times, each time it chooses a random measurement $P_i$ and performs the measurement on the current state. If any of the measurements accepts, it outputs $1$. If none of the measurements accepts, it outputs $0$. The proof of the bound on the probability of accepting can be found in~\cite[Theorem 29]{WB22}.
\end{proof}

In Section~\ref{sec:separation}, we will use the OR Lemma with projections that correspond to Permutation tests, as described in~\cite{KNY08}. The following lemma shows the exact probability that a state of the form $\ket{\phi}\ket{\psi}^{\otimes r-1}$ passes the Permutation test across $r$ registers.

\begin{lemma}\label{generalised swap test}
    Let $\ket{\phi},\ket{\psi}$ be two $n$-qubit states, then for every $r\in\NN$:
    $$\tr\left(\Pi_{sym}^{(n,r)}  \ketbra{\phi}{\phi}\otimes \ketbra{\psi}{\psi}^{\otimes (r-1)}\right)=\frac{1}{r}+\frac{r-1}{r}|\braket{\phi|\psi}|^2$$
\end{lemma}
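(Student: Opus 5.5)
The plan is to write the symmetric projector explicitly as the average over permutation operators and then evaluate the trace permutation by permutation. That is, I will use
$$\Pi_{sym}^{(n,r)}=\frac{1}{r!}\sum_{\sigma\in S_r} W_\sigma,$$
where $W_\sigma$ permutes the $r$ tensor factors of $(\mathbb C^{2^n})^{\otimes r}$. This is the standard characterization of the projector onto $\Sym_r\mathbb C^{2^n}$, and it is exactly what the Permutation test of~\cite{KNY08} implements. By linearity, it then suffices to compute $\tr\!\left(W_\sigma\,\rho\right)$ for each $\sigma$, where $\rho=\ketbra{\phi}{\phi}\otimes\ketbra{\psi}{\psi}^{\otimes(r-1)}$.

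For a product of pure states $\ket{v_1}\otimes\cdots\otimes\ket{v_r}$, I will use that $\bra{v_1\cdots v_r}W_\sigma\ket{v_1\cdots v_r}=\prod_{j}\braket{v_j|v_{\sigma(j)}}$, up to the convention of $\sigma$ versus $\sigma^{-1}$, which does not matter after summing over $S_r$. Here $v_1=\phi$ and $v_j=\psi$ for all $j\ge 2$. Every factor $\braket{v_j|v_{\sigma(j)}}$ equals $1$ unless exactly one of $j,\sigma(j)$ is the first position.
\begin{itemize}
\item If $\sigma(1)=1$, every factor is either $\braket{\phi|\phi}=1$ or $\braket{\psi|\psi}=1$, so the term equals $1$.
\item If $\sigma(1)\neq 1$, the cycle through position $1$ contributes exactly one factor $\braket{\phi|\psi}$ (the step leaving $1$) and one factor $\braket{\psi|\phi}$ (the step entering $1$), with all other factors equal to $1$. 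So the term equals $|\braket{\phi|\psi}|^2$.
\end{itemize}

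Counting permutations finishes the computation. There are $(r-1)!$ permutations fixing $1$ and $r!-(r-1)!$ that do not. This gives
$$\frac{1}{r!}\Big[(r-1)!+(r!-(r-1)!)\,|\braket{\phi|\psi}|^2\Big]=\frac1r+\frac{r-1}{r}|\braket{\phi|\psi}|^2,$$
as claimed. The case $r=1$ is consistent, since the trace is then $1$.

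The only step that needs care is the cycle bookkeeping in the $\sigma(1)\neq 1$ case, i.e.\ confirming that each such permutation contributes exactly $|\braket{\phi|\psi}|^2$ and not some higher power. This holds because $\phi$ occupies only one position, so it appears in exactly one cycle, and that cycle enters and leaves position $1$ exactly once. Everything else is routine.
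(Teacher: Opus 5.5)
Your argument is correct. With $\Pi_{sym}^{(n,r)}=\frac{1}{r!}\sum_{\sigma\in S_r}W_\sigma$, the only factors of $\prod_j\braket{v_j|v_{\sigma(j)}}$ that can differ from $1$ are those with $j=1$ and $j=\sigma^{-1}(1)$. When $\sigma(1)\neq 1$ these are two distinct factors whose product is $|\braket{\phi|\psi}|^2$. Counting $(r-1)!$ permutations fixing $1$ against $r!-(r-1)!$ that do not then gives exactly the claimed formula.

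This is a genuinely different route from the paper, which does no computation at all. Its proof only identifies $\Pi_{sym}^{(n,r)}$ with the accepting projector of the Permutation test of~\cite{KNY08} and takes the acceptance probability from that reference.

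Your derivation is self-contained and uses nothing beyond the group-averaging formula for the symmetric projector. It also carries over directly to mixed inputs, with each cycle contributing the trace of the product of the corresponding density matrices.

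What the paper's citation adds is the link to an explicit circuit implementation of this measurement. That matters for showing the attack is implementable in Appendix~\ref{AppendixA}, but plays no role in the identity itself.
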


\begin{proof}
    The projection $\Pi^{(n,r)}_{sym}$ is equivalent to the Permutation test described in~\cite{KNY08}.
\end{proof}

Notice that, as discussed in Section~\ref{sec:tech-overview}, the probability in Lemma~\ref{generalised swap test} scales inverse-linearly with the number of registers $r$.

\section{Oracle Separation between $m=1.1n$ and $m=\Omega(n^{2+\epsilon})$}
\label{sec:separation}

We will describe an oracle $\OO$ relative to which $\oPRS$ with output length $m(n)=1.1n$ exist, but $\oPRS$  with output length $m(n)=\Omega(n^{2+\epsilon})$ do not. Let $\{\ket{\psi_i}\}_{i\in\NN}$ be a family of quantum states, where $\ket{\psi_i}$ is an $i$-qubit state. We will call $\{\ket{\psi_i}\}_{i\in\NN}$ the CHRS states. We now define an oracle $\OO$ with respect to this family of quantum states. For clarity, we think of $\OO$ as a pair of oracles $\OO = (\OO_1, \OO_2)$. Eventually, we will show that, with probability $1$ over sampling each state in the family from the Haar measure, $\OO$ separates $\oPRS$ with output length $m(n)=1.1n$ from $\oPRS$ with output length $m(n)=\Omega(n^{2+\epsilon})$.

Since the possible oracles $\OO$ are in one-to-one correspondence with the family of states $\{\ket{\psi_i}\}_{i\in \NN}$, we endow the set of all possible oracles with the Haar measure of quantum states. Therefore, whenever we say ``the probability over $\OO$", we refer to the Haar measure over the family of states $\{\ket{\psi_i}\}_{i\in \NN}$.

\paragraph{Separating Oracle.} We define the oracle $\OO=(\OO_1=\{\mathcal O_{1,i}\}_{i\in\NN},\OO_2)$ as follows:

\begin{itemize}
    \item $\mathcal O_1$ is a family of isometries $\mathcal{O}_{1,i}$, where each $\mathcal{O}_{1,i}$, on input $\ket{0}$, outputs the state $\ket{\psi_i}$. Note that this is the standard CHRS oracle from~\cite{CCS24} (discussed in Section~\ref{sec:tech-overview}), where the input to $\mathcal{O}_{1,i}$ is $1$-dimensional.
    \item $\mathcal O_2$ takes as input
    \begin{itemize}
        \item A number $T$ expressed in unary as $1^T$.
        \item A quantum state $\ket\phi$ (this is also allowed to be a mixed state).
        \item The description of a Turing machine $M$, which outputs a quantum circuit $C$ such that, syntactically, $C$ operates on $$\ket\phi\otimes\left(\bigotimes_{i=1}^{T} \ket{\psi_i}^{\otimes2^{2i/5}}\right)\otimes \ket{0}^{\otimes  2^{T}}$$
        and outputs a bit.
    \end{itemize}
   $\OO_2$ runs $M$ for at most $2^{2^T}$ steps  to obtain the quantum circuit $C$; it then runs $$C\left(\ket\phi\otimes\left(\bigotimes_{i=1}^{T} \ket{\psi_i}^{\otimes2^{2i/5}}\right)\otimes \ket{0}^{\otimes 2^{T}}\right)\rightarrow b\,,$$ and outputs $b$.
\end{itemize}

\begin{remark}\label{Gen is unitary}
   Note that for any quantum circuit $Q$ making queries to $\OO$ that outputs a $k$-qubit state, there is another circuit $\bar Q$, acting on at most $|Q|\cdot 2^{2i/5}$ copies of $\ket{\psi_i}$ for every $i\leq |Q|$ and at most $|Q|\cdot 2^{|Q|}$ many ancillas, such that the (mixed) state of the first $k$ qubits of the output of $\bar Q$ is identical to that of $Q$.
\end{remark}

\subsection{Existence of a $\oPRS$ with $m=1.1n$ relative to $\OO$}
We start by showing that $\oPRS$ with $m=1.1n$ exist relative to $\OO$. Our construction will naturally generalize the following construction of $\oPRS$ in the CHRS model.

\begin{thm}[\cite{ananth2024cryptographycommonhaarstate}, Lemma 4.6]
 \label{statistically secure 1PRS} 
 For any function $m(\cdot)$ such that $m(n)>n$ for all $n$, the following is a $\oPRS$ in the CHRS model (secure in the sense of Equation~\eqref{eq:chrs_security}): for $k \in \{0,1\}^n$, the generation procedure $\Gen_k$ uses a single copy of the CHRS state $\ket{\psi_m}$ of length $m$, and outputs the state
$$\ket{\phi_k}=\Gen_k(\ket{\psi_m}) = (Z^k\otimes \II_{2^{m-n}})\ket{\psi_m} \,,$$ 
where $Z^k=\otimes_{i=1}^nZ^{k_i}$. Then, the following holds for all $t \in \mathbb{N}$: 
\begin{equation}\label{eq:chrs_security}
    \left\Vert\mathop{\EE}_{k\gets\{0,1\}^n}\mathop{\EE}_{\ket{\psi_m}\sim \mu_{m}}\ketbra{\phi_k}{\phi_k}\otimes \ketbra{\psi_m}{\psi_m}^{\otimes {t-1}}-\mathop{\EE}_{\ket{\psi_m}\sim \mu_{m}} \frac{\II}{2^m}\otimes \ketbra{\psi_m}{\psi_m}^{\otimes t-1}\right\Vert_1= O\left(\frac{t^2}{2^{n}}\right)\,.
\end{equation}
\end{thm}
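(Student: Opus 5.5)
We plan to prove the bound by averaging over the key first and then invoking Schur--Weyl / symmetric-subspace structure. The key idea is that the random $Z^k$ twirl on the first $n$ qubits dephases the target register, and the resulting state is close to maximally mixed up to collision-type terms.

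First, write $\EE_{\ket{\psi}\sim\mu_m}\ketbra{\psi}{\psi}^{\otimes t} = \Pi_{\text{sym}}^{(2^m,t)}/\binom{2^m+t-1}{t}$. Let $D=2^m$. The object to analyze is
$$\rho_{\mathrm{real}} = \EE_k (Z^k\otimes\II)_1\,\frac{\Pi_{\text{sym}}}{\dim}\,(Z^k\otimes\II)_1^\dagger,$$
where the unitary acts only on the first of the $t$ registers. Averaging $Z^k$ over uniform $k\in\{0,1\}^n$ gives the dephasing channel $\Delta(X)=\sum_{x\in\{0,1\}^n}(\ketbra{x}{x}\otimes\II)X(\ketbra{x}{x}\otimes \II)$ on the first $n$ qubits of register $1$. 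We will expand $\Pi_{\text{sym}}=\frac{1}{t!}\sum_{\sigma\in S_t}P_\sigma$ and sort the permutations by how they treat register $1$.

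Next we handle the two classes of permutations.
\begin{itemize}
\item \emph{Permutations fixing register $1$.} These contribute $\frac{1}{t}\,\II\otimes \Pi_{\text{sym}}^{(D,t-1)}$ times a normalization. Since $\Delta$ leaves the identity invariant, this term matches the ideal state $\frac{\II}{D}\otimes \EE\ketbra{\psi}{\psi}^{\otimes t-1}$ up to the ratio of normalizations. That ratio is $1+O(t^2/D)$.
\item \emph{Permutations moving register $1$.} These contain a transposition-like swap between register $1$ and some register $j$. After dephasing, the operator $\Delta_1(\mathrm{SWAP}_{1j})$ equals $\sum_x \ketbra{x}{x}\otimes\ketbra{x}{x}\otimes \mathrm{SWAP}$ on the remaining $m-n$ qubits. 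Its "weight" is reduced by a factor $2^{-n}$ compared with the full SWAP.
\end{itemize}

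The remaining task is to bound the trace norm of the dephased sum over permutations moving register $1$, normalized by $\dim\Sym_t$. We will carry this out by viewing the dephased state as $\EE_\psi \sum_x (\Pi_x\ket{\psi}\bra{\psi}\Pi_x)\otimes \ketbra{\psi}{\psi}^{\otimes t-1}$ and comparing it directly with $\frac{\II}{D}\otimes\psi^{\otimes t-1}$. The conditioned-on-$\psi$ picture is cleaner: for fixed $\psi$, the dephased first register is $\sum_x p_x\,\ketbra{x}{x}\otimes\ketbra{\psi_x}{\psi_x}$, where $p_x=\|\Pi_x\psi\|^2$. The adversary's other copies then reveal information about the $\psi_x$. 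This is exactly the setting of the proof in \cite{ananth2024cryptographycommonhaarstate}, so the cleanest route is to cite it. If we reprove it, we would use a Weingarten / symmetric-subspace computation. Bounding the trace norm of the dephased swap terms is the main obstacle, because naive triangle inequalities over the $t!$ permutations lose too much. We would try to group the permutations by the cycle containing $1$ and use the collision bound to get $O(t^2/2^n)$.
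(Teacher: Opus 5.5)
The paper gives no proof of this statement. It quotes it from \cite{ananth2024cryptographycommonhaarstate} (Lemma 4.6), so your fallback of citing that lemma is exactly the paper's route. As a self-contained argument, however, your sketch stops at the only step with real content. What you do establish is correct:
\begin{itemize}
\item the identity $\EE_\psi\ketbra{\psi}{\psi}^{\otimes t}=\Pi^{(t)}/d_t$, with $D=2^m$ and $d_t=\binom{D+t-1}{t}$;
\item the identification of the $Z^k$-average with the dephasing $\Delta_1$;
\item the form of $\Delta_1(\mathrm{SWAP}_{1j})$;
\item the register-$1$-fixing term, which equals exactly $\frac{D}{D+t-1}\rho_{\mathrm{ideal}}$ with $\rho_{\mathrm{ideal}}=\frac{\II}{D}\otimes\Pi^{(t-1)}/d_{t-1}$. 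It therefore costs $\frac{t-1}{D+t-1}$ in trace norm; the deviation is $1-O(t/D)$, not $1+O(t^2/D)$, though this is harmless.
\end{itemize}
The trace-norm bound on the terms that move register $1$ is only announced (``we would try to group \ldots''), and that bound is the whole lemma.

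The gap closes easily, without Weingarten calculus, provided you do not expand the full symmetrizer. Write each $\sigma\in S_t$ uniquely as $\tau\sigma'$ with $\sigma'(1)=1$ and $\tau\in\{e,(1\,2),\dots,(1\,t)\}$. This gives
\[
\Pi^{(t)}=\frac{1}{t}\Bigl(\II+\sum_{j=2}^{t}F_{1j}\Bigr)\bigl(\II\otimes\Pi^{(t-1)}\bigr),
\]
so only $t-1$ bad terms appear and $\Pi^{(t-1)}$ stays a projector; this is what prevents the $t!$ loss you were worried about. Let $P_x=\ketbra{x}{x}\otimes\II_{2^{m-n}}$. Since $(P_x)_1$ commutes with $\II\otimes\Pi^{(t-1)}$, the dephasing passes through the right factor. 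Moreover $\Delta_1(F_{1j})=F_{1j}R_{1j}$ with $R_{1j}=\sum_x (P_x)_1(P_x)_j$. Hence
\[
\bigl\|\Delta_1(F_{1j})(\II\otimes\Pi^{(t-1)})\bigr\|_1=\Tr\sqrt{\textstyle\sum_x (P_x)_1\otimes \Pi^{(t-1)}(P_x)_j\Pi^{(t-1)}}\,.
\]
By permutation invariance, $\Pi^{(t-1)}(P_x)_j\Pi^{(t-1)}=\frac{1}{t-1}N_x\Pi^{(t-1)}$, where $N_x=\sum_{j'=2}^{t}(P_x)_{j'}$. The operator $N_x$ commutes with $\Pi^{(t-1)}$, has integer spectrum (so $\sqrt{N_x}\preceq N_x$), and satisfies $\sum_x N_x=(t-1)\II$. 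The blocks are orthogonal in $x$, so this trace norm is at most
\[
\frac{2^{m-n}}{\sqrt{t-1}}\sum_x\Tr\bigl(N_x\Pi^{(t-1)}\bigr)=2^{m-n}\sqrt{t-1}\,d_{t-1}.
\]
Use $t\,d_t=(D+t-1)\,d_{t-1}$ and the triangle inequality over the $t-1$ values of $j$. Then
\[
\|\rho_{\mathrm{real}}-\rho_{\mathrm{ideal}}\|_1\le\frac{t-1}{D+t-1}+\frac{(t-1)^{3/2}\,2^{m-n}}{D+t-1}\le\frac{t+t^{3/2}}{2^n},
\]
which gives the claimed $O(t^2/2^n)$, in fact slightly better. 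Your heuristic that dephasing cuts the swap's weight by $2^{-n}$ is exactly what the $N_x$ computation makes rigorous. Grouping by the full cycle through $1$ is unnecessary; only $\sigma(1)$ matters.
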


Note that Theorem~\ref{statistically secure 1PRS} ensures the statistical security of a $\oPRS$ in the CHRS model, as long as the number of CHRS states that the adversary can use is bounded. This leads to the following theorem.

\begin{thm}\label{Superposition1PRS relative to O}
    With probability $1$ over the choice of $\mathcal O$, there exists a $\oPRS$ with $m=1.1n$ relative to $\OO$.
\end{thm}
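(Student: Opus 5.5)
The construction is the one from Theorem~\ref{statistically secure 1PRS} with $m=1.1n$: on key $k\in\{0,1\}^n$, query $\mathcal O_{1,m}$ once to get $\ket{\psi_m}$ and output $\ket{\phi_k}=(Z^k\otimes\II)\ket{\psi_m}$. This generator is QPT and always outputs a pure $m$-qubit state. The work is in security against QPT adversaries with access to both $\OO_1$ and $\OO_2$. The plan is to reduce any such adversary to an unbounded algorithm that sees a bounded number $t$ of copies of $\ket{\psi_m}$, apply Equation~\eqref{eq:chrs_security}, and then pass from an average over oracles to probability $1$ using Lemma~\ref{Levy's Lemma} and Lemma~\ref{Borel-Cantelli}.

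\textbf{Step 1: reduction to a bounded-copy adversary.} Fix a QPT adversary $\adv$ of size $q(n)$. By Remark~\ref{Gen is unitary}, $\adv$ is equivalent to a (possibly huge) circuit acting on the challenge state and at most $q\cdot 2^{2i/5}$ copies of $\ket{\psi_i}$ for each $i\le q$, plus ancillas. The states $\ket{\psi_i}$ with $i\ne m$ are independent of $\ket{\psi_m}$ and of $k$. So for each fixed choice of them we may absorb them into the adversary, leaving a distinguisher that uses only $t\le q(n)\cdot 2^{2m/5}+1$ copies of $\ket{\psi_m}$. For $m=1.1n$ this gives $t=q\cdot 2^{0.44n}$, so $t^2/2^n\le q^2 2^{-0.12n}$, which is negligible. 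Equation~\eqref{eq:chrs_security} then bounds the advantage by $O(q^2 2^{-0.12n})$ on average over $\ket{\psi_m}$ and for every fixing of the other states, hence on average over the whole oracle.

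\textbf{Step 2: from average to almost every oracle.} Fix all $\ket{\psi_i}$ with $i\ne m$. The adversary's advantage is then a function of $\ket{\psi_m}$ that involves $t+1$ uses of $\ket{\psi_m}$: one inside $\ket{\phi_k}$ and $t$ given to the adversary. By Lemma~\ref{Pr of accepting is Lipschitz} applied to each of the two acceptance probabilities, this function is $O(t)$-Lipschitz. Lemma~\ref{Levy's Lemma} on $\mathbb C^{2^m}$ gives a deviation probability of
\[
4\exp\!\left(-\Omega\!\left(2^{m}\varepsilon^2/t^2\right)\right).
\]
Here $2^m/t^2\approx 2^{0.22n}/q^2$, so taking $\varepsilon=2^{-0.05n}$ still leaves a doubly-exponentially small probability. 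Summing over $n$ and over the countably many adversaries (enumerate Turing machines; for each $n$, take only those with index at most $n$ and running time at most $n^{j}$, then union over $j$), the total is finite. Lemma~\ref{Borel-Cantelli} then says that with probability $1$, for every adversary the advantage exceeds $O(q^2 2^{-0.12n})+2^{-0.05n}$ for only finitely many $n$. Hence the advantage is negligible for every QPT adversary.

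\textbf{Main obstacle.} The delicate point is Step 1's treatment of $\OO_2$. Its circuits may run for $2^{2^T}$ steps on exponentially many copies, with $T$ up to $q$, so the equivalent adversary is unbounded. That is harmless because Theorem~\ref{statistically secure 1PRS} is statistical, but the copy count of $\ket{\psi_m}$ must be tracked carefully: each $\OO_2$ call with $T\ge m$ contributes exactly $2^{2m/5}$ copies, and this is precisely where the exponent $2/5<1/2\cdot(n/m)$ at $m=1.1n$ is used. One must also check that the oracle queries made by the generator itself, in producing the challenge, are handled the same way. A second subtlety is that Lévy's lemma has to be applied conditionally, with the other CHRS states fixed, so that the Lipschitz constant depends only on $t$. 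I would first try to make Step 1 fully rigorous by writing each $\OO_2$ call as a channel on the copies and composing these channels.
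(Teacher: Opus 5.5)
Your proposal is correct and follows essentially the same route as the paper: the \cite{ananth2024cryptographycommonhaarstate} generator with $m=1.1n$, and a bound on the number of copies of $\ket{\psi_m}$ via Remark~\ref{Gen is unitary} (you use the sharper $q\cdot 2^{0.44n}$ where the paper uses $2^{0.495n}$). It then applies the statistical bound of Theorem~\ref{statistically secure 1PRS} on average, Lipschitz continuity with L\'evy's lemma, and Borel--Cantelli with a countable union over adversaries. Your explicit conditioning on the CHRS states of other lengths and your explicit enumeration of adversaries make rigorous two points that the paper leaves implicit.
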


\begin{proof}
We prove that the generation procedure $\Gen_k$ described in Theorem~\ref{statistically secure 1PRS} is secure relative to $\OO$, with probability $1$ over the choice of $\OO$.

Notice that by Remark~\ref{Gen is unitary}, any QPT adversary $\adv^{\OO}$ can be simulated by a circuit using at most $\poly(n)2^{2m/5}<2^{0.45m}=2^{0.495n}$ copies of $\ket{\psi_m}$. Define $T$ to be the number of copies of $\ket{\psi_m}$ required to simulate $\adv^{\OO}$.

Let 
$$\mathrm{adv}(\adv^\OO)=\mathop{\Pr}_{k\gets\{0,1\}^n}[\adv^{\OO}(\ket{\phi_k})=1]-\mathop{\Pr}_{\ket{\phi}\sim \mu_m} [\adv^{\OO}(\ket{\phi})=1]\,.$$

By Theorem~\ref{statistically secure 1PRS},
$$\left|\EE_{\ket{\psi_m}\sim \mu_m}\mathrm{adv}(\adv^\OO)\right|= O\left(\frac{1}{2^{0.01n}}\right)\,.$$

To show that $\Gen_k$ is a $\oPRS$ generator, it is required to fix a separating oracle $\OO$ such that $\adv$ cannot achieve non-negligible advantage. We use the following idea inspired by the proof of \cite[Theorem 30]{Kre21}. 

By Lemma~\ref{Pr of accepting is Lipschitz} and the fact that the expectation of Lipschitz function is also Lipschitz, the function $\mathop{\Pr}_{k\gets\{0,1\}^n}[\adv^{\OO}(\ket{\phi_k})=1]$ is a $T$-Lipschitz function in $\ket{\psi_m}$. The same applies to $ \mathop{\Pr}_{\ket{\phi}\sim \mu_m} [\adv^{\OO}(\ket{\phi})=1]$, so $\mathrm{adv}(\adv^\OO)$ is a $2T$-Lipschitz function in $\ket{\psi_m}$, where $T<2^{0.495n}$. Applying Levy's Lemma~\ref{Levy's Lemma}, we have
\begin{multline*}
    \Pr_{\ket{\psi_m}\sim \mu_m}\Big[\Big|\mathrm{adv}(\adv^\OO)-\EE_{\ket{\psi_m}\sim \mu_m}\mathrm{adv}(\adv^\OO)\Big|\geq 2^{-0.001n}\Big]
    \\\leq 4\exp\left(-\frac{2\cdot 2^m\cdot 2^{-0.002n}}{9\pi^3\cdot 4T^2}\right)\leq O\left(\exp\left(-2^{0.1n}\right)\right)\,.
\end{multline*}

Hence, 
$$\Pr_{\ket{\psi_m}\sim \mu_m}\Big[\Big|\mathrm{adv}(\adv^\OO)\Big|\geq 2^{-0.001n}+\Omega(2^{-0.01n})\Big]\leq O\left(\exp\left(-2^{0.1n}\right)\right)\,.$$

Since $\sum_{n=1}^{\infty} \exp\left(-2^{0.1n}\right)<\infty$, by Borel-Cantelli Lemma~\ref{Borel-Cantelli}, $\adv$ achieves an advantage of at most $2^{-0.001n}+O(2^{-0.01n})$ for all but finitely many $n\in \NN$, with probability $1$ over the choice of $\OO$. Thus, every QPT algorithm achieves a negligible advantage towards $\Gen_k$ with probability $1$ over the choice of $\OO$.
    \end{proof}
    
\subsection{Attack on Any $\oPRS$ with $m=\Omega(n^{2+\epsilon})$ relative to $\OO$}
Our proof of the existence of an attack against any $\oPRS$ with output length $m=\Omega(n^{2+\epsilon})$ relative to $\OO$ can be divided into two major steps. First, in ``{\bf Bounding the `Effective' Number of CHRS States Used by Any $\oPRS$}'', we show a key structural property of any $\oPRS$ relative to $\OO$: even though $\OO$ gives access to \emph{exponentially} many copies of a CHRS state (via $\OO_2$), there must be an equivalent unitary implementation of the $\oPRS$ that uses only polynomially many copies. Then, in ``{\bf Description and Analysis of the Attack}", we describe a concrete attack that leverages the structural property.

\subsubsection{Bounding the ``Effective'' Number of CHRS States Used by Any $\oPRS$}
\label{sec:bounding_copies}
Let $\mathsf{Gen}^\mathcal{O}$ be the QPT generation algorithm of a $\mathsf{1PRS}$ that on input $k\in \{0,1\}^n$ outputs a pure $m$-qubit state $\ket{\phi_k}$. Fixing $n,k$, we will simply write $\mathsf{Gen}_k^\mathcal{O}$ as the circuit corresponding to the security parameter $n$ and input $k$, and $\Gen^\OO_k\ket{0}$ as its output. By Remark~\ref{Gen is unitary}, we can view the action of $\mathsf{Gen}_k^\mathcal{O}$ as a unitary $G_k$ such that
\begin{equation}\label{eq:Gen is unitary}
    G_k\left(\left(\bigotimes_{i=1}^{s} \ket{\psi_i}^{\otimes r(i)}\right)\otimes \ket{0}^{\otimes  t}\right)=\ket{\phi_k}\otimes \ket{\tilde{\phi}_k}
\end{equation}
where $s=\poly(n)$ is the size of the circuit $\mathsf{Gen}_k^{(\cdot)}$, $r(i)\leq2^{2i/5}s$, $t\leq 2^s s$, and $\ket{\tilde \phi_k}$ is some other state on the remaining registers.

Crucially, the number of copies of each CHRS state is potentially \emph{exponential} in $n$. As described informally in the technical overview, the key technical lemma that enables an attack on any $\oPRS$ with sufficiently large stretch is that there is an equivalent implementation of $G_k$ that only uses \emph{polynomially} many CHRS states of each size. This is due to the strong constraint that $G_k$ is required to output a pure state on the first $m$ qubits \emph{for all}\footnote{The fact that this holds is not obvious: one could have a $\oPRS$ relative to a fixed choice of oracle $\OO$ (i.e.\ for a fixed family of CHRS states), which, in particular, only guarantees that the generation procedure outputs a pure state when querying this particular $\OO$. However, as we show in Lemma~\ref{Gen outputs pure with Pr 1}, with probability 1 over the choice of oracle $\OO$, if the generation procedure outputs a pure state when querying $\OO$, then it does so \emph{for every other choice} $\OO'$. }
choices of CHRS state families. The following is the key lemma, stated more generally as it may be of independent interest. The subsequent Corollary~\ref{Gen acts on poly copies} is the result we will rely on going forward.

\begin{lemma}\label{Boyang's Lemma}
Let $s,t,m\in \mathbb{N}$, and $r(i) \in \mathbb{N}$ for $i \in [s]$.
Let $U$ be an isometry such that, for all families of states $\{\ket{\theta_i}\}_{i=1}^s$, where $\ket{\theta_i}$ is an $i$-qubit state for all $i$,
$$U\ket{\Theta}=\ket{\zeta_\Theta}\otimes \ket{\tilde{\zeta}_\Theta}\,,$$
where $\ket{\Theta}=\left(\bigotimes_{i=1}^{s} \ket{\theta_i}^{\otimes r(i)}\right)\otimes \ket{0}^{\otimes  t}$, and $\ket{\zeta_\Theta}, \ket{\tilde{\zeta}_\Theta}$ are states that can depend on $\{\ket{\theta_i}\}_{i=1}^s$, $\ket{\zeta_\Theta}\in\CC^{2^m}$ and $\ket{\tilde{\zeta}_\Theta}$ are of arbitrary dimension. Then, there exists an isometry
$$V:\bigotimes_{i=1}^s \Sym_{\ell(i)}\CC^{2^i}\to \CC^{2^m}$$ for some $\{\ell(i)\}_{i=1}^s$ with $\ell(i)\leq r(i)$, such that for every $\{\ket{\theta_i}\}_{i=1}^s$ it holds that\footnote{Note that there is a slight abuse of notation: $\ket{\theta_i}^{\otimes \ell(i)}$ denotes the element inside the symmetric tensor space.}
$$V\left(\bigotimes_{i=1}^s \ket{\theta_i}^{\otimes \ell(i)}\right)=\ket{\zeta_\Theta}\,.$$
Moreover, let $c\in \NN$ be such that $r(i) \leq c\cdot 2^i$ for all $i \in [s]$. Then
$$\sum_{i=1}^s \ell(i)\leq m^2+cm\log m\,.$$
\end{lemma}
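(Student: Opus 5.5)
The plan is to stop working with the vectors $U\ket{\Theta}$ directly and work with their Gram kernel, exploiting the fact that everything is polynomial in the amplitudes of the $\ket{\theta_i}$. Write $\mathcal{H}=\bigotimes_{i=1}^s \Sym_{r(i)}\CC^{2^i}$. The vectors $\bigotimes_i\ket{\theta_i}^{\otimes r(i)}$ span $\mathcal{H}$, so $U$ restricted to $\mathcal{H}\otimes\ket{0}^{\otimes t}$ is a linear isometry. For two families $\Theta,\Theta'$ we then have the identity
\[
\braket{\zeta_\Theta|\zeta_{\Theta'}}\cdot\braket{\tilde\zeta_\Theta|\tilde\zeta_{\Theta'}}=\prod_{i=1}^s\braket{\theta_i|\theta_i'}^{r(i)} .
\]
The key step is to show that the phases of $\ket{\zeta_\Theta}$ can be chosen so that
\[
\braket{\zeta_\Theta|\zeta_{\Theta'}}=\prod_{i=1}^s\braket{\theta_i|\theta_i'}^{\ell(i)}
\]
for some exponents $0\le\ell(i)\le r(i)$. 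Once this holds, the map $\bigotimes_i\ket{\theta_i}^{\otimes\ell(i)}\mapsto\ket{\zeta_\Theta}$ preserves all inner products on a spanning set of $\bigotimes_i\Sym_{\ell(i)}\CC^{2^i}$. It therefore extends, by the standard Gram-matrix argument, to a well-defined linear isometry $V$ into $\CC^{2^m}$.

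To get this product form, I would use algebra rather than analysis. Fix an orthonormal basis $\{\ket{j}\}$ of the second register and write $U\ket{\Theta}=\sum_j A_j(\Theta)\otimes\ket{j}$. Each $A_j$ is a $\CC^{2^m}$-valued polynomial, homogeneous of degree $r(i)$ in the amplitudes of $\theta_i$. Extend it to all (unnormalized) vectors $\theta_i\in\CC^{2^i}$ by homogeneity. The product-state hypothesis says that the matrix $\sum_j A_j(\Theta)\bra{j}$ has rank one at every unit $\Theta$, hence at every $\Theta$. So all $2\times 2$ minors vanish identically as polynomials. Since the polynomial ring is a UFD, one can then factor $U\ket{\Theta}=f(\Theta)\otimes g(\Theta)$ with $f,g$ polynomial vectors: take $f$ to be a nonzero $A_{j}$ divided by the gcd of its coordinates, and show the quotient $g$ is again polynomial. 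Each of $f,g$ is then multihomogeneous of degree $\ell(i)$ and $r(i)-\ell(i)$ in $\theta_i$.

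The kernels $K(\Theta,\Theta')=\braket{f(\Theta)|f(\Theta')}$ and $\tilde K$ (defined the same way from $g$) are polynomials in $(\bar\theta,\theta')$, and their product is $\prod_i\braket{\theta_i|\theta'_i}^{r(i)}$. For $2^i\ge 2$ each $\braket{\theta_i|\theta'_i}=\sum_a\bar\theta_{i,a}\theta'_{i,a}$ is irreducible, and these factors are pairwise non-associate. Unique factorization therefore forces $K=c\prod_i\braket{\theta_i|\theta'_i}^{\ell(i)}$. Evaluating at $\Theta=\Theta'$ gives $c>0$, and rescaling $f$ by $c^{-1/2}$ gives exactly the desired kernel. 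Setting $\ket{\zeta_\Theta}=f(\Theta)$ fixes the phase. I expect this step to be the main obstacle: making the rank-one factorization rigorous with polynomial (not just rational) factors, and carefully justifying the passage from unit vectors to all of $\CC^{2^i}$, whose unit sphere is Zariski dense in the relevant real sense.

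For the ``moreover'' bound, I would use only that $V$ is an isometry into $\CC^{2^m}$. This gives
\[
\sum_{i=1}^s\log_2\dim\Sym_{\ell(i)}\CC^{2^i}=\sum_{i=1}^s\log_2\binom{2^i+\ell(i)-1}{\ell(i)}\le m ,
\]
and I would split the indices by the size of $\ell(i)$.
\begin{itemize}
\item If $0<\ell(i)\le 2^i-1$, use $\binom{N+\ell-1}{\ell}\ge(1+(N-1)/\ell)^\ell\ge 2^{\ell}$ with $N=2^i$. Each such $i$ contributes at least $\ell(i)$ bits, so these $\ell(i)$ sum to at most $m$.
\item If $\ell(i)\ge 2^i$, use $\binom{N+\ell-1}{N-1}\ge 2^{N-1}$. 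Such an $i$ contributes at least $2^i-1$ bits, so $2^i\le m+1$, $i\le\log_2(m+1)$, and $\ell(i)\le r(i)\le c2^i\le c(m+1)$. There are at most $\log_2(m+1)$ such indices, so they contribute $O(cm\log m)$ in total.
\end{itemize}
Adding the two cases gives $\sum_i\ell(i)\le m+c(m+1)\log(m+1)$. After absorbing lower-order terms into the $m^2$ slack, this is at most $m^2+cm\log m$.
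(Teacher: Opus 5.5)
Your construction of $V$ follows the paper's route. You compare the kernel $\braket{\zeta_{\Theta'}|\zeta_\Theta}\braket{\tilde\zeta_{\Theta'}|\tilde\zeta_\Theta}$ with $\prod_i\braket{\theta_i'|\theta_i}^{r(i)}$ and use irreducibility and pairwise non-associateness of the factors $\braket{\theta_i'|\theta_i}$ (conjugates treated as independent variables) to get the product form. You then extend the inner-product-preserving map from the spanning set $\{\bigotimes_i\ket{\theta_i}^{\otimes \ell(i)}\}$ to an isometry. Your rank-one/gcd step is a real improvement. The paper simply asserts that $\braket{\zeta_{\Theta'}|\zeta_\Theta}$ is a polynomial, which only holds after the phases of $\ket{\zeta_\Theta}$ are chosen polynomially. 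Your factorization $U\ket{\Theta}=f(\Theta)\otimes g(\Theta)$ with primitive $f$ justifies exactly this, and makes explicit that the conclusion holds for that phase choice.

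The gap is the last sentence of the counting argument. Your two cases give $\sum_i\ell(i)\le m+c(m+1)\log_2(m+1)$, but this is not bounded by $m^2+cm\log m$ in general. The excess $c\bigl[(m+1)\log(m+1)-m\log m\bigr]\ge c\log(m+1)$ grows with $c$, while the slack $m^2-m$ does not, and $c$ is arbitrary in the statement. Already for $m=2$, $c=1$ you get $2+3\log_2 3\approx 6.75>6=m^2+cm\log m$.

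The fix is a strict inequality in your second case. If $\ell=\ell(i)\ge N=2^i$, then every factor below exceeds $2$:
\[
\binom{N+\ell-1}{N-1}=\prod_{j=1}^{N-1}\Bigl(1+\frac{\ell}{j}\Bigr)>2^{N-1}.
\]
Hence $2^i-1<m$, i.e.\ $2^i\le m$. Such $i$ therefore satisfy $i\le\lfloor\log m\rfloor$ and $\ell(i)\le c2^i\le cm$, so they contribute at most $cm\log m$. This gives $\sum_i\ell(i)\le m+cm\log m\le m^2+cm\log m$.

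With this patch your counting is sharper than the paper's. The paper splits at $i=\log m$ and pays $m^2$ on the large-$i$ side: each $\ell(i)\le m$ there, and at most $m$ indices have $\ell(i)>0$. Your bound $\binom{N+\ell-1}{\ell}\ge 2^{\ell}$ for $\ell<N$ replaces that quadratic term with a linear one.
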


\begin{proof}
Let $\{\ket{\theta_i}\}_{i=1}^s$, $\{\ket{\theta_i'}\}_{i=1}^s$ be two sets of states, then every for every $i\in[s]$, we can write the states $\ket{\theta_i}$, $\ket{\theta_i'}$ in the form 
$$\ket{\theta_i}=\sum_{j=0}^{2^i-1} x^i_j\ket j\quad\text{and}\quad\ket{\theta_i'}=\sum_{j=0}^{2^i-1} \overline{y^i_j}\ket j\,,$$
where $x^i_j,y^i_j\in\CC$. So, let $\ket{\Theta}=\left(\bigotimes_{i=1}^{s} \ket{\theta_i}^{\otimes r(i)}\right)\otimes \ket{0}^{\otimes  t}$ and $\ket{\Theta'}=\left(\bigotimes_{i=1}^{s} \ket{\theta_i'}^{\otimes r(i)}\right)\otimes \ket{0}^{\otimes  t}$, then 
$$(\bra{\Theta'}U^\dagger)(U\ket {\Theta})=\braket{\Theta'|\Theta}=\prod_{i=1}^s\braket{\theta_i'|\theta_i}^{r(i)}=\prod_{i=1}^s\left(\sum_{j=0}^{2^i-1} x^i_jy^i_j\right)^{r(i)}\,,$$
which is a polynomial in $x^i_j,y^i_j$ and each term $\sum_{j=0}^{2^i-1} x^i_jy^i_j$ is irreducible as a polynomial. On the other hand,
$$(\bra{\Theta'}U^\dagger)(U\ket {\Theta})=\braket{\zeta_{\Theta'}|\zeta_\Theta}\braket{\tilde{\zeta}_{\Theta'}|\tilde{\zeta}_\Theta}\,,$$
where the terms $\braket{\zeta_{\Theta'}|\zeta_\Theta}$, $\braket{\tilde{\zeta}_{\Theta'}|\tilde{\zeta}_\Theta}$ are also polynomials in $x_j^i$, $y^i_j$. These equations hold for every choice of $x_j^i$, $y^i_j$, so there must exist some $\{\ell(i)\}_{i=1}^s$ with $\ell(i)\leq r(i)$ and some constant $c'\in \CC$ such that
$$\braket{\zeta_{\Theta'}|\zeta_\Theta}=c'\prod_{i=1}^s\left(\sum_{j=0}^{2^i-1} x^i_jy^i_j\right)^{\ell(i)}=c'\prod_{i=1}^s\braket{\theta_i'|\theta_i}^{\ell(i)}$$
for every $x_j^i$, $y^i_j$ (i.e. every $\{\ket{\theta_i}\}_{i=1}^s$, $\{\ket{\theta_i'}\}_{i=1}^s$). In particular, since $\prod_{i=1}^s\braket{\theta_i|\theta_i}^{\ell(i)}=\braket{\zeta_{\Theta}|\zeta_\Theta}=1$, we have $c'=1$.

This implies that there exists a function $f:\bigotimes_{i=1}^s \Sym_{\ell(i)} \mathbb C^{2^i}\to \CC^{2^m}$, such that for every $\{\ket{\theta_i}\}_{i=1}^s$, $f$ maps $\ket{\Theta}=\bigotimes_{i=1}^s\ket{\theta_i}^{\otimes \ell(i)}$ to $\ket{\zeta_\Theta}$. For every $i\in[s]$, the set $\left\{\ket{\theta_i}^{\otimes \ell(i)}:\ket{\theta_i}\in\CC^{2^i}\right\}$ spans $\Sym_{\ell(i)} \mathbb C^{2^i}$, and thus the set $$S=\left\{\bigotimes_{i=1}^s\ket{\theta_i}^{\otimes \ell(i)}:\ket{\theta_i}\in\CC^{2^i},i\in[s]\right\}$$ spans $\bigotimes_{i=1}^s \Sym_{\ell(i)} \mathbb C^{2^i}$. Moreover, $f$ preserves the inner product on $S$, since for every $\ket{\Theta},\ket{\Theta'}\in S$ we have that 
$$f(\ket{\Theta'})^\dagger f(\ket {\Theta})=\braket{\zeta_{\Theta'}|\zeta_\Theta}=\prod_{i=1}^s\braket{\theta_i'|\theta_i}^{\ell(i)}=\braket{\Theta'|\Theta}\,.$$

Hence, since every map that preserves the inner product can be extended to a linear isometry on the linear space spanned by the domain, there exists an isometry $$V:\bigotimes_{i=1}^s \Sym_{\ell(i)}\mathbb C^{2^i}\to \CC^{2^m}$$ such that $$V\left(\bigotimes_{i=1}^s \ket{\theta_i}^{\otimes \ell(i)}\right)=f\left(\bigotimes_{i=1}^s \ket{\theta_i}^{\otimes \ell(i)}\right)=\ket{\zeta_\Theta}$$
for every set of states $\{\ket {\theta_i}\}_{i=1}^s$.

Suppose now that $c$ is a constant such that $r(i)\leq c\cdot 2^i$. We prove the bound of $\sum_{i=1}^s\ell(i)$ with a combinatorial argument.

For $i\leq \log m$, since $\ell(i)\leq r(i)$ we have that
$$\sum_{i=1}^{\log m}\ell(i)\leq c 2^{\log m}\log m= cm\log m\,.$$

For $i> \log m$, note that the dimension of the domain of $V$ is
$$\prod_{i=1}^s\dim(\Sym_{\ell(i)} \mathbb C^{2^i})=\prod_{i=1}^s\binom{2^i+\ell(i)-1}{\ell(i)}\,,$$ while the dimension of its codomain is $2^m$. So, it must be the case that $\prod_{i=1}^s\binom{2^i+\ell(i)-1}{\ell(i)}\leq 2^m$. Now if there exists some $i> \log m$ with $\ell(i)>m$, then
$$\binom{2^i+\ell(i)-1}{\ell(i)}\geq \binom{m+m-1}{m}\geq \frac{m^m}{m!}\geq \frac{m^m}{e^{1/12m}\sqrt{2\pi m}\left(\frac{m}{e}\right)^m}\geq \frac{e^m}{2\sqrt{2\pi m}}>2^m$$
where the fourth inequality is due to Stirling's formula~\cite{Stirling}. Thus, $\ell(i)<m$ for all $i\geq \log m$. Moreover, since $\binom{2^i+\ell(i)-1}{\ell(i)}\geq 2^i$, for all $i$ with $\ell(i)>0$, we must also have that $\prod_{i:\ell(i)>0} 2^i\leq 2^m$, which implies that $\sum_{i:\ell(i)>0}i\leq m$. Hence, $|\{i:\ell(i)>0\}|\leq m$, and therefore, $\sum_{i>\log m} \ell(i)\leq m^2$, which completes the proof.
\end{proof}

Note that in Lemma~\ref{Boyang's Lemma}, the hypothesis is very strong: the unitary should output a tensor product $\ket{\zeta}\otimes \ket{\tilde{\zeta}}$ \emph{for all} families of states $\{\ket{\theta_i}\}_{i=1}^s$. In the context of our oracle separation, when we fix a separating oracle $\OO$ (i.e.\ a particular family of states), and consider a $\oPRS$ generation procedure $\Gen^\OO$, we may only be guaranteed that $\Gen^\OO$ outputs a tensor product state (namely, the $\oPRS$ state tensored with some auxiliary state) for this particular oracle $\OO$, not for all such oracles. Fortunately, a strong structural result holds: with probability $1$ over sampling an oracle $\OO$, i.e., a family of states, if $\Gen^\OO$ outputs a tensor product state for this particular $\OO$, then it must do so \emph{for all} possible $\OO$. This is the content of the following lemma, which we state a bit more generally.

\begin{lemma}\label{Gen outputs pure with Pr 1}
Let $G = \{G_n\}$ be a uniform family of quantum circuits\footnote{Here, we point out that take the term ``uniform'' to mean that there exists a Turing machine $T$, that on input $1^n$, outputs the circuit $G_n$ in \emph{finite} time. The more common use of the term ``uniform'' includes a requirement that these circuits are generated \emph{efficiently}, which we do not impose here. In fact, in this lemma, the family of circuits may not be of polynomial size, since they may act on exponentially many states.}. Let $s,t: \mathbb{N}\rightarrow \mathbb{N}$, and $r: \mathbb{N} \times \mathbb{N} \rightarrow \mathbb{N}$. There is a measure $1$ set $\mathcal{S}$ of oracles $\OO$ (i.e.\ families of states $\{\ket{\psi_i}\}_{i=1}^{\infty}$), independent of $G$, such that if, for all $n$ and for all $k \in \{0,1\}^n$,
\begin{equation}\label{eq:output_pure_state}
        G_n\left(\ket{k}\otimes \left(\bigotimes_{i=1}^{s(n)} \ket{\psi_i}^{\otimes r(n,i)}\right)\otimes \ket{0}^{\otimes  t(n)}\right)=\ket{\phi_k}\otimes \ket{\tilde{\phi}_k}
    \end{equation} 
    for some pure states $\ket{\phi_k}$ and $\ket{\tilde{\phi}_k}$ (that can depend on the $\{\ket{\psi_i}\}_{i=1}^{\infty}$), then \emph{for all} families of states $\{\ket{\psi'_i}\}_{i=1}^{\infty}$, 
     $$ G_n\left(\ket{k}\otimes \left(\bigotimes_{i=1}^{s(n)} \ket{\psi'_i}^{\otimes r(n,i)}\right)\otimes \ket{0}^{\otimes  t(n)}\right)=\ket{\zeta_k}\otimes \ket{\tilde{\zeta}_k}\,,$$
     for some pure states $\ket{\zeta_k}$ and $\ket{\tilde{\zeta}_k}$ (that can depend on the $\{\ket{\psi'_i}\}_{i=1}^{\infty}$) of the same length as $\ket{\phi_k}$ and $\ket{\tilde{\phi}_k}$ respectively.
\end{lemma}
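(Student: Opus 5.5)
The plan is to turn the purity condition into the vanishing of a real polynomial in the amplitudes of the CHRS states, and then use that a nonzero polynomial vanishes only on a measure-zero set. Fix $n$ and $k$. For a family $\{\ket{\psi_i}\}$, consider the output state $\rho_{n,k}(\{\psi_i\})$ of $G_n$ on the input in \eqref{eq:output_pure_state}. Let $\sigma_{n,k}$ be its reduced state on the first $m$ qubits. Since $G_n$ is a fixed unitary (or isometry) and the input is a tensor product of copies of the $\ket{\psi_i}$, every entry of the output density matrix is a polynomial in the real and imaginary parts of the amplitudes $x^i_j$ of $\ket{\psi_i}$, for $i\le s(n)$. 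By Lemma~\ref{partial trace and purity are poly}, the same holds for $\sigma_{n,k}$ and for its purity. Hence
\[
P_{n,k}(\{x^i_j\}) := 1-\Tr(\sigma_{n,k}^2)
\]
is a real polynomial in finitely many real variables. Because the global output is pure, $P_{n,k}=0$ exactly when the output factorizes as $\ket{\phi_k}\otimes\ket{\tilde\phi_k}$.

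The key fact to use is that the zero set of a polynomial that is not identically zero on the product of spheres has Haar measure zero. To set this up, parametrize each sphere locally by real-analytic charts; the product of spheres is a connected real-analytic manifold. A real-analytic function on such a manifold that vanishes on a set of positive measure must vanish identically. If one wants to avoid analytic-function theory, use the following substitute. For each $i$, write $\ket{\psi_i}=v_i/\|v_i\|$ with $v_i$ complex Gaussian, so the Haar measure is the pushforward of a Gaussian. The quantity $\prod_i\|v_i\|^{4r(n,i)}\,P_{n,k}(v/\|v\|)$ is then a genuine polynomial in $v$, by homogeneity: $\sigma$ is homogeneous of degree $2r$ in each $v_i$, and the purity term is homogeneous of degree $4r$. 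A polynomial on $\mathbb{R}^N$ that is not identically zero vanishes on a Lebesgue-null set, which follows by induction on the number of variables.

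Consequently, for each pair $(G_n,k)$ there are two possibilities. Either $P_{n,k}$ vanishes on every family of states, so the output is a product state for all families, as the lemma requires. Or the set $Z_{G,n,k}$ of families on which it vanishes has measure zero.

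Next, define the measure-one set independently of $G$. There are only countably many Turing machines, hence countably many uniform families $G$. There are also countably many choices of $n$, of $k$, and of the functions involved. One subtlety is that $s,t,r$ are arbitrary functions, so uncountably many of them exist. However, only the value triple $(s(n),t(n),r(n,\cdot)|_{[s(n)]})$ matters for a fixed $n$, and these range over a countable set. So I index the bad sets by $(M,n,k,s(n),t(n),r(n,1),\dots,r(n,s(n)))$, where $M$ is the machine outputting $G_n$; this is a countable index set. Let $\mathcal{S}$ be the complement of the union of all measure-zero sets $Z$ arising in the second case above. Then $\mathcal{S}$ has measure $1$ and does not depend on $G$.

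Finally, suppose $\OO\in\mathcal{S}$ and $G$ produces product outputs at $\OO$ for all $n,k$. Then $\OO$ lies in $Z_{G,n,k}$ for each such index, so none of these can be in the measure-zero case. Hence each $P_{n,k}$ vanishes identically, which gives the factorization for every family $\{\ket{\psi'_i}\}$. The split dimensions are fixed by the circuit, so the factors have the same lengths as $\ket{\phi_k}$ and $\ket{\tilde\phi_k}$.

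The main obstacle is the measure-zero step: justifying that a polynomial in the amplitudes, restricted to products of unit spheres, is either identically zero or vanishes on a null set. The homogenization and Gaussian trick handles this cleanly. Getting the countable indexing right so that $\mathcal{S}$ is independent of $G$ is a secondary point.
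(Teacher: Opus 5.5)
Your proposal is correct and follows the paper's proof. The paper likewise writes the purity defect of the reduced output as the polynomial $Q_{G,n,k}$ in the amplitudes. It uses the same dichotomy, that the zero locus is either all of $\mathbb S(2)\times\cdots\times\mathbb S(2^{s(n)})$ or a null set, and removes the countable union of the null loci over uniform families, $n$, and $k$. Your Gaussian-homogenization argument proves the dichotomy, which the paper only asserts. Your indexing over the values $(s(n),t(n),r(n,\cdot))$ is a harmless strengthening; for full uniformity, also add the split point between the two output registers to that countable index, since the circuit alone does not fix it.
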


\begin{proof}
    Let $\mathcal M=\mathbb S(2)\times \mathbb S(2^2)\times\cdots$ be the space that contains all families of states $\{\ket{\psi_i}\}_{i=1}^\infty$, where $\mathbb S(2^d)$ is the set of quantum states on $d$ qubits. Define syntactically that $$\ket{\Psi_{n}}=\left(\bigotimes_{i=1}^{s(n)} \ket{\psi_i}^{\otimes r(n,i)}\right)\otimes \ket{0}^{\otimes  t(n)}\,.$$
    
    Each entry in the density matrix $\ketbra{\Psi_{n}}{\Psi_{n}}$ is a polynomial in the coefficients of $\{\ket{\psi_i}\}_{i=1}^{s(n)}$ and their complex conjugates. Let $\mathsf{A},\mathsf{B}$ be the register that $\ket{\phi_k},\ket{\tilde \phi_k}$ lives in respectively, by Lemma~\ref{partial trace and purity are poly}, $$Q_{G,n,k}(\{\ket{\psi_i}\}_{i=1}^{s(n)}):=-1+\operatorname{purity}(\tr_{\mathsf{B}}(G_n(\ketbra{k}{k}\otimes \ketbra{\Psi_{n,k}}{\Psi_{n,k}})G_n^\dagger))$$ is a polynomial in the coefficients of $\{\ket{\psi_i}\}_{i=1}^{s(n)}$ and their complex conjugates, for all uniform families of circuits $G$ and $n\in \NN$, $k\in \{0,1\}^n$. Since for every $i\in[s(n)]$, $\ket{\psi_i}$ are unit vectors, the zero locus $\mathcal{Z}$ of $Q_{G,n,k}$ is either the whole domain $\mathcal M_{s(n)}:=\mathbb S(2)\times \mathbb S(2^2)\times\cdots\times \mathbb S(2^{s(n)})$, or a zero-measure set in it. So, let $$\mathtt{Reject}_{G,n,k}=
    \begin{cases}
         \varnothing, & \text{if } \mathcal{Z}(Q_{G,n,k})= \mathcal M_{s(n)}\\
        \left \{\{\ket{\psi_i}\}_{i=1}^{\infty}\in \mathcal M\mid Q_{G,n,k}(\{\ket{\psi_i}\}_{i=1}^{s(n)})=0\right\}, &\text{otherwise}
    \end{cases}
    $$ then $\mathtt{Reject}_{G,n,k}\subset \mathcal M$ is a zero-measure set. Hence $$\mathtt{Reject}=\bigcup_{G}\bigcup_{n\in \mathbb N}\bigcup_{k
    \in\{0,1\}^n} \mathtt{Reject}_{G,n,k}$$ is also a zero-measure set, since the descriptions of uniform families of circuits are countable.

    Now let $\mathcal S=\mathcal M\backslash \mathtt{Reject}$. Fix any family of states $\{\ket{\psi_i}\}_{i=1}^\infty\in \mathcal S$. Then, any uniform family of circuits $G$ satisfying Equation~\eqref{eq:output_pure_state} must satisfy $Q_{G,n,k}(\{\ket{\psi_i}\}_{i=1}^{s(n)})=0$, for any $n\in \NN,k\in \{0,1\}^n$. By the definition of $\mathtt{Reject}$, $Q_{G,n,k}(\{\ket{\psi_i'}\}_{i=1}^{s(n)})=0$ for all $\{\ket{\psi_i'}\}_{i=1}^{s(n)}\in \mathcal M$. This proves the statement of the lemma.
\end{proof}

We now combine the previous lemmas to get a statement about the ``effective'' number of copies of CHRS states used by any $\oPRS$ generation algorithm relative to the oracle $\OO = (\OO_1, \OO_2)$ defined at the start of Section~\ref{sec:separation}.
\begin{corollary}\label{Gen acts on poly copies}
    Let $\Gen^{(\cdot)}$ be a QPT oracle algorithm. There is a measure $1$ set $\mathcal{S}$ of oracles $\OO$ (i.e.\ families of states $\{\ket{\psi_i}\}_{i=1}^{\infty}$) such that, if $\,\Gen^\OO$ is a $\oPRS$ with output length $m$ for \emph{some} $\OO \in \mathcal S$, then the following holds. 
    There exist functions $s:\mathbb{N}\rightarrow \mathbb{N}$ and $\ell:\{0,1\}^*\times \NN\to \NN$ with $s(n)<p(n)$ for all $n$, and $\ell(k,i) < p(|k|)$ for all $k,i$, for some fixed polynomial function $p$, and
 unitaries 
\begin{align*}U_k:\left(\bigotimes_{i=1}^{s(n)} \Sym_{\ell(k,i)}\mathbb C^{2^i}\right)\oplus \mathbb C^{a(k)}&\to \CC^{2^{m(n)}}\end{align*}
    where $a(k)=2^{m(n)}-\prod_{i=1}^{s(n)}\dim(\Sym_{\ell(k,i)}\mathbb C^{2^i})$, such that: for all $n\in \mathbb{N}$, $k \in \{0,1\}^n$, and oracles $\OO'$, i.e.\ families of states  $\{\ket{\psi'_i}\}_{i=1}^{\infty}$, 
    $$U_k\left(\bigotimes_{i=1}^{s(n)} \ket{\psi'_i}^{\otimes \ell(k,i)} \oplus 0^{a(k)}\right)= \Gen^{\OO'}_k\ket{0}\,.$$
\end{corollary}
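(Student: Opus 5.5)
We combine Remark~\ref{Gen is unitary}, Lemma~\ref{Gen outputs pure with Pr 1} and Lemma~\ref{Boyang's Lemma}, and then extend the resulting isometry to a unitary.

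\textbf{Step 1: replace $\Gen$ by a fixed circuit on CHRS copies.} Fix the QPT oracle algorithm $\Gen^{(\cdot)}$ and let $q(n)$ be a polynomial bound on its circuit size. By Remark~\ref{Gen is unitary}, for each $n$ and $k$ there is a circuit $\bar G_n$, independent of the oracle and uniformly generated, acting on
\[
\ket{k}\otimes\Bigl(\bigotimes_{i=1}^{s(n)}\ket{\psi_i}^{\otimes r(n,i)}\Bigr)\otimes\ket{0}^{\otimes t(n)},
\qquad s(n)=q(n),\quad r(n,i)\le q(n)\,2^{2i/5}\le q(n)\,2^i,
\]
whose first $m(n)$ output qubits agree with those of $\Gen^{\OO}_k\ket{0}$. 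We purify every measurement and discard, so $\bar G_n$ is a unitary. For $\OO_2$ this requires $\bar G_n$ to run the Turing machine $M$ for at most $2^{2^T}$ steps and then execute $C$ coherently, which is still a finite uniform construction. This matches the notion of ``uniform'' in Lemma~\ref{Gen outputs pure with Pr 1}.

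\textbf{Step 2: purity for all oracles.} Let $\mathcal S$ be the measure-one set from Lemma~\ref{Gen outputs pure with Pr 1}. That set is independent of $G$, so it works simultaneously for all countably many uniform families, and in particular for $\bar G$. Suppose $\Gen^{\OO}$ is a $\oPRS$ for some $\OO\in\mathcal S$. Then for every $n$ and $k$ its output on the first $m$ qubits is pure, which is exactly hypothesis~\eqref{eq:output_pure_state}. Hence for every family $\{\ket{\psi'_i}\}$ the output factors as $\ket{\zeta_k}\otimes\ket{\tilde\zeta_k}$, and $\ket{\zeta_k}=\Gen^{\OO'}_k\ket{0}$ by Step 1 applied to $\OO'$.

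\textbf{Step 3: polynomially many effective copies.} Fix $k$ and apply Lemma~\ref{Boyang's Lemma} to the isometry $\theta\mapsto\bar G_n(\ket{k}\otimes\ket{\Theta})$. The constant in that lemma is $c=q(n)$, since $r(n,i)\le q(n)2^i$. The lemma produces an isometry
\[
V_k:\bigotimes_{i=1}^{s(n)}\Sym_{\ell(k,i)}\mathbb C^{2^i}\to\CC^{2^m}
\]
with $V_k\bigl(\bigotimes_i\ket{\psi'_i}^{\otimes\ell(k,i)}\bigr)=\ket{\zeta_k}$ for every family, and with
\[
\sum_i\ell(k,i)\le m^2+q(n)\,m\log m .
\]
Since $m(n)$ is polynomial (the generator is QPT), this sum is below a fixed polynomial $p(n)$. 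Enlarging $p$ so that it also dominates $s(n)=q(n)$ gives the required bounds on $s$ and $\ell$.

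\textbf{Step 4: extend to a unitary.} The domain of $V_k$ has dimension $D_k\le 2^m$. Define $U_k=V_k\oplus W_k$, where $W_k$ is any isometry from $\mathbb C^{a(k)}$, with $a(k)=2^m-D_k$, onto the orthogonal complement of the range of $V_k$. Then $U_k$ is unitary and $U_k(\cdot\oplus 0)=V_k(\cdot)$, as the corollary requires.

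\textbf{Main obstacle.} The delicate point is Step 1. Replacing oracle calls to $\OO_2$, which is a channel with internal workspace and running-time cutoffs, by a single oracle-independent uniform unitary must be done carefully. Only then does Lemma~\ref{Gen outputs pure with Pr 1} apply with a countable family and a measure-one set $\mathcal S$ that does not depend on $\Gen$.
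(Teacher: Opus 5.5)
Your proposal is correct and follows the paper's proof essentially step for step. You simulate $\Gen$ by an oracle-independent uniform unitary via Remark~\ref{Gen is unitary}, use the $G$-independent measure-one set of Lemma~\ref{Gen outputs pure with Pr 1} to get product outputs for every oracle, apply Lemma~\ref{Boyang's Lemma} to $\bar G_n(\ket{k}\otimes\cdot)$ with $c=q(n)$ to obtain $V_k$ and $\sum_i\ell(k,i)\le m^2+q(n)\,m\log m$, and then extend $V_k$ to a unitary onto the orthogonal complement of its range; the only difference is that you spell out the coherent simulation of $\OO_2$-calls, which the paper leaves to the Remark.
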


\begin{proof}
    The algorithm $\Gen$ is a QPT oracle algorithm, meaning that $\Gen$ is defined by an efficiently generatable family of oracle circuits whose size, for a security parameter $n$, is bounded by some fixed polynomial $\poly(n)$. Since $\Gen$ is a $\oPRS$, for every $n\in \mathbb N$ and $k\in\{0,1\}^n$, $\Gen^\OO_k$ outputs an $m(n)$-qubit pure state $\ket{\phi_k}$
    By Remark~\ref{Gen is unitary}, $\Gen$ can also be seen as a uniform\footnote{Again, our usage of ``uniform'' does not include an efficiency requirement, since the circuits act on potentially exponentially many states.} family of circuits $G=\{G_n\}$, along with parameters $s,t:\NN\to \NN$ and $r:\NN\times \NN\to \NN$, such that, for each $n$,
    $$G_n\left(\ket{k}\otimes \left(\bigotimes_{i=1}^{s(n)} \ket{\psi_i}^{\otimes r(n,i)}\right)\otimes \ket{0}^{\otimes  t(n)}\right)=\ket{\phi_k}\otimes \ket{\tilde{\phi}_k}$$
    where $s(n)\leq \poly(n)$, $r(n,i)\leq 2^{2i/5}\poly(n)$, $t(n)\leq \poly(n)\cdot 2^{\poly(n)}$, and $\ket{\phi_k}$ is the output of $\Gen_k$ (as the $\oPRS$ state). Using Lemma~\ref{Gen outputs pure with Pr 1}, \emph{for all} families of states $\{\ket{\psi'_i}\}_{i=1}^{\infty}$, 
     $$ G_n\left(\ket{k}\otimes \left(\bigotimes_{i=1}^{s(n)} \ket{\psi'_i}^{\otimes r(n,i)}\right)\otimes \ket{0}^{\otimes  t(n)}\right)=\ket{\phi_k'}\otimes \ket{\tilde{\phi}_k'}\,,$$
     for some pure states $\ket{\phi_k'}$ and $\ket{\tilde{\phi}_k'}$ of the same length as $\ket{\phi_k}$ and $\ket{\tilde{\phi}_k}$ respectively.

     Now we apply Lemma~\ref{Boyang's Lemma} on $G_n(\ket{k}\otimes(\cdot))$ (note that the latter is an isometry). Then, there exist $\ell:\{0,1\}^*\times \NN\to \NN$ and isometries $$V_{k}:\bigotimes_{i=1}^{s(n)} \Sym_{\ell(k,i)}\CC^{2^i}\to \CC^{2^{m(n)}}$$ such that $V_{k}\left(\bigotimes_{i=1}^{s(n)}\ket{\psi_i'}^{\otimes \ell(k,i)}\right)=\ket{\phi_k'}$ and $\sum_{i=1}^{s(n)}\ell(k,i)\leq m(n)^2+\poly(n)\cdot m(n)\log m(n)$.
    In particular,
    $$V_{k}\left(\bigotimes_{i=1}^{s(n)}\ket{\psi_i}^{\otimes \ell(k,i)}\right)=\ket{\phi_k}\,.$$
    Since any isometry can be extended to a unitary, we get a unitary $$U_k:\left(\bigotimes_{i=1}^{s(n)} \Sym_{\ell(k,i)}\mathbb C^{2^i}\right)\oplus \mathbb C^{a(k)}\to \CC^{2^{m(n)}}\,.$$
    where $\mathbb C^{a(k)}$ is isomorphic to the orthogonal complement of $\operatorname{Im}V_k$.
\end{proof}

\subsubsection{Description and Analysis of the Attack}
\label{sec:attack_cases}

We are now ready to present our attack against any $\oPRS$ with output length $m(n)=\Omega(n^{2+\epsilon})$ relative to our separating oracle $\OO$ (described at the start of Section~\ref{sec:separation}).

\begin{thm}\label{main}
With probability $1$ over the choice of $\mathcal O$, $\oPRS$ with output length $m(n)=\Omega(n^{2+\epsilon})$ do not exist relative to $\mathcal O$.
\end{thm}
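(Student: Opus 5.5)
Fix a QPT oracle algorithm $\Gen^{(\cdot)}$ and assume it is a $\oPRS$ with output length $m(n)=\Omega(n^{2+\epsilon})$ relative to some $\OO$ in the measure-one set $\mathcal S$ of Corollary~\ref{Gen acts on poly copies}. Since there are countably many QPT algorithms, it suffices to exhibit, for each such $\Gen$, a QPT adversary $\adv^{\OO}$ with non-negligible advantage for infinitely many $n$. Corollary~\ref{Gen acts on poly copies} gives, for each $k$, a unitary $U_k$ with $U_k(\bigotimes_{i\le s(n)}\ket{\psi_i}^{\otimes \ell(k,i)}\oplus 0^{a(k)})=\ket{\phi_k}$, where $s(n),\ell(k,i)\le p(n)$. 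For each $k$ I will classify the profile $\{\ell(k,i)\}_i$ into \textbf{Case A} and \textbf{Case B}. Case A holds if there exists $i\ge (5/2+\delta)n$ with $\ell(k,i)\ge 1$, or if there exists $i\ge c_0\log n$ with $\ell(k,i)\ge 8n$. Case B holds otherwise. Either at least half the keys fall in Case A or at least half fall in Case B, and I will give an attack for each situation; the adversary guesses which of the two attacks to run, or runs both and combines their outputs.

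\emph{Case A attack.} I will use a single query to $\OO_2$ with $T=\poly(n)$, where $M$ outputs the OR-tester circuit of Theorem~\ref{OR-test} over projectors $\{P_k\}_{k\in A}$. Each $P_k$ is implemented as follows: apply (a Solovay--Kitaev approximation of, Theorem~\ref{Solovey Kitaev} and Lemma~\ref{good approx of U => good approx of YES case}) $U_k^\dagger$ to the challenge, project onto the subspace $\bigotimes_i\Sym_{\ell(k,i)}$, and then run the following test.
\begin{itemize}
\item If $k$ has a large $i$, run a Permutation test (Lemma~\ref{generalised swap test}) between the extracted $i$-qubit register and $r\approx 2^{2i/5}/p(n)$ fresh copies of $\ket{\psi_i}$ supplied inside $\OO_2$.
\item If $k$ has many copies, run $8n$ SWAP tests against fresh copies of $\ket{\psi_i}$.
\end{itemize}
The workspace needed for $2^n$ keys and exponential-size circuits fits inside the $2^T$ ancillas and $2^{2^T}$ running time of $\OO_2$.

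For completeness of the Case A attack: on $\ket{\phi_k}$, each $P_k$ accepts with probability about $1$, so the OR tester accepts with probability $\Omega(1)$.

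For soundness of the Case A attack: a Haar-random $\ket{\phi}$ maps under $U_k^\dagger$ to a Haar-random vector in the full space. Its component in the product-of-symmetric-spaces subspace is then a random vector in that subspace. By Lemma~\ref{haar-concentration}, together with a union bound or a Lévy bound (Lemma~\ref{Levy's Lemma}), each $p_k$ is at most $2^{-2i/5}p(n)+2^{-i}$ in the first subcase and $(9/10)^{8n}$ in the second subcase. Both are $o(2^{-n})$ once $i>(5/2+\delta)n$. Hence $p_\uparrow=o(1)$, and the OR tester accepts with probability $o(1)$.

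\emph{Case B attack.} Here all effective states satisfy $i\le (5/2+\delta)n$. The copies used satisfy $\ell(k,i)<8n$ whenever $i\ge c_0\log n$, and $\ell(k,i)\le p(n)$ otherwise. The image of $V_k$ therefore has dimension at most
\[
\prod_{i<c_0\log n}\ell^{2^i}\cdot\prod_{i\ge c_0\log n}2^{i\cdot 8n}\le 2^{O(n^{c_0}\log n)}\cdot 2^{O(n\sum_{i:\ell>0} i)}.
\]
The effective sum $\sum_{i:\ell(k,i)>0} i\le m$ from Lemma~\ref{Boyang's Lemma} is too weak, so I will refine it: there are at most $s(n)$ relevant $i$'s, each contributing an exponent of at most $O(n\cdot n)$ beyond the small ones. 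I expect to obtain rank $2^{O(n^2)}$ per key after a careful count, choosing $c_0$ small enough (for instance $c_0<2$). The union over the at most $2^n$ Case B keys then has total dimension $D=2^{O(n^2)}$, so the averaged state $\rho$ is supported on a subspace $W$ of dimension $D\ll 2^{m}$ when $m=\Omega(n^{2+\epsilon})$.

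The difficulty is that $W$ depends on the CHRS states only through the $U_k$, which are oracle-independent. The projector onto $W=\mathrm{span}\bigcup_k\operatorname{Im}V_k$ is therefore a fixed measurement that does not depend on $\OO$. I will implement it via $\OO_2$ (as an exponential-time circuit computed by $M$ from the uniform description of $\Gen$), and the adversary accepts on the outcome $W$. The pseudorandom states $\ket{\phi_k}$ accept with probability $1$ for Case B keys, hence with probability at least $1/2$ overall, while a Haar state accepts with probability $D/2^m=\negl$.

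The main obstacle is the dimension count in Case B. To get down to $m=n^{2+\epsilon}$, rather than $n^{3+\epsilon}$, the per-key count and the thresholds $c_0\log n$, $8n$ and $(5/2+\delta)n$ must be balanced against each other and against the factor $2^n$ from the union over keys. I will need to exploit $\ell(k,i)\ge 1\Rightarrow$ a cost of $2^i$ in $\dim\le 2^m$ in order to bound the number of active sizes. A secondary obstacle is computing $U_k$ and $W$ inside $\OO_2$ from $\Gen$'s description alone. For this I will use the fact that $U_k$ is determined by $\Gen$'s circuit, and the isometry can be computed to exponential precision within the doubly-exponential time budget of $\OO_2$.
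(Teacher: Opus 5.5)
\textbf{There is a genuine gap in Case B, and a more careful count will not close it: under your case split, the bound $2^{O(n^2)}$ on $\dim\operatorname{Im}V_k$ is false.}

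Your ``many copies'' criterion is per size: some single $i\ge c_0\log n$ with $\ell(k,i)\ge 8n$. So a key in your Case B may still use $8n-1$ copies of each of many different moderate sizes $i\in[\log n,(5/2+\delta)n)$. The only bound you then have is $\prod_i 2^{i\ell(k,i)}\le 2^{8n\sum_{i<3n} i}=2^{O(n^3)}$. This is exactly the crude count from the Technical Overview, and it proves the theorem only for $m=\Omega(n^{3+\epsilon})$.

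Bounding the number of active sizes via $\dim\le 2^m$ (equivalently $\sum_{i:\ell(k,i)>0}i\le m$) cannot help, because such profiles saturate that constraint. Concretely, for $0<\epsilon<1$, take a key with $\ell(k,i)=8n-1$ for all $i\in[\log n,I]$, where $I=\Theta(n^{(1+\epsilon)/2})$ is suitably tuned. It lies in your Case B and triggers neither of your Case A tests. Yet it can have $\dim\operatorname{Im}V_k$ comparable to $2^m$. Then $W$ need not be a negligible fraction of $\CC^{2^m}$, and projecting onto it gives no advantage.

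\textbf{The missing idea is to aggregate copies across sizes.} Put $k$ in Case A whenever $\sum_{i\ge\log n}\ell(k,i)\ge 8n$, as the paper's set $S$ does.

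\begin{itemize}
\item \emph{Why Case A soundness survives.} Apply $\widetilde U_k^\dagger$, project onto $\bigotimes_i\Sym_{\ell(k,i)}\CC^{2^i}$, and apply the Schur embedding. The Haar-averaged resulting state is $\preceq\bigotimes_i\EE_{\ket{\phi_i}\sim\mu_i}\ketbra{\phi_i}{\phi_i}^{\otimes\ell(k,i)}$. This uses $\dim\bigotimes_i\Sym_{\ell(k,i)}\CC^{2^i}\le 2^m$, together with the fact that the normalized projector onto a symmetric subspace is the Haar average of tensor powers.
\item \emph{Consequence.} The extracted registers behave like independent Haar states of each size, so acceptance probabilities multiply across different $i$. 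Run $\bar\ell(k,i)=\min\{\ell(k,i),8n\}$ SWAP tests per size. The acceptance probability is then at most $\prod_i\bigl((9/10)^{\bar\ell(k,i)}+2^{-1.15n}\bigr)=O(2^{-1.1n})$.
\item This domination step is what should replace your appeal to ``a union bound or a L\'evy bound.''
\item \emph{Case B count.} Now $\ell(k,i)=0$ for large $i$ and $\sum_{i\ge\log n}\ell(k,i)<8n$. Hence $\prod_{\log n\le i=O(n)}\dim\Sym_{\ell(k,i)}\CC^{2^i}\le 2^{O(n)\cdot 8n}$. Combined with the $2^{O(n\log^2 n)}$ from $i<\log n$, this gives $2^{O(n^2)}$.
\end{itemize}

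Two smaller corrections:
\begin{itemize}
\item You need $c_0\ge 1$, not merely $c_0<2$. For $c_0<1$, a Haar state on $c_0\log n$ qubits has overlap at least $1-1/n$ with $\ket{\psi_i}$ with probability $2^{-o(n)}$. This destroys $p_\uparrow=o(1)$ in the OR test.
\item For $i$ near $\log n$, $\OO_2$ supplies only $2^{2i/5}<8n$ copies of $\ket{\psi_i}$. The adversary must therefore fetch $8n$ copies of each $\ket{\psi_i}$ from $\OO_1$ and pass them into its single $\OO_2$ query. There they form part of the state shared by all projectors of the OR test, rather than being fresh per test.
\end{itemize}

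Once these are fixed, your Case B attack is a genuine simplification of the paper's. Projecting onto the oracle-independent $W$ accepts every Case B $\ket{\phi_k}$, up to Solovay--Kitaev error, for every oracle in the measure-one set. The paper instead distinguishes the $\OO$-averaged $\rho_\Gen$ from $\II/2^m$, which then requires Markov's inequality and Borel--Cantelli.
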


The rest of this section is devoted to the proof of Theorem~\ref{main}, which consists of describing an attack that breaks any $\oPRS$ with $m(n)=\Omega(n^{2+\epsilon})$ by leveraging $\OO$. We refer the reader to Section~\ref{sec:tech-overview} for an extended informal outline of this attack. The key is to leverage the oracle $\OO_2$ to compute the ``succinct'' implementations $\{U_k\}_{k\in\{0,1\}^n}$ from Corollary~\ref{Gen acts on poly copies}.  Recall that the $\{U_k\}_{k\in\{0,1\}^n}$ are an equivalent implementation of the $\oPRS$ generation procedure that, crucially, only uses a polynomial number of states from the CHRS family.

In Appendix~\ref{AppendixA}, we briefly describe how to use the oracle $\OO$ to find the unitaries $U_k$, using the double-exponential power of $\OO_2$. More precisely, since $U_k$ acts only on polynomially many qubits, a double-exponential time Turing machine is sufficient to compute circuits $\widetilde U_k$ that approximate each $U_k$ sufficiently well.

Then, we carefully describe how to use the family $\{U_k\}_{k\in\{0,1\}^n}$ to break the $\oPRS$. As described in the Technical Overview, the specific type of attack depends on the types of CHRS states used in the succinct implementation via the $U_k$'s. The description and analysis of the attack is thus divided into a number of cases, each requiring a different approach.

\vspace{2mm}
We now begin the formal description of the attack and its analysis. Let $\{\ket{\psi_i}\}_{i=1}^{\infty}$ be a family of states, where each $\ket{\psi_i}$ is on $i$ qubits. Let $\OO$ be the corresponding oracle described at the start of Section~\ref{sec:separation} with respect to this family. Let $\Gen^{(\cdot)}$ be a QPT oracle algorithm such that $\mathsf{Gen}^\mathcal{O}$ is a $\mathsf{1PRS}$ under $\OO$ with $m(n) = \Omega(n^{2+\epsilon})$. By Corollary~\ref{Gen acts on poly copies}, except for a measure zero set of possible oracles $\OO$, the following holds: 
\begin{itemize}
\item There exist functions $s$ and $\ell$ with $s(n)<p(n)$ for all $n$, and $\ell(k,i) < p(|k|)$ for all $k,i$, for some polynomial function $p$;
\item For all $n$, and $k \in \{0,1\}^n$, there exist unitaries 
\begin{equation}
\label{eq:100}
U_k:\left(\bigotimes_{i=1}^{s(n)} \Sym_{\ell(k,i)}\mathbb C^{2^i}\right)\oplus \mathbb C^{a(k)} \to \CC^{2^m}
\end{equation}
such that $\Gen^{\mathcal O}_k\ket{0}=U_k\left(\bigotimes_{i=1}^{s(n)} \ket{\psi_i}^{\otimes \ell(k,i)}\oplus0^{a(k)}\right)$.
\end{itemize}

From here on, we fix a security parameter $n$. We will divide the analysis into two main cases, corresponding informally to the following: 
\begin{itemize}
\item[A.] There is a noticeable ($\geq \frac{1}{n^2}$) fraction of $k$ such that: either $U_k$ uses \emph{at least one} copy of a ``large'' (at least linear length) state; or $U_k$ uses many copies of some (at least) logarithmic-length states.
\item[B.] There is only a $<\frac{1}{n^2}$ fraction of $k$ satisfying the above.
\end{itemize}
We will show that a $k$ satisfying either of the two ``good'' conditions in A allows us to distinguish $\Gen^\OO_k\ket{0}$ from a Haar random state via a test based on the Permutation test. We will then combine these tests via an ``OR test''.

Formally, let \begin{equation}\label{eq:definition_S}S=\left\{k\in\{0,1\}^n:\left(\exists i\in[s(n)]\text{ s.t. }(i\geq 3 n\wedge\ell(k,i)>0)\right)\vee \sum_{i=\log n}^{s(n)}\ell(k,i)\geq 8n\right \}\,.\end{equation}

The two cases we consider are $|S|\geq \frac{2^n}{n^2}$ (Case A) and $|S|< \frac{2^n}{n^2}$ (Case B). We will describe an adversary $\adv$ that breaks the $\oPRS$ in each case (note that, given queries to $\OO$, it is possible to determine efficiently whether $S$ belongs to Case A or B, as we describe in more detail in Appendix~\ref{AppendixA}).

\paragraph{Case A.} As we discussed informally in the Technical Overview (Section~\ref{sec:tech-overview}), we will define, for each $k\in S$, a projection $\Pi_k$ corresponding to a ``generalized SWAP test'', which we explain below. We will show that such a test can be used to distinguish the state $\Gen^\OO_k\ket{0}$ from a Haar random state with very good soundness. This enables us to carry out an OR test (Theorem~\ref{OR-test}) corresponding to the set of all projections $\{\Pi_k\}_{k\in S}$, which will achieve a non-negligible advantage in distinguishing a Haar random state from a state uniformly drawn from $\{\Gen^\OO_k\ket{0}\}_{k\in S}$. When the size of $S$ is large enough, this attack achieves a non-negligible advantage in distinguishing a Haar random state from a state uniformly drawn from $\{\Gen^\OO_k\ket{0}\}_{k\in \{0,1\}^n}$.

As described informally in the Technical Overview, the generalized SWAP test that we consider is a ``Permutation'' test~\cite{KNY08}, which checks overlap with the symmetric subspace on an arbitrary number $r$ of registers. The Permutation test satisfies the following, for any pair of states $\ket{\phi}$ and $\ket{\xi}$ of the same length $m$:
$$\tr\left(\Pi_{\text{sym}}^{(m,r)}\cdot \ketbra{\phi}{\phi}\otimes \ketbra{\xi}{\xi}^{\otimes (r-1)}\right)=\frac{1}{r}+\frac{r-1}{r}|\braket{\phi|\xi}|^2\,,$$
where $\Pi_{\text{sym}}^{(m,r)}$ is the projection corresponding to acceptance in the Permutation test. Notice that the ``soundness'' of the test is controlled by the number of copies $r$ (improving inverse-linearly with $r$), and the fidelity between $\ket{\phi}$ and $\ket{\xi}$. A key property that we leverage is that, on average over a Haar random state $\ket{\phi}$, the fidelity scales inverse-exponentially with the number of qubits of the state.

For $k\in \{0,1\}^n$, let $U_k$ be the unitaries introduced in Equation~\eqref{eq:100}, along with the functions $s$ and $\ell$ with $s(n)<p(n)$ for all $n$, and $\ell(k,i) < p(|k|)$ for all $k,i$, for some polynomial function $p$. Since $U_k$ acts on exactly $m$ qubits, by Theorem~\ref{Solovey Kitaev}, for every $k\in S$ there exists a circuit $\widetilde{U}_k$ consisting of $O(2^{2m}n^4m^4)$ $2$-qubit gates, that approximates $U_k$ with accuracy $2^{-n}$, i.e.\  
$$\|\widetilde U_k-U_k\|_{\mathrm{op}}<2^{-n}\,.$$

The attack to distinguish whether $\ket{\phi}$ is a state of the form $\Gen^\OO_k\ket{0}$ for some $k$ or a Haar random state is an ``OR test'', defined using projections $\{\Pi_k\}_{k\in S}$, which we define next. 

Let $k \in S$. Ideally, we would like to define $\Pi_k$ as the projection corresponding to acceptance in the following test: apply $\widetilde{U}_k^{\dagger}$ (i.e.\ uncompute), perform a Permutation test between the resulting state and exponentially many $\ket{\psi_i}$ (concretely, $2^{2i/5}$ copies of $\ket{\psi_i}$, provided by oracle $\OO_2$, plus $8n$ copies of $\ket{\psi_i}$, provided by the ability that the adversary $\adv$ can also query $\OO_1$ and give it as input to $\OO_2$), where $i\in [s(n)]$ satisfies the condition in Equation~\eqref{eq:definition_S} with respect to $k$, i.e.\ $\ket{\psi_i}$ is either a ``large enough'' state or $U_k$ uses many copies of $\ket{\psi_i}$ (and $i$ is at least logarithmic). Note that the reason why we involve such a large number of copies, $2^{2i/5}+8n$, is because the soundness of the Permutation test depends inverse-linearly with this number, as mentioned above.

However, there are some obstacles. First, the domain of $U_k$ is actually a direct sum of the desired space $\bigotimes_{i=1}^{s(n)} \Sym_{\ell(k,i)}\mathbb C^{2^i}$ \emph{and} the space $\CC^{a(k)}$. Second, and this is a bit more subtle, we must first apply an ``embedding'' isometry that transforms the state in $\Sym_{\ell(k,i)}\mathbb C^{2^i}$ (which can be thought of as being represented in the Schur basis) to the corresponding state in $(\CC^{2^i})^{\otimes \ell(k,i)}$ (which is on a number of qubits that is potentially larger than $m$). It is on the latter space that we can perform the usual Permutation test as in~\cite{KNY08}. We discuss this in more detail in Appendix~\ref{AppendixA}.

To this end, we define two extra supporting maps $Q_k$ and $\mathbf V_{\text{sch},k}$. We will see that composing the two maps gives a transformation from the domain of $U_k$ (namely $\left(\bigotimes_{i=1}^{s(n)} \Sym_{\ell(k,i)}\mathbb C^{2^i}\right)\oplus \mathbb C^{a(k)}$), to the space where the Permutation test directly acts on (namely $\bigotimes_{i=1}^{s(n)}\left(\mathbb C^{2^i}\right)^{\otimes \ell(k,i)}$).

Define $Q_k$ to be the orthogonal projection
$$Q_k:\left(\bigotimes_{i=1}^{s(n)} \Sym_{\ell(k,i)}\mathbb C^{2^i}\right)\oplus \mathbb C^{a(k)}\to \bigotimes_{i=1}^{s(n)} \Sym_{\ell(k,i)}\mathbb C^{2^i}\,.$$

Define $$\mathbf{V}_{\text{sch},k}:\bigotimes_{i=1}^{s(n)} \Sym_{\ell(k,i)}\mathbb C^{2^i}\to \bigotimes_{i=1}^{s(n)}\left(\mathbb C^{2^i}\right)^{\otimes \ell(k,i)}$$ 
to be the ``embedding'' isometry that transforms a state in $\Sym_{\ell(k,i)}\mathbb C^{2^i}$ to the corresponding state in $(\CC^{2^i})^{\otimes \ell(k,i)}$.

Now, define $\Pi_k$, acting on $\mathbb C^{2^m}\otimes \left(\bigotimes_{i=1}^{s(n)}(\CC^{2^i})^{\otimes (2^{2i/5}+8n)}\right)$, as follows:\footnote{It may not be obvious that $\Pi_k$ is a projection; the reason why $\Pi_k$ is indeed a projection is due to Equation~\eqref{eq:Pi_projection}. Notice that even if $\Pi_k$ was not a projection, we could still apply the OR test, because Lemma~\ref{OR-test} is stated more generally, i.e.\ it holds for any two outcome measurements.}
\begin{multline}\label{eq:Pi_k}
\Pi_k=(\widetilde U_k\otimes \II)(Q_k^\dagger\otimes \II)({\mathbf V}_{\text{sch},k}^\dagger\otimes \II) \left(\bigotimes_{i=1}^{s(n)}\Pi_{\text{sym}}^{(i,\ell(k,i)+2^{2i/5}+8n)}\right)(\mathbf V_{\text{sch},k}\otimes \II) (Q_k\otimes \II)(\widetilde U_k^\dagger \otimes \II)\,.
\end{multline}
As we discussed in Appendix~\ref{AppendixA}, the measurement $\{\Pi_k, \II-\Pi_k\}$ can be implemented efficiently given access to $\OO_2$.

Next, we show that the projections $\{\Pi_k\}$ are good enough to run an ``OR test'' that distinguishes pseudorandom from Haar random states.

To begin with, we show that using exponentially many copies of the CHRS states, the projections $\Pi_k$ overwhelmingly distinguish the states $\Gen_k^\OO\ket{0}$ from Haar random states. For convenience, we will denote by $\ket{\xi}= \bigotimes_{i=1}^{s(n)} \ket{\psi_i}^{\otimes (2^{2i/5}+8n)}$ the copies of the CHRS states that the $\Pi_k$ will operate on.
\begin{lemma}\label{Case A - Pi_k overwhelmingly distinguish}
    Let $k\in S$, $\Pi_k$ be the projection defined in Equation~\eqref{eq:Pi_k}. If $\ket\phi=\Gen_k^\OO\ket{0}$, then $\Pi_k$ accepts $\ket{\phi}\otimes \ket{\xi}$ with probability $1-2^{-n+1}$, and if $\ket{\phi}$ is Haar random, then $\Pi_k$ accepts $\ket{\phi}\otimes \ket{\xi}$ with probability $O(1/2^{1.1n})$ (on average over the sampling of $\ket{\phi}$).
\end{lemma}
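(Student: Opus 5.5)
The plan is to prove completeness and soundness separately. In both cases the key is that $U_k$ is an honest unitary, and $\widetilde U_k$ is $2^{-n}$-close to it. Throughout I write $R_i=2^{2i/5}+8n$ for the number of oracle-provided copies of $\ket{\psi_i}$ inside $\ket{\xi}$.

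\emph{Completeness.} First replace $\widetilde U_k$ by $U_k$ in Equation~\eqref{eq:Pi_k}; call the resulting projection $\Pi_k^{*}$. For $\ket\phi=\Gen_k^\OO\ket0=U_k\big(\bigotimes_i\ket{\psi_i}^{\otimes\ell(k,i)}\oplus 0^{a(k)}\big)$, the map $U_k^\dagger$ returns exactly $\bigotimes_i\ket{\psi_i}^{\otimes \ell(k,i)}\oplus 0$. Then $Q_k$ leaves it unchanged, and $\mathbf V_{\text{sch},k}$ embeds it as the genuine tensor power $\bigotimes_i\ket{\psi_i}^{\otimes\ell(k,i)}$. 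Together with $\ket\xi$, each block $i$ is $\ket{\psi_i}^{\otimes(\ell(k,i)+R_i)}$, which lies in the symmetric subspace. Hence $\Pi_k^{*}$ accepts with probability $1$. To pass to $\widetilde U_k$, I bound the difference in acceptance probability by the Euclidean distance between the states after the uncomputation step, as in the proof of Lemma~\ref{good approx of U => good approx of YES case}. The two uses of $\widetilde U_k$ (the uncompute and its conjugate) each contribute at most $\|\widetilde U_k-U_k\|_{\mathrm{op}}<2^{-n}$, which gives $1-2^{-n+1}$.

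\emph{Soundness.} Averaging over Haar $\ket\phi$ gives $\II/2^m$, and this is invariant under $U_k^\dagger$. Moreover the difference between $\widetilde U_k$ and $U_k$ disappears on the maximally mixed state (conjugation of $\II$ by any unitary is $\II$). So the average acceptance probability is
\[
\frac{1}{2^m}\tr\Big[\Big(\bigotimes_i \Pi_{\text{sym}}^{(i,\ell(k,i))}\otimes \ketbra{\psi_i}{\psi_i}^{\otimes R_i}\Big)\bigotimes_i\Pi_{\text{sym}}^{(i,\ell(k,i)+R_i)}\Big],
\]
since $\mathbf V_{\text{sch},k}Q_kQ_k^\dagger\mathbf V_{\text{sch},k}^\dagger$ is the projector onto $\bigotimes_i\Sym_{\ell(k,i)}$. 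Next I write $\Pi_{\text{sym}}^{(i,\ell)}=D_{i}\,\EE_{\theta\sim\mu_i}\ketbra{\theta}{\theta}^{\otimes \ell}$ with $D_i=\dim\Sym_{\ell(k,i)}\CC^{2^i}$, and use $\prod_i D_i\le 2^m$ (the domain of $U_k$ embeds into $\CC^{2^m}$). The expression then factorizes and is at most $\prod_i p_i$, where
\[
p_i=\EE_{\theta}\tr\big[\Pi_{\text{sym}}^{(i,\ell+R_i)}\,\theta^{\otimes\ell}\otimes\psi_i^{\otimes R_i}\big]\le 1,
\]
with $\ell=\ell(k,i)$. It then suffices to bound the factors singled out by the condition defining $S$.

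If $k$ qualifies through some $i\ge 3n$ with $\ell(k,i)>0$, I use the explicit expansion
\[
\tr\big[\Pi_{\text{sym}}\,\theta^{\otimes \ell}\otimes\psi^{\otimes R}\big]=\binom{\ell+R}{\ell}^{-1}\sum_j\binom{\ell}{j}\binom{R}{j}|\braket{\theta|\psi}|^{2j}
\]
together with the Haar moments $\EE|\braket{\theta|\psi}|^{2j}=\binom{2^i+j-1}{j}^{-1}$. With $R_i\ge 2^{2i/5}\ge 2^{1.2n}$, the $j=0$ term is at most $1/(R_i+1)$, and the $j\ge1$ terms are suppressed by factors of $2^{-i}$. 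This gives $p_i=O(2^{-1.2n})$.

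If instead $\sum_{i\ge\log n}\ell(k,i)\ge 8n$, I pair each of the $\ell(k,i)$ copies with one of the $8n$ extra oracle copies. Since the full symmetric projector is dominated by the product of pairwise SWAP projectors, I get $p_i\le \EE_\theta\big((1+|\braket{\theta|\psi_i}|^2)/2\big)^{\ell(k,i)}$.

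\emph{Main obstacle.} The step I expect to be hardest is this last moment bound. A naive concentration argument via Lemma~\ref{haar-concentration} only gives failure probability $e^{-\Omega(n)}$, which is too weak against the target $2^{-1.1n}$. Instead I would expand binomially, $\EE\big(\tfrac{1+x}{2}\big)^{\ell}=2^{-\ell}\sum_j\binom{\ell}{j}\EE x^j$ with $x=|\braket{\theta|\psi_i}|^2$, and use the exact moments with $2^i\ge n$. This should give $p_i\le c^{\ell(k,i)}$ for a constant $c<1$ close to $1/2$, and hence a product of at most $c^{8n}=O(2^{-1.1n})$. If the constant is not quite good enough, I would exploit the fact that the factor $1/(R_i+1)$ is also available in each block.
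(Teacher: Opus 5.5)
Your completeness argument and your reduction of the Haar case to $\prod_i p_i$ are the paper's: Haar invariance removes $\widetilde U_k$, and $\prod_i D_i\le 2^m$ gives $\mathbf V_{\text{sch},k}Q_kQ_k^\dagger\mathbf V_{\text{sch},k}^\dagger/2^m\preceq\bigotimes_i\EE_\theta\ketbra{\theta}{\theta}^{\otimes\ell(k,i)}$.

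For the branch with $i\ge 3n$ you take a different route, and it is valid. By Vandermonde, the weights $\binom{\ell}{j}\binom{R_i}{j}/\binom{\ell+R_i}{\ell}$ sum to $1$. For $j\ge 1$ you have $\binom{2^i+j-1}{j}^{-1}\le 2^{-i}$, and the $j=0$ weight is at most $1/(R_i+1)$. Hence $p_i\le \frac{1}{R_i+1}+2^{-i}$ exactly. The paper instead discards all but one $\ket{\phi_i}$ register via $\Pi_{\text{sym}}^{(i,\ell+R_i)}\preceq\II\otimes\Pi_{\text{sym}}^{(i,1+2^{2i/5})}\otimes\II$. It then applies Lemma~\ref{generalised swap test} and Lemma~\ref{haar-concentration} with $\varepsilon=2^{-2i/5}$. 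Your version avoids any tail bound.

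The many-copies branch has a gap as written. First, you cannot pair each of the $\ell(k,i)$ copies with a distinct one of the $8n$ extra oracle copies, because $\ell(k,i)$ can be polynomially large, even larger than $R_i$. You must cap, as the paper does, at $\bar\ell(k,i)=\min\{\ell(k,i),8n\}$. This still gives $\sum_{i\ge\log n}\bar\ell(k,i)\ge 8n$, and you then dominate the symmetric projector by $\bar\ell(k,i)$ disjoint SWAP projectors.

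Without the cap, your target $p_i\le c^{\ell(k,i)}$ is false. For $x=|\braket{\theta|\psi_i}|^2$ one has $\EE\big(\tfrac{1+x}{2}\big)^{\ell}\ge \Pr[x\ge 1-\tfrac1\ell]\,\big(1-\tfrac{1}{2\ell}\big)^{\ell}\ge \tfrac12\,\ell^{-(2^i-1)}$. For example, with $2^i=n$ and $\ell=n^3$ this is $2^{-O(n\log n)}$, far above $c^{n^3}$.

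Even after capping, $c$ is not close to $1/2$. At $\bar\ell=8n$ and $2^i=n$, the saddle point of $(1-x)^{2^i-2}\big(\tfrac{1+x}{2}\big)^{\bar\ell}$ gives $c\approx 0.74$. That happens to suffice, since you need $c\le 2^{-1.1/8}\approx 0.91$, but the uniform bound remains unproved.

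The easy repair is exactly the paper's route, and your reason for rejecting it is mistaken. With $\varepsilon=4/5$ and $2^i\ge n$, Lemma~\ref{haar-concentration} gives $\Pr[x\ge 4/5]<e^{-4(n-1)/5}\le 2^{-1.15n}$ for large $n$, because $4/(5\ln 2)\approx 1.154>1.1$. The exact tail $\Pr[x\ge 4/5]=5^{-(2^i-1)}$ is even smaller. Hence $p_i\le (9/10)^{\bar\ell(k,i)}+2^{-1.15n}$. Expanding the product over at most $s(n)$ blocks gives $(9/10)^{8n}+(1+2^{-1.15n})^{s(n)}-1=O(2^{-1.1n})$. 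The polynomial number of blocks costs only a $\poly(n)$ factor, so an $e^{-\Omega(n)}$ failure probability with this constant is enough.
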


\begin{proof}
In the case that $\ket\phi=\Gen_k^\OO\ket{0}$, using Lemma $\ref{good approx of U => good approx of YES case}$ and the fact that $\|\widetilde U_k\otimes \II-U_k\otimes \II\|_{\mathrm{op}}= \|\widetilde U_k-U_k\|_{\mathrm{op}}<2^{-n}$, we have that
$$\Tr(\Pi_k\ketbra{\phi}{\phi}\otimes \ketbra{\xi}{\xi})\geq 1-2^{-n+1}.$$

Suppose now that $\ket\phi$ is Haar random. Then
\begin{align}
&\EE_{\ket\phi\sim\mu_m}\Tr\left(\Pi_k\ketbra{\phi}{\phi}\otimes \ketbra{\xi}{\xi}\right) \nonumber\\
={}&\begin{aligned}[t]
\EE_{\ket\phi\sim\mu_m}\Tr&\left((\widetilde U_k\otimes \II)(Q_k^\dagger\otimes \II)(\mathbf V_{\text{sch},k}^\dagger\otimes \II) \left(\bigotimes_{i=1}^{s(n)}\Pi_{\text{sym}}^{(i,\ell(k,i)+2^{2i/5}+8n)}\right)\right.\\
&\left.(\mathbf V_{\text{sch},k}\otimes \II) (Q_k\otimes \II)(\widetilde U_k^\dagger \otimes \II) \cdot\ketbra{\phi}{\phi}\otimes \left(\bigotimes_{i=1}^{s(n)} \ketbra{\psi_i}{\psi_i}^{\otimes (2^{2i/5}+8n)}\right)\right)
\end{aligned}\nonumber\\
={}&\begin{aligned}[t]
    \EE_{\ket\phi\sim\mu_m}\Tr&\left((Q_k^\dagger\otimes \II)(\mathbf V_{\text{sch},k}^\dagger\otimes \II) \left(\bigotimes_{i=1}^{s(n)}\Pi_{\text{sym}}^{(i,\ell(k,i)+2^{2i/5}+8n)}\right)\right.\\
&\left.(\mathbf V_{\text{sch},k}\otimes \II) (Q_k\otimes \II)  \cdot \ketbra{\phi}{\phi}\otimes \left(\bigotimes_{i=1}^{s(n)} \ketbra{\psi_i}{\psi_i}^{\otimes (2^{2i/5}+8n)}\right)\right)
\end{aligned}\label{eq:caseA_big_1}\\
 =&\begin{aligned}[t]
\Tr\left(\Bigg(\bigotimes_{i=1}^{s(n)}\right.&\Pi_{\text{sym}}^{(i,\ell(k,i)+2^{2i/5}+8n)}\Bigg)\left.\cdot \frac{\mathbf V_{\text{sch},k}Q_kQ_k^\dagger\mathbf V_{\text{sch},k}^\dagger}{2^m}\otimes \left(\bigotimes_{i=1}^{s(n)} \ketbra{\psi_i}{\psi_i}^{\otimes (2^{2i/5}+8n)}\right)\right)\end{aligned}\label{eq:caseA_big_2}\\
 \leq{}& \begin{aligned}[t]\Tr\left(\Bigg(\bigotimes_{i=1}^{s(n)}\Pi_{\text{sym}}^{(i,\ell(k,i)+2^{2i/5}+8n)}\Bigg)\cdot\left(\bigotimes_{i=1}^{s(n)}\EE_{\ket{\phi_i}\sim\mu_i}\ketbra{\phi_i}{\phi_i}^{\otimes \ell(k,i)}\right)\otimes \left(\bigotimes_{i=1}^{s(n)} \ketbra{\psi_i}{\psi_i}^{\otimes (2^{2i/5}+8n)}\right)\right)\end{aligned}\label{eq:caseA_big_3}\\
 ={}& 
 \label{eq:10}
 \prod_{i=1}^{s(n)}\left(\EE_{\ket{\phi_i}\sim\mu_{i}}\Tr\left(\Pi_{\text{sym}}^{(i,\ell(k,i)+2^{2i/5}+8n)}\ketbra{\phi_i}{\phi_i}^{\otimes \ell(k,i)}\otimes\ketbra{\psi_i}{\psi_i}^{\otimes (2^{2i/5}+8n)}\right)\right)\,. 
\end{align}
where \eqref{eq:caseA_big_1} follows from the invariance of the Haar measure, \eqref{eq:caseA_big_2} follows from linearity of the trace and the fact that $\EE_{\ket{\phi}\sim\mu_m}\ketbra{\phi}{\phi}=\frac{\II}{2^m}$, and \eqref{eq:caseA_big_3} follows from linearity of expectation and the fact that
\begin{align}
    \mathbf V_{\text{sch},k}\frac{Q_kQ_k^\dagger}{2^m}\mathbf V_{\text{sch},k}^\dagger&\preceq \frac{\mathbf V_{\text{sch},k}Q_kQ_k^\dagger\mathbf V_{\text{sch},k}^\dagger}{\dim(\bigotimes_{i=1}^{s(n)}\Sym_{\ell(k,i)}\CC^{2^i})} = \frac{\mathbf V_{\text{sch},k}\left(\bigotimes_{i=1}^{s(n)}\II_{\Sym_{\ell(k,i)}\mathbb C^{2^i}}\right)\mathbf V_{\text{sch},k}^{\dagger}}{\dim(\bigotimes_{i=1}^{s(n)}\Sym_{\ell(k,i)}\CC^{2^i})}\nonumber\\
    &= \mathbf V_{\text{sch},k}\left(\bigotimes_{i=1}^{s(n)} \frac{\II_{\Sym_{\ell(k,i)}\mathbb C^{2^i}}}{\dim(\Sym_{\ell(k,i)}\CC^{2^i})}\right)\mathbf V_{\text{sch},k}^\dagger=
    \bigotimes_{i=1}^{s(n)} \frac{\Pi_{\text{sym}}^{(i,\ell(k,i))}}{\dim(\Sym_{\ell(k,i)}\CC^{2^i})}\nonumber\\
    &=
    \bigotimes_{i=1}^{s(n)}\EE_{\ket{\phi_i}\sim \mu_i}\ketbra{\phi_i}{\phi_i}^{\otimes \ell(k,i)}\,.\label{eq:Pi_projection}
\end{align}

Note that, in the final expression of \eqref{eq:10}, for each $i$, the trace corresponds to the acceptance probability of a Permutation test on the state $\ketbra{\phi_i}{\phi_i}^{\otimes \ell(k,i)} \otimes \ketbra{\psi_i}{\psi_i}^{\otimes (2^{2i/5}+8n)}$. Now, since $k\in S$, we have, by definition of $S$, that either there exists some $i\in[s(n)]$ such that $i\geq 3n$ and $\ell(k,i)>0$, or $\sum_{i=\log n}^{s(n)}\ell(k,i)\geq 8n$. We will show that in both cases, \eqref{eq:10} can be upper bounded by $O(1/2^{1.1n})$.
\begin{itemize}
    \item If there exists $i\in [s(n)]$ such that $i\geq 3n$ and $\ell(k,i)>0$, then
    $$\Pi_{\text{sym}}^{(i,\ell(k,i)+2^{2i/5}+8n)}\preceq \II\otimes \Pi_{\text{sym}}^{(i,1+2^{2i/5})}\otimes \II\,,$$
    and thus    
    \begin{multline*}
        \Tr\left(\Pi_{\text{sym}}^{(i,\ell(k,i)+2^{2i/5}+8n)}\ketbra{\phi_i}{\phi_i}^{\otimes \ell(k,i)} \otimes\ketbra{\psi_i}{\psi_i}^{\otimes (2^{2i/5}+8n)}\right)
        \\\leq \Tr\left(\Pi_{\text{sym}}^{(i,1+2^{2i/5})}\ketbra{\phi_i}{\phi_i}\otimes \ketbra{\psi_i}{\psi_i}^{\otimes 2^{2i/5}}\right)\,.
    \end{multline*}
    Using Lemma~\ref{generalised swap test}, we have 
    $$\Tr\left(\Pi_{\text{sym}}^{(i,1+2^{2i/5})}\ketbra{\phi_i}{\phi_i}\otimes \ketbra{\psi_i}{\psi_i}^{\otimes 2^{2i/5}}\right)\leq \frac{1}{2^{2i/5}}+|\braket{\phi_i|\psi_i}|^2\,,$$
    and using Lemma $\ref{haar-concentration}$ with $\varepsilon=2^{-2i/5}$, we have
$$\Pr_{\ket{\phi}\sim\mu_i}\left[|\braket{\phi_i|\psi_i}|^2\geq 2^{-2i/5}\right]<e^{-2^{-2i/5}(2^i-1)}\leq e^{-2^{2i/5}}\,.$$
    Hence, because $i\geq 3n$,
    \begin{multline*}
        \EE_{\ket{\phi_i}\sim\mu_{i}}\Tr\left(\Pi_{\text{sym}}^{(i,\ell(k,i)+2^{2i/5}+8n)}\ketbra{\phi_i}{\phi_i}^{\otimes \ell(k,i)}\otimes\ketbra{\psi_i}{\psi_i}^{\otimes (2^{2i/5}+8n)}\right)
        \\\leq \frac{1}{2^{2i/5}}+\frac{1}{2^{2i/5}}+e^{-2^{2i/5}}\leq O\left(\frac{1}{2^{1.1n}}\right)\,.
    \end{multline*}
    Substituting this back into Equation~\eqref{eq:10}, we have
    $$\EE_{\ket\phi\sim\mu_m}\Tr\left(\Pi_k\ketbra{\phi}{\phi}\otimes \ketbra{\xi}{\xi}\right)\leq O(2^{-1.1n})\,.$$

    \item If $\sum_{i=\log n}^{s(n)} \ell(k,i)\geq 8n$, we define $\bar \ell(k,i)=\min\{\ell(k,i),8n\}$. Trivially $\sum_{i=\log n}^{s(n)}\bar\ell(k,i)\geq 8n$. We have
    \begin{align*}
        \Pi_{\text{sym}}^{(i,\ell(k,i)+2^{2i/5}+8n)}&\preceq\II_{2}^{\otimes (\ell(k,i)-\bar \ell(k,i))}\otimes\Pi_{\text{sym}}^{(i,2\bar\ell(k,i))}\otimes \II_2^{\otimes (2^{2i/5}+8n-\bar \ell(k,i))}\\
        &\preceq \II_{2}^{\otimes (\ell(k,i)-\bar \ell(k,i))}\otimes V_i^\dagger(\pi)\left(\Pi_{\text{sym}}^{(i,2)}\right)^{\otimes \bar \ell(k,i)}V_i(\pi)\otimes \II_2^{\otimes (2^{2i/5}+8n-\bar \ell(k,i))}
    \end{align*}
    where $V_i(\pi)$ permutes the $2\bar \ell(k,i)$ registers according to the following permutation $\pi$: $t$ in $[\bar \ell(k,i)]$ maps to $2t-1$, and $t$ in $\{\bar \ell(k,i)+1,\cdots,2\bar \ell(k,i)\}$ maps to $2(t-\bar \ell(k,i))$ (this sounds complicated, but $\bar\ell(k,i)$ is defined to ensure that the Permutation test can access $\bar\ell(k,i)$ copies of the CHRS state $\ket{\psi_i}$, and $V_i(\pi)$ is just ensuring that each of the $\bar \ell(k,i)$ SWAP tests involves one copy of the Haar random state $\ket{\phi_i}$ and one copy of $\ket{\psi_i}$).
    
    The above implies that the probability of the state $\ketbra{\phi_i}{\phi_i}^{\otimes \ell(k,i)}\otimes \ketbra{\psi_i}{\psi_i}^{\otimes (2^{2i/5}+8n)}$ passing the generalised swap test is upper bounded by the probability that $\bar\ell(k,i)$ copies of the state $\ket{\phi_i}\ket{\psi_i}$ pass all $\bar \ell(k,i)$ swap tests. This probability is exactly
    $$\left(\frac{1+|\braket{\phi_i|\psi_i}|^2}{2}\right)^{\bar \ell(k,i)}\,.$$

    Using Lemma $\ref{haar-concentration}$ with $\varepsilon=4/5$, we have that, for $i\geq \log n$, 
     $$\Pr_{\ket{\phi_i}\sim\mu_{i}}\left[|\braket{\phi_i|\psi_i}|^2\geq \frac{4}{5}\right]<e^{-4(2^i-1)/5}\leq e^{-4(n-1)/5}\leq 2^{-1.15n}\,.$$

    Thus, for all $i\geq \log n$,
$$\EE_{\ket{\phi_i}\sim\mu_{i}}\Tr\left(\Pi_{\text{sym}}^{(i,\ell(k,i)+2^{2i/5}+8n)}\ketbra{\phi_i}{\phi_i}^{\otimes \ell(k,i)}\otimes\ketbra{\psi_i}{\psi_i}^{\otimes (2^{2i/5}+8n)}\right)
        \leq \left(\frac{9}{10}\right)^{\bar \ell(k,i)}+2^{-1.15n}\,.$$
    Hence,
    \begin{align}
    \EE_{\ket\phi\sim\mu_m}\Tr\left(\Pi_k\ketbra{\phi}{\phi}\otimes \ketbra{\xi}{\xi}\right)&\leq \prod_{i=\log n}^{s(n)}\left(\left(\frac{9}{10}\right)^{\bar \ell(k,i)}+2^{-1.15n}\right)\nonumber\\
    &\leq \left(\frac{9}{10}\right)^{\sum_{i=\log n}^{s(n)}\bar\ell(k,i)}+\prod_{i=\log n}^{s(n)}\left(1+2^{-1.15n}\right)-1\label{eq:a2_1}\\
    & \leq \left(\frac{9}{10}\right)^{\sum_{i=\log n}^{s(n)}\bar\ell(k,i)}+\left(1+2^{-1.15n}\right)^{s(n)}-1\nonumber\\
    & \leq \left(\frac{9}{10}\right)^{\sum_{i=\log n}^{s(n)}\bar\ell(k,i)}+e^{2^{-1.15n}{s(n)}}-1\nonumber\\
    &\leq  \left(\frac{9}{10}\right)^{8n}+e\cdot O({s(n)}\cdot 2^{-1.15n})\label{eq:a2_2}\\
    & \leq O(2^{-1.1n})\nonumber\,.
    \end{align}
    where \eqref{eq:a2_1} comes from expanding the product and replacing every $\frac{9}{10}$ with $1$ except the first term; and \eqref{eq:a2_2} is due to the fact that $2^{-1.15n}s(n)<1$ for sufficiently large $n$.
\end{itemize}

Therefore, in both cases, we have that 
$$\EE_{\ket\phi\sim\mu_m}\Tr\left(\Pi_k\ketbra{\phi}{\phi}\otimes \ketbra{\xi}{\xi}\right)\leq O(2^{-1.1n})\,.$$
\end{proof}

Now consider the $\oPRS$ adversary $\adv^\OO$ that on input $(1^n,\ket{\phi})$ runs the OR test from Theorem~\ref{OR-test} on $\ket{\phi}\otimes \ket{\xi}$ with projections $\{\Pi_k\}_{k\in S}$. Then, $\adv^\OO$ achieves a non-negligible distinguishing advantage in the current Case A, as we show in Lemma~\ref{Case A - OR test distinguihes} below.
\begin{lemma}\label{Case A - OR test distinguihes}
    Let $n\in \NN$ such that $|S|\geq 2^n/n^2$ (i.e.\ Case A). Then, the OR test algorithm (from Theorem~\ref{OR-test}) with projections $\{\Pi_k\}_{k\in S}$ and challenge state $\ket{\phi}\otimes\ket{\xi}$ achieves an advantage of at least $\frac{1}{8n^2}$ in distinguishing $\ket{\phi}$ of the form $\Gen_k^\OO\ket{0}$ for some $k$, from a Haar random $\ket{\phi}$.
\end{lemma}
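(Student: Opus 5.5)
We bound the acceptance probability of the OR test $\mathcal A$ from Theorem~\ref{OR-test}, run on $\ket{\phi}\otimes\ket{\xi}$ with the operators $\{\Pi_k\}_{k\in S}$, separately in the two cases. The pseudorandom case uses the lower bound $p_\downarrow^2/7$; the Haar random case uses the upper bound $2p_\uparrow$. The advantage is then the difference of these two bounds.

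\emph{Pseudorandom case.} Let $k$ be uniform in $\{0,1\}^n$ and $\ket{\phi}=\Gen^\OO_k\ket{0}$. With probability $|S|/2^n\geq 1/n^2$ we have $k\in S$. For such $k$, Lemma~\ref{Case A - Pi_k overwhelmingly distinguish} gives $\Tr(\Pi_k\ketbra{\phi}{\phi}\otimes\ketbra{\xi}{\xi})\geq 1-2^{-n+1}$, so $p_\downarrow\geq 1-2^{-n+1}$. Theorem~\ref{OR-test} then gives acceptance probability at least $(1-2^{-n+1})^2/7\geq 1/8$ for large $n$. For $k\notin S$ we only use that the acceptance probability is nonnegative. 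Hence the overall acceptance probability is at least $\frac{1}{8n^2}\cdot(1-o(1))$. I would calibrate constants here so the final advantage still exceeds $\frac{1}{8n^2}$, or else settle for a bound like $\frac{1}{7n^2}-o(1/n^2)$. Since $p_\downarrow^2/7\approx 1/7$, there is slack: $\frac{1}{7n^2}-O(2^{-0.1n})\geq\frac{1}{8n^2}$.

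\emph{Haar random case.} Let $\ket{\phi}\sim\mu_m$. Theorem~\ref{OR-test} gives acceptance probability at most $2\sum_{k\in S}\Tr(\Pi_k\ketbra{\phi}{\phi}\otimes\ketbra{\xi}{\xi})$ for each fixed $\ket{\phi}$. Taking expectation over $\ket{\phi}$ and applying linearity with the Haar bound of Lemma~\ref{Case A - Pi_k overwhelmingly distinguish}, this is at most $2|S|\cdot O(2^{-1.1n})\leq O(2^{-0.1n})$. Note that we only need the average bound, because the upper bound in the OR lemma is linear in the state. Subtracting gives an advantage of at least $\frac{1}{7n^2}(1-2^{-n+1})^2-O(2^{-0.1n})\geq\frac{1}{8n^2}$ for all sufficiently large $n$.

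\emph{Main obstacle.} The key point is that the OR test must be applied to projections indexed only by $S$, not all of $\{0,1\}^n$. The union bound then costs a factor $|S|\leq 2^n$, and this is absorbed by the $2^{-1.1n}$ soundness. I would also confirm two further points. First, $\ket{\xi}$ is fixed (it is determined by the oracle), so treating $\ket{\phi}\otimes\ket{\xi}$ as the input state in Theorem~\ref{OR-test} is legitimate. Second, the $\Pi_k$ satisfy $0\preceq\Pi_k\preceq\II$, by the footnote and Equation~\eqref{eq:Pi_projection}.
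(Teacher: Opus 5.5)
Your proposal is correct and follows the paper's proof essentially step for step. In the pseudorandom case you get the lower bound $\frac{|S|}{2^n}\cdot\frac{(1-2^{-n+1})^2}{7}$ from $p_\downarrow$ restricted to $k\in S$, and in the Haar case you get the upper bound $2\,\mathbb{E}[p_\uparrow]\le O(|S|\,2^{-1.1n})=O(2^{-0.1n})$ by linearity; like the paper, you subtract these, so the final inequality $\geq\frac{1}{8n^2}$ holds for sufficiently large $n$.
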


\begin{proof}
    Let $A$ be the OR test algorithm from Theorem~\ref{OR-test} with projections $\{\Pi_k\}_{k\in S}$ and challenge state $\ket{\phi}\otimes\ket{\xi}$. Then, for every $k\in S$, the probability of $\ket{\phi}\otimes\ket{\xi}$ passing the $k$-th test is exactly
$$p_k=\Tr(\Pi_k\ketbra{\phi}{\phi}\otimes \ketbra{\xi}{\xi})\,.$$
Thus, using Lemma~\ref{Case A - Pi_k overwhelmingly distinguish},
$$\EE_{\ket\phi\sim \mu_m}p_{\uparrow}=\sum_{k\in S}\EE_{\ket{\phi}\sim \mu_m}p_k\leq O\left(\frac{|S|}{2^{1.1n}}\right)\leq O\left(\frac{1}{2^{0.1n}}\right)\,.$$
Therefore, using Theorem \ref{OR-test},
$$\Pr_{\ket{\phi}\sim\mu_m}
[\adv^\OO(\ket{\phi}\otimes \ket{\xi})=1]\leq 2\EE_{\ket\phi\sim\mu_m}p_{\uparrow}\leq O\left(\frac{1}{2^{0.1n}}\right)\,.$$

In the case that $\ket\phi=\Gen_k^\OO\ket{0}$ for some $k\in S$, using Lemma~\ref{Case A - Pi_k overwhelmingly distinguish},
$$p_\downarrow\geq\Tr(\Pi_k(\Gen_k^\OO\ket{0})(\Gen_k^\OO\ket{0})^\dagger\otimes \ketbra{\xi}{\xi})\geq 1-2^{-n+1}\,.$$

Hence,
$$\Pr_{k\gets\{0,1\}^n}[\adv^\OO ((\Gen_k^\OO\ket{0})\otimes \ket{\xi})=1]\geq\frac{|S|}{2^n}\cdot \frac{(1-2^{-n+1})^2}{7}\geq\frac{1-2^{-n}}{7n^2}\,.$$
\end{proof}

Note that the entire analysis of Case A holds for any $\OO$ that lie in the measure $1$ set $\mathcal S$ of oracles (originating from Corollary~\ref{Gen acts on poly copies}). We will now analyze Case B, where we will describe an attack that, instead, succeeds ``only'' with $1-O\big(\frac{1}{n^2}\big)$ probability over the choice of oracle $\OO$. At the very end, we will argue via Borel-Cantelli Lemma~\ref{Borel-Cantelli}, that this suffices to identify a fixed oracle $\OO$ relative to which our attack works.

\paragraph{Case B.} In this case, we assume $|S|< \frac{2^n}{n^2}$. We will show that, on average over the choice of oracle $\Gen^\OO$, the uniform mixture of the states generated by $\Gen^\OO$ is \emph{statistically} far from the maximally mixed state on $m$ qubits. Therefore, there is a projective measurement $M$ that distinguishes the two with non-negligible advantage (on average over $\OO$).

Informally, our approach to showing that the two distributions are statistically far is as follows. First, we will prove that the density matrix 
\begin{equation}\label{definiton rho_Gen}
    \rho_\Gen=\EE_\OO\EE_{k\gets \{0,1\}^n}\left[(\Gen_k^\OO\ket{0})(\Gen_k^\OO\ket{0})^\dagger\right]
\end{equation}
can be sufficiently approximated by another density matrix with rank at most $2^{O(n^2)}$ (Lemma~\ref{rho_Gen can be approx. with rank n^5}).

Next, following the paradigm of \cite{MY22a}, we show that $\rho_\Gen$ is statically far from $\frac{\II}{2^m}$ (Lemma~\ref{rho is far from I/2^m}), using the fact that $m=\Omega(n^{2+\epsilon})$. Hence, we conclude that, with sufficiently high probability over the choice of $\OO$, applying a measurement that negligibly approximates the optimal distinguisher of $\rho_\Gen$ from $\II/2^m$, achieves a noticeable advantage in distinguishing the states generated by $\Gen^\OO$ from Haar random states (Lemma~\ref{optimal meausurement distinguishes}).

\begin{lemma}\label{rho_Gen can be approx. with rank n^5}
    There exists a density matrix $\rho_\Gen'$ with $\|\rho_\Gen-\rho_\Gen'\|_1\leq \frac{2|S|}{2^n}$ and $\mathrm{rank}(\rho_\Gen')\leq 2^{O(n^2)}$.
\end{lemma}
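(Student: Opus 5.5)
The plan is to split $\rho_\Gen$ according to whether $k\in S$, discard the $S$ part, and bound the rank of the rest using the symmetric-subspace structure from Corollary~\ref{Gen acts on poly copies}. By that corollary, for every oracle $\OO$ outside a measure-zero set, and every $k\in\{0,1\}^n$,
\[
\Gen_k^{\OO}\ket{0}=U_k\Bigl(\bigotimes_{i=1}^{s(n)}\ket{\psi_i}^{\otimes \ell(k,i)}\oplus 0^{a(k)}\Bigr).
\]
Here the unitaries $U_k$ and the exponents $\ell(k,i)$ do not depend on the oracle: the corollary asserts the identity for all families $\{\ket{\psi'_i}\}$. Measure-zero sets do not affect $\EE_\OO$, so we may substitute this form into \eqref{definiton rho_Gen}.

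\textbf{Step 1 (averaging over the oracle).} Since the $\ket{\psi_i}$ are independent Haar states, for each fixed $k$,
\[
\EE_\OO\bigl[(\Gen_k^\OO\ket{0})(\Gen_k^\OO\ket{0})^\dagger\bigr]
=U_k\Bigl(\bigotimes_{i=1}^{s(n)}\frac{\II_{\Sym_{\ell(k,i)}\CC^{2^i}}}{\dim \Sym_{\ell(k,i)}\CC^{2^i}}\oplus 0\Bigr)U_k^\dagger=:\sigma_k .
\]
This uses the standard fact $\EE_{\ket{\psi}\sim\mu_i}\ketbra{\psi}{\psi}^{\otimes \ell}=\Pi_{\mathrm{sym}}^{(i,\ell)}/\dim\Sym_\ell\CC^{2^i}$, already used in \eqref{eq:Pi_projection}. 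Consequently
\[
\operatorname{rank}(\sigma_k)=\prod_{i=1}^{s(n)}\binom{2^i+\ell(k,i)-1}{\ell(k,i)} .
\]

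\textbf{Step 2 (discarding $S$).} Set $\rho'_\Gen=\frac{1}{2^n-|S|}\sum_{k\notin S}\sigma_k$. Write $\rho_\Gen=\frac{|S|}{2^n}\rho_S+\bigl(1-\frac{|S|}{2^n}\bigr)\rho'_\Gen$, where $\rho_S$ is the average of the $\sigma_k$ over $k\in S$. Then
\[
\rho_\Gen-\rho'_\Gen=\frac{|S|}{2^n}\bigl(\rho_S-\rho'_\Gen\bigr),
\]
and since both are density matrices, $\|\rho_\Gen-\rho'_\Gen\|_1\le \frac{2|S|}{2^n}$. If $S=\{0,1\}^n$ the statement is trivial for $n$ large (take any rank-one state; the bound $2|S|/2^n=2$ is vacuous), so assume $S$ is not all of $\{0,1\}^n$.

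\textbf{Step 3 (rank bound for $k\notin S$, the main step).} By subadditivity of rank, $\operatorname{rank}(\rho'_\Gen)\le 2^n\max_{k\notin S}\operatorname{rank}(\sigma_k)$. Fix $k\notin S$. The definition \eqref{eq:definition_S} gives two facts: $\ell(k,i)=0$ for all $i\ge 3n$, and $\sum_{i=\log n}^{3n}\ell(k,i)<8n$. Bound the factors of the product in Step 1 by ranges of $i$.

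For $i\ge 3n$, every factor equals $1$.

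For $\log n\le i<3n$, use $\binom{2^i+\ell-1}{\ell}\le 2^{i\ell}$. The product over this range is at most $2^{3n\sum_i \ell(k,i)}\le 2^{24n^2}$.

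For $i<\log n$, use $\binom{2^i+\ell-1}{\ell}\le(\ell+1)^{2^i}$ together with $\ell(k,i)<p(n)$. The product is at most $(p(n)+1)^{\sum_{i<\log n}2^i}\le (p(n)+1)^{n}=2^{O(n\log n)}$.

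Combining the three ranges, $\operatorname{rank}(\rho'_\Gen)\le 2^n\cdot 2^{24n^2}\cdot 2^{O(n\log n)}=2^{O(n^2)}$.

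The main obstacle is the bookkeeping in Step 3 that makes the small-$i$ contribution only $2^{O(n\log n)}$. This depends on $\ell(k,i)$ being polynomially bounded, which is where Corollary~\ref{Gen acts on poly copies} is essential: it replaces the raw, exponentially large $r(i)$ copies with $\ell(k,i)<p(n)$. A second point to check is that Step 1 is legitimate, namely that $U_k$ and $\ell(k,\cdot)$ are fixed independently of $\OO$, so that averaging over $\OO$ acts only on the input copies. This holds because the corollary states the identity for every oracle $\OO'$.
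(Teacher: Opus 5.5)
Your proof is correct and follows essentially the same route as the paper. You average over the oracle using the symmetric-subspace formula, discard the $k\in S$ terms at trace-norm cost $2|S|/2^n$, and bound each remaining $\prod_i \dim\Sym_{\ell(k,i)}\mathbb{C}^{2^i}$ by splitting into $i\ge 3n$, $\log n\le i<3n$, and $i<\log n$. The only differences are cosmetic: the paper replaces the $k\in S$ terms by $\ketbra{0}{0}^{\otimes m}$ rather than renormalizing over $k\notin S$, and for $i<\log n$ it bounds $\ell(k,i)\le 2^{2i/5}p(n)$ via Remark~\ref{Gen is unitary} instead of using the corollary's polynomial bound.
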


\begin{proof}
    By Corollary~\ref{Gen acts on poly copies},
    \begin{align*}\rho_{\Gen}={}&\EE_{k\gets\{0,1\}^n}\EE_{\ket{\psi_1}\sim \mu_1}\EE_{\ket{\psi_2}\sim \mu_2}\cdots\EE_{\ket{\psi_{s(n)}}\sim \mu_{s(n)}}\left[U_k\left(\bigotimes_{i=1}^{s(n)} \ketbra{\psi_i}{\psi_i}^{\otimes \ell(k,i)} \oplus 0^{a(k)}\right) U_k^\dagger\right]\\
    ={}&\EE_{k\gets \{0,1\}^n}\left[U_k\left(\bigotimes_{i=1}^{s(n)} \frac{\II_{\Sym_{\ell(k,i)}\CC^{2^i}}}{\dim \Sym_{\ell(k,i)}\CC^{2^i}}\oplus 0^{a(k)}\right)U_k^\dagger\right]\,.
    \end{align*}
    So, let 
    $$\rho_\Gen'=\frac{1}{2^n}\left(\sum_{k\not\in S}\left[U_k\left(\bigotimes_{i=1}^{s(n)} \frac{\II_{\Sym_{\ell(k,i)}\CC^{2^i}}}{\dim \Sym_{\ell(k,i)}\CC^{2^i}}\oplus 0^{a(k)}\right)U_k^\dagger\right]+\sum_{k\in S} \ketbra{0}{0}^{\otimes m}\right)\,.$$
    
    Trivially, we have
    $$\left\Vert\EE_{\ket{\psi_1}\sim \mu_1}\EE_{\ket{\psi_2}\sim \mu_2}\cdots\EE_{\ket{\psi_{s(n)}}\sim \mu_{s(n)}}\left[U_k\left(\bigotimes_{i=1}^{s(n)} \ketbra{\psi_i}{\psi_i}^{\otimes \ell(k,i)} \oplus 0^{a(k)}\right) U_k^\dagger\right]-\ketbra{0}{0}^{\otimes m}\right\Vert_1\leq 2\,.$$
    By a standard hybrid method, we have $\|\rho_\Gen-\rho_\Gen'\|_1\leq \frac{2|S|}{2^n}$, and
    \begin{align}
        \rank(\rho_{\Gen}')&\leq \sum_{k\not\in S}\rank\left(\bigotimes_{i=1}^{s(n)} \II_{\Sym_{\ell(k,i)}\CC^{2^i}}\oplus 0^{a(k)}\right)+\rank(\ketbra{0}{0}^{\otimes m})\nonumber\\&=1+\sum_{k\not\in S} \rank\left(\bigotimes_{i=1}^{s(n)} \II_{\Sym_{\ell(k,i)}\CC^{2^i}}\right)\label{trivial bound for rank(rho_Gen')}
    \end{align}

    Now, by the definition of $S$, for each $k\not\in S$, we have that $\ell(k,i)=0$ for all $i\geq 3n$, and $\sum_{i=\log n}^{s(n)}\ell(k,i)<8n$.
    In addition,
    \begin{itemize}
        \item If $i<3n$, then $$\dim \Sym_{\ell(k,i)}\CC^{2^i}\leq \dim (\CC^{2^i})^{\otimes \ell(k,i)}=2^{i\cdot \ell(k,i)}< 2^{3n\cdot \ell(k,i)}\,.$$
        \item If $i<\log n$, by Remark~\ref{Gen is unitary}, there exists a fixed polynomial $p$, such that $\ell(k,i)<2^{2i/5}\cdot p(n) $. Hence, we have
        $$
            \dim \Sym_{\ell(k,i)}\CC^{2^i}=\binom{2^i+\ell(k,i)-1}{2^i-1}\leq {(2^i+\ell(k,i)-1)^{2^i-1}}
            \leq (n+n^{2/5}p(n))^{n}=2^{O(n\log n)}\,.
        $$
    \end{itemize}
    Therefore, for every $k\not\in S$,
    \begin{align*}
    \rank\left(\bigotimes_{i=1}^{s(n)} \II_{\Sym_{\ell(k,i)}\CC^{2^i}}\right)={}&\prod_{i=1}^{s(n)}\dim \Sym_{\ell(k,i)}\CC^{2^i}\\
    ={}&\left(\prod_{i=1}^{\log n} \dim\Sym_{\ell(k,i)}\CC^{2^i}\right)\cdot \left(\prod_{i=\log n}^{3n} \dim \Sym_{\ell(k,i)}\CC^{2^i}\right)\\
    \leq{}&\ 2^{O(n\log^2 n)}\cdot 2^{3n\sum_{i=\log n}^{3n}\ell(k,i)}\\
    \leq{}&\ 2^{O(n\log^2 n)+24n^2}=2^{O(n^2)}\,.
    \end{align*}

    Equation~\eqref{trivial bound for rank(rho_Gen')} yields $$\rank(\rho_{\Gen}')\leq 1+(2^n-|S|)\cdot 2^{O(n^2)}\leq 2^{O(n^2)}\,.$$
\end{proof}

Using Lemma~\ref{rho_Gen can be approx. with rank n^5}, we can show that $\rho_\Gen$ is statistically far from the maximally mixed $m$-qubit state, since we have assumed that $m=\Omega(n^{2+\epsilon})$. 
\begin{lemma}\label{rho is far from I/2^m}
    If $m=\Omega(n^{2+\epsilon})$ and $|S|<\frac{2^n}{n^2}$, then $\frac{1}{2}\|\rho_\Gen-\frac{\II}{2^m}\|_1\geq 1-\frac{2}{n^2}$.
\end{lemma}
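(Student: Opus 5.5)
The plan is a triangle inequality through the low-rank approximation from Lemma~\ref{rho_Gen can be approx. with rank n^5}, followed by the standard fact that a low-rank state is far from the maximally mixed state. First, by Lemma~\ref{rho_Gen can be approx. with rank n^5}, there is a density matrix $\rho_\Gen'$ with $\|\rho_\Gen-\rho_\Gen'\|_1\le 2|S|/2^n< 2/n^2$ and $\mathrm{rank}(\rho_\Gen')\le 2^{Cn^2}$ for some constant $C$. By the triangle inequality,
$$\frac12\Big\|\rho_\Gen-\frac{\II}{2^m}\Big\|_1\;\ge\;\frac12\Big\|\rho_\Gen'-\frac{\II}{2^m}\Big\|_1-\frac12\|\rho_\Gen-\rho_\Gen'\|_1\;\ge\;\frac12\Big\|\rho_\Gen'-\frac{\II}{2^m}\Big\|_1-\frac{1}{n^2}.$$
It therefore suffices to show $\frac12\|\rho_\Gen'-\II/2^m\|_1\ge 1-1/n^2$ for large $n$.

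For this step, let $P$ be the projector onto the support of $\rho_\Gen'$, and use $\{P,\II-P\}$ as a distinguishing measurement, in the spirit of~\cite{MY22a}. Then $\Tr(P\rho_\Gen')=1$, while
$$\Tr\Big(P\,\frac{\II}{2^m}\Big)=\frac{\mathrm{rank}(P)}{2^m}\le 2^{Cn^2-m}.$$
By the variational characterization of the trace distance,
$$\frac12\Big\|\rho_\Gen'-\frac{\II}{2^m}\Big\|_1\ge 1-2^{Cn^2-m}.$$
Since $m=\Omega(n^{2+\epsilon})$, we have $m-Cn^2\ge n^{2+\epsilon/2}$ for all sufficiently large $n$. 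Hence $2^{Cn^2-m}\le 2^{-n^{2+\epsilon/2}}\le 1/n^2$, and combining gives $\frac12\|\rho_\Gen-\II/2^m\|_1\ge 1-2/n^2$.

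The only delicate point is the constant in the rank bound. The proof of Lemma~\ref{rho_Gen can be approx. with rank n^5} gives rank $1+2^n\cdot 2^{O(n\log^2 n)+24n^2}$, which is indeed $2^{Cn^2}$ with an absolute constant $C$. That constant does not depend on $\epsilon$ or on the hidden constant in $m=\Omega(n^{2+\epsilon})$, so the superquadratic growth of $m$ absorbs it. The statement should accordingly be read as holding for all sufficiently large $n$, which is all that is needed later, since the argument is asymptotic.
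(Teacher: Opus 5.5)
Your proof is correct and has the same structure as the paper's: a triangle inequality through the low-rank approximation $\rho_\Gen'$ from the preceding lemma, followed by the fact that a state of rank $2^{O(n^2)}$ is nearly perfectly distinguishable from $\II/2^m$. The only difference is in that last step: the paper uses Fuchs--van de Graaf plus Cauchy--Schwarz to get $1-\sqrt{\mathrm{rank}(\rho_\Gen')/2^m}$, whereas your support-projector test gives the more elementary and slightly sharper bound $1-\mathrm{rank}(\rho_\Gen')/2^m$, and either suffices since $m=\Omega(n^{2+\epsilon})$.
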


\begin{proof}
Let $\rho_\Gen'$ be the density matrix from Lemma~\ref{rho_Gen can be approx. with rank n^5}, then using the following argument inspired by \cite[Theorem 3.2]{MY22a}, we can show that $\rho_\Gen$ is statistically far from $\frac{\II}{2^m}$:
\begin{align}
        \frac{1}{2}\left\Vert\rho_\Gen'-\frac{\II}{2^m}\right\Vert_1 
&\geq 1-\sqrt{F\left(\rho_\Gen' ,\frac{\II}{2^m}\right)} \nonumber\\&= 1-\left\Vert\sum_{i=1}^\xi \sqrt{\lambda_i}\frac{1}{\sqrt{2^m}}\ketbra{\lambda_i}{\lambda_i}\right\Vert_1\label{eq:4.11 1}\\
&= 1-\sum_{i=1}^\xi \frac{\sqrt{\lambda_i}}{\sqrt{2^m}}\nonumber\\&\geq 1-\sqrt{\sum_{i=1}^\xi \lambda_i}\sqrt{\sum_{i=1}^\xi \frac{1}{2^m}}\label{4.11 2}\\&\geq 1-\frac{\sqrt{\xi}}{\sqrt{2^{m}}}\nonumber\\& = 1-\negl(n)\label{4.11 3}
    \end{align}
where in Equation~\eqref{eq:4.11 1}, we define $\sum_{i=1}^\xi \lambda_i\ketbra{\lambda_i}{\lambda_i}$ to be the diagonalization of $\rho_\Gen$; in Equation~\eqref{4.11 2}, we use the Cauchy-Schwarz inequality; and in Equation~\eqref{4.11 3}, we use the fact that $m=\Omega(n^{2+\epsilon})$ and $\xi=\mathrm{rank}(\rho_\Gen')\leq 2^{ O(n^2)}$ (Lemma~\ref{rho_Gen can be approx. with rank n^5}).
    
Hence, since $|S|<\frac{2^n}{n^2}$, we have
$$\frac{1}{2}\left\Vert\rho_\Gen-\frac{\II}{2^m}\right\rVert_1 \geq
 \frac{1}{2}\left\Vert\rho_\Gen' - \frac{\II}{2^m}\right\Vert_1 - \frac{1}{2}\left\|\rho_\Gen-\rho_\Gen'\right\|_1 \geq  (1-\negl(n))-\frac{|S|}{2^n} \geq 1-\frac{2}{n^2}\,.$$
where the second inequality follows from (\ref{4.11 3}) and Lemma~\ref{rho_Gen can be approx. with rank n^5}.
\end{proof}

\begin{lemma}\label{optimal meausurement distinguishes}
    Let $\{Q,\II-Q\}$ be the projective measurement that optimally distinguishes 
$\rho_\Gen$ from $\frac{\II}{2^m}$. Then, any circuit $\widetilde Q$ that negligibly approximates the measurement, achieves a noticeable advantage in distinguishing $\EE_{k\gets\{0,1\}^n}\left[(\Gen_k^\OO\ket{0})(\Gen_k^\OO\ket{0})^\dagger\right]$ from $\frac{\II}{2^m}$, except with probability $5/n^2$ over the choice of $\OO$. 
\end{lemma}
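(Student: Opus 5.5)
The plan is a Markov-type argument over the oracle. Here $\rho_\Gen$ is the average over $\OO$ of $\rho_\OO:=\EE_{k}\left[(\Gen_k^\OO\ket{0})(\Gen_k^\OO\ket{0})^\dagger\right]$. Define the advantage of the optimal projector as $a_\OO:=\Tr(Q\rho_\OO)-\Tr(Q\,\II/2^m)$. By Lemma~\ref{rho is far from I/2^m}, the quantity $\EE_\OO a_\OO=\Tr(Q\rho_\Gen)-\Tr(Q\II/2^m)=\frac12\|\rho_\Gen-\II/2^m\|_1\geq 1-\frac{2}{n^2}$. Since $a_\OO\leq 1$ always, the random variable $1-a_\OO$ is nonnegative, with expectation at most $2/n^2$.

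Markov's inequality then gives $\Pr_\OO[1-a_\OO\geq 1/2]\leq 4/n^2$. So, except with probability $4/n^2$ over $\OO$, the exact measurement $Q$ distinguishes $\rho_\OO$ from $\II/2^m$ with advantage at least $1/2$. A finer threshold choice (for instance $1-a_\OO\geq 2/5$) also fits within the stated $5/n^2$ budget, leaving slack for the next step.

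Next, I handle the approximation. Since $Q$ acts on $m=\poly(n)$ qubits, I would implement $\{Q,\II-Q\}$ as a unitary (an isometry/Naimark dilation with one output qubit) and approximate it via Theorem~\ref{Solovey Kitaev} to operator-norm accuracy $\negl(n)$. A circuit of this kind is describable by a Turing machine running within the $2^{2^T}$ budget of $\OO_2$, since $Q$ depends only on $\Gen$ and not on $\OO$, and can be computed by brute force, as in Appendix~\ref{AppendixA}. By Lemma~\ref{good approx of U => good approx of YES case}, applied to each pure state in the ensemble and extended to mixtures by linearity/convexity, the acceptance probabilities of $\widetilde Q$ on $\rho_\OO$ and on $\II/2^m$ each differ from those of $Q$ by at most $\negl(n)$. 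Hence the advantage of $\widetilde Q$ is at least $1/2-\negl(n)$ on the good event, which is noticeable. Distinguishing from $\II/2^m$ is the same as distinguishing from a single Haar-random copy, since $\EE\ketbra{\phi}{\phi}=\II/2^m$.

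The only delicate point is making sure $Q$ is a fixed, oracle-independent object. I intend to stress that it is determined by $\rho_\Gen$, which averages over $\OO$. Otherwise the Markov step would not be legitimate, and the attacker could not compute the measurement. The remaining $1/n^2$ of slack in the $5/n^2$ bound covers the negligible approximation loss and the threshold choice.
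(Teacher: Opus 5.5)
Your proposal is correct and follows essentially the paper's route. The paper also applies Markov's inequality over the oracle to one minus the advantage, using that the advantage is at most $1$ and that its $\OO$-average is at least $1-\frac{2}{n^2}$ by Lemma~\ref{rho is far from I/2^m}, and it also approximates the oracle-independent measurement $Q$ via Solovay--Kitaev. The only difference is ordering: the paper first folds the $2^{-n}$ approximation error into the averaged bound for $\widetilde Q$ and then applies Markov, obtaining $4/n^2+2^{1-n}\leq 5/n^2$, whereas you apply Markov to the exact $Q$ and then pay the negligible approximation loss per oracle, which works equally well.
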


\begin{proof}
    Let $\widetilde Q$ be a circuit that approximates the measurement with precision $2^{-n}$, then
    $$\left|\Tr\left(\widetilde Q\left(\EE_{\OO}\EE_{k\gets\{0,1\}^n}\left[(\Gen_k^\OO\ket{0})(\Gen_k^\OO\ket{0})^\dagger\right]-\frac{\II}{2^m}\right)\right)\right|\geq 1-\frac{2}{n^2}-\frac{1}{2^n}\,,$$
    which implies that $$\EE_{\OO}\left|\Tr\left(\widetilde Q\left(\EE_{k\gets\{0,1\}^n}\left[(\Gen_k^\OO\ket{0})(\Gen_k^\OO\ket{0})^\dagger\right]-\frac{\II}{2^m}\right)\right)\right|\geq 1-\frac{2}{n^2}-\frac{1}{2^n}\,.$$
    
    Therefore, using Markov's inequality we have
    $$\Pr_{\OO}\left[\left|\Tr\left(\widetilde Q\left(\EE_{k\gets\{0,1\}^n}\left[(\Gen_k^\OO\ket{0})(\Gen_k^\OO\ket{0})^\dagger\right]-\frac{\II}{2^m}\right)\right)\right|\geq \frac{1}{2}\right]\geq 1-\frac{5}{n^2}\,.$$
    
    Hence, the adversary $\adv^\OO$ that applies $\widetilde Q$ to the challenge state using $\OO_2$ and outputs the outcome of the measurement, achieves an advantage
    \begin{multline*}
         \left|\Pr_{k\gets\{0,1\}^n}[\adv^\OO(\Gen_k^\OO\ket{0})]-\Pr_{\ket{\phi}\sim \mu_m}[\adv^\OO(\ket{\phi})]\right|
    \geq \\\left|\Tr\left(\widetilde Q\left(\EE_{k\gets\{0,1\}^n}\left[(\Gen_k^\OO\ket{0})(\Gen_k^\OO\ket{0})^\dagger\right]-\frac{\II}{2^m}\right)\right)\right|\geq \frac{1}{2}\,,
    \end{multline*}
    except with probability at most $\frac{5}{n^2}$ over the choice of $\OO$.
\end{proof}

\paragraph{Summary of the attack, and putting everything together.} Combining the attacks for cases A and B, we can construct an adversary that breaks the security of the $\oPRS$ generator $\Gen^{\OO}$. In particular, we consider the adversary $\adv^\OO$ that queries $\OO_2$ with the the following input parameter: the description of the Turing machine $M$ as in Algorithm~\ref{algo:distinguisher}, a sufficiently large $1^T$, and the state $\ket{\phi}\otimes\left(\bigotimes_{i=1}^T(\OO_{1,i}\ket 0)^{\otimes 8n}\right)$.\footnote{Recall that $\OO_{1,i}\ket{0}$ is the CHRS state $\ket{\psi_i}$; $\OO_2$ runs the Turing machine $M$ for $2^{2^T}$ steps, and runs the circuit that $M$ outputs.} That is, $\adv^\OO$ first asks the oracle $\OO_2$ to compute the ``succinct'' implementations of $\Gen^{(\cdot)}$ (along with the appropriate parameters) so that it can decide whether $\Gen^{(\cdot)}$ on security parameter $n$ is of Case A or B. Then, $\OO_2$ runs the circuit that the Algorithm~\ref{algo:distinguisher} returns, i.e.\ the OR test (Lemma~\ref{Case A - OR test distinguihes}) for Case~A, and the circuit that implements the optimal distinguishing measurement (Lemma~\ref{optimal meausurement distinguishes}) for Case~B.

\begin{algorithm}[ht]
    \caption{The Turing machine $M$}\label{algo:distinguisher}
    \begin{algorithmic}
        \Procedure{$M$}{}
            \State Let $s(n):=T$ \Comment{Increasing $s(n)$ does not affect Lemma~\ref{Boyang's Lemma}}
            \For{$k\in\{0,1\}^n$}
                \State Classically simulate $\Gen_k^{(\cdot)}$ to obtain the unitary $G_k$.
                \State Compute the ``succinct'' implementation $U_k$ of $\Gen^{(\cdot)}_k$  and the corresponding values $\{\ell(k,i)\}_{i=1}^{s(n)}$, and $a(k)$ (from Corollary~\ref{Gen acts on poly copies}), using the algorithm described in Appendix~\ref{AppendixA}.
            \EndFor
            \State Set $S=\left\{k\in\{0,1\}^n:\left(\exists i\in[s(n)]\text{ s.t. }(i\geq 3 n\wedge\ell(k,i)>0)\right)\vee \sum_{i=\log n}^{s(n)}\ell(k,i)\geq 8n\right \}$
            \If{$|S|\geq 2^n/n^2$}\Comment{\textbf{Case A}}
                \For{$k\in S$}
                    \State Find a circuit $\widetilde U_k$ such that $\|\widetilde U_k-U_k\|_{\mathrm{op}}<2^{-n}$, using the Solovay-Kitaev algorithm (Theorem~\ref{Solovey Kitaev}).
                    \State Find the circuit that implements the measurement $\Pi_k$ as in Equation~\eqref{eq:Pi_k}, using the method described in Appendix~\ref{AppendixA}.
                \EndFor
                \State \Return The description of the OR test algorithm (Theorem~\ref{OR-test}) for the projections $\{\Pi_k\}_{k\in S}$.
            \Else\Comment{\textbf{Case B}}
                \State Classically compute $\rho_{\Gen}=\EE_{k\gets \{0,1\}^n}\left[U_k\left(\bigotimes_{i=1}^{s(n)} \II_{\Sym_{\ell(k,i)}\CC^{2^i}}\oplus 0^{a(k)}\right)U_k^\dagger\right]$.
                \State Compute the optimal distinguishing projective measurement $\{Q,\II-Q\}$ between $\rho_{\Gen}$ and $\frac{\II}{2^m}$ using the spectral decomposition.
                \State Find a circuit $\widetilde Q$ that implements the projective measurement with precision $2^{-n}$, using the Solovay-Kitaev algorithm (Theorem~\ref{Solovey Kitaev}).
                \State \Return The description of $\widetilde Q$.
            \EndIf
        \EndProcedure
    \end{algorithmic}
\end{algorithm}

\begin{proof}[Proof of Theorem~\ref{main}]
    Let $\Gen^{(\cdot)}$ be a QPT oracle algorithm. For all possible oracles $\OO$, if $\Gen^\OO$ is a $\oPRS$ with output length $m(n)=\Omega(n^{2+\epsilon})$, 
    then we can consider the following adversary $\adv^\OO$: on input $(1^n,\ket{\phi})$ $\adv^\OO$ makes a \textit{single} query $\left(1^{T(n)},\ket{\phi}\otimes\left(\bigotimes_{i=1}^{T(n)}(\OO_{1,i}\ket 0)^{\otimes 8n}\right),M\right)$ to $\OO_2$, where $T(\cdot)$ is a sufficiently large polynomial, $M$ is the description of the Turing machine in Algorithm~\ref{algo:distinguisher}\footnote{Notice that $M$ can have $T^{-1}(\cdot)$ hard-coded so then $\cal{O}_2$ can retrieve $n$ from $T(n)$.}, and $\adv^\OO$ can prepare the state $\ket{\phi}\otimes\left(\bigotimes_{i=1}^{T(n)}(\OO_{1,i}\ket 0)^{\otimes 8n}\right)$ before its query to $\OO_2$ by querying $\OO_1$ $8nT(n)$ times; then $\adv^\OO$ outputs the outcome of $\OO_2$.

    From Corollary~\ref{Gen acts on poly copies}, we have that with probability $1$ over the choice of $\OO$, $\Gen^{(\cdot)}$ on security parameter $n$ indeed has the ``succinct" implementations $\{U_k\}_{k\in\{0,1\}^n}$. Using Lemmas~\ref{Case A - OR test distinguihes} and~\ref{optimal meausurement distinguishes}, we have that in both cases, $\adv^\OO$ achieves a distinguishing advantage of at least $\frac{1}{8n^2}$, except with probability $\frac{5}{n^2}$ over the choice of $\OO$. Since $\sum_{n=1}^\infty \frac{5}{n^2}<\infty$, by Borel-Cantelli lemma (Lemma~\ref{Borel-Cantelli}), with probability $1$ over the choice of $\OO$, $\adv^\OO$ achieves an advantage of at least $\frac{1}{8n^2}$, for all but finitely many $n$'s. This means that for each generator $\Gen^{(\cdot)}$, with probability $1$ over the choice of $\OO$, if $\Gen^\OO$ is a $\oPRS$ with output length $m(n)=\Omega(n^{2+\epsilon})$, then there exists an adversary $\adv^{(\cdot)}$ that achieves a non-negligible advantage.
    
    Since there are countably many descriptions of QPT oracle algorithms, with probability $1$ over the choice of $\mathcal O$, for any generator algorithm $\Gen^{\OO}$ such that $\Gen^\OO$ is a $\mathsf{1PRS}$ with output length $m(n)= \Omega(n^{2+\epsilon})$, there exists an adversary $\adv^\OO$ that achieves a non-negligible advantage. 
\end{proof}

\section{Implication on Impossibilities of Black-box Constructions}

\label{sec:5}

In this section, we will clarify the implication of our channel oracle separation in the plain world. In particular, our oracle separation implies that we cannot stretch a $\oPRS$ in a fully black-box way to an arbitrary output length, if we only have the isometry access to the generation algorithm and the adversary.
We will see that ``querying in superposition" means that the black-box construction can have isometry access to the shorter $\oPRS$, but the black-box security reduction can only use the adversary as a channel. 

We will first define some primitives that provide different kinds of access to the generator. Then, we will discuss the meaning of oracle separation under these primitives, and describe the similarities and differences to the various kind of black-box construction defined in \cite[Section 5]{CCS24}.

To begin with, we need to recall the formal definition of cryptographic primitives and the fully black-box construction.

\begin{definition}[\cite{CCS24}, Theorem 5.9]\label{CryptoPrimitive}
A primitive $\mathcal P$ is a pair $\mathcal P = (\mathcal F_{\mathcal P} , \mathcal R_{\mathcal P} )$ where $\mathcal F_{\mathcal P}$ is a set of quantum channels, and $\mathcal R_{\mathcal P}$ is a relation over pairs $(G, A)$ of quantum channels, where $G \in \mathcal F_{\mathcal P}$.

A quantum channel $G$ is an implementation of $\mathcal P$ if $G \in \mathcal F_{\mathcal P}$ . If $G$ is additionally a QPT channel, then we say that $G$ is an efficient implementation of $\mathcal P$ (in this case, we refer to $G$ interchangeably
as a QPT channel or a QPT algorithm).

A quantum channel $A$ (usually referred to as the ``adversary") $\mathcal P$-breaks $G\in \mathcal F_{\mathcal P}$ if $(G, A) \in \mathcal R_{\mathcal P}$.

We say that $G$ is a secure implementation of $\mathcal P$ if $G$ is an implementation of $P$ such that no QPT channel $\mathcal P$-breaks it.

The primitive $\mathcal P$ exists if there exists an efficient and secure implementation of $\mathcal P$.
\end{definition}

\begin{definition} \label{fullybb-channel}
    A pair of QPT oracle algorithms $(G^{(\cdot)},S^{(\cdot)})$ is a fully black-box construction of $\mathcal Q$ with {\bf channel access} to $\mathcal P$ if the following two condition holds:
    \begin{itemize}
        \item (Black-box construction) For every (possibly inefficient) channel implementation $V$ of $\mathcal P$, $G^V$ is an implementation of $\mathcal Q$.
        \item (Black-box security reduction) For every channel implementation $V$ of $P$, and every (possibly inefficient) channel adversary $A$ that $\mathcal Q$-breaks $G^V$, it holds that $S^A$ $\mathcal P$-breaks $V$.
    \end{itemize}

\end{definition}

\begin{thm}\label{relativize}
    Suppose there exists a fully black-box construction of primitive $\mathcal Q$ with channel access to primitive $\mathcal P$. Then, for every quantum channel oracle $\mathcal O$, if $\mathcal P$ exists relative to $\mathcal O$, then $\mathcal Q$ also exists relative to $\mathcal O$.
\end{thm}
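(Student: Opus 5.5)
The argument is the standard relativization argument: given a construction, we plug in the oracle-relative implementation of $\mathcal P$. Fix a quantum channel oracle $\OO$ and suppose $\mathcal P$ exists relative to $\OO$. Then there is a QPT oracle algorithm $V^{(\cdot)}$ such that $V^\OO$ is an efficient implementation of $\mathcal P$. Moreover, $V^\OO$ is secure: no QPT oracle channel $B^\OO$ $\mathcal P$-breaks it. Let $(G^{(\cdot)},S^{(\cdot)})$ be the assumed fully black-box construction from Definition~\ref{fullybb-channel}. The candidate implementation of $\mathcal Q$ relative to $\OO$ is $G^{V^\OO}$, i.e.\ the oracle algorithm obtained by replacing each query of $G$ to its $\mathcal P$-oracle with a run of $V^\OO$.

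\textbf{Step 1 (implementation and efficiency).} Since $V^\OO$ is, in particular, a quantum channel implementing $\mathcal P$, the black-box construction condition says $G^{V^\OO}$ is an implementation of $\mathcal Q$. Efficiency follows by composition. $G$ makes polynomially many queries, and each query is answered by the QPT circuit $V^{(\cdot)}$, which itself makes polynomially many $\OO$-queries. So $G^{V^{(\cdot)}}$ is a QPT oracle algorithm with access to $\OO$. The one point to check is that substituting a channel for a channel oracle is legitimate. This holds because $G$ only has channel access to $\mathcal P$, which matches exactly how $V^\OO$, a channel, can be invoked.

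\textbf{Step 2 (security by contradiction).} Suppose some QPT oracle channel $A^\OO$ $\mathcal Q$-breaks $G^{V^\OO}$. Then $A^\OO$ is a (possibly inefficient, but here in fact efficient) channel adversary breaking $G^{V^\OO}$. Applying the black-box security reduction with the implementation $V^\OO$ and the adversary $A^\OO$, we get that $S^{A^\OO}$ $\mathcal P$-breaks $V^\OO$. Since $S$ is QPT and $A^\OO$ is QPT relative to $\OO$, the composition $S^{A^{(\cdot)}}$ is a QPT oracle channel relative to $\OO$. This contradicts the security of $V^\OO$ relative to $\OO$. Hence $G^{V^\OO}$ is a secure, efficient implementation of $\mathcal Q$ relative to $\OO$, so $\mathcal Q$ exists relative to $\OO$.

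\textbf{Main obstacle.} The step needing care is making precise that "relative to $\OO$" objects count as channel implementations and channel adversaries in the sense of Definition~\ref{fullybb-channel}. Concretely:
\begin{itemize}
\item $V^\OO$ must be a genuine quantum channel, which it is, because $\OO$ is a channel oracle and compositions of channels are channels.
\item The universal quantification over all channels $V$ and $A$ in the definition must include these oracle-dependent ones, which it does, since the definition allows arbitrary possibly inefficient channels.
\end{itemize}
The remaining point is the efficiency bookkeeping for nested oracle calls, namely that polynomial compositions stay polynomial. I would state this as a short remark rather than grind through it.
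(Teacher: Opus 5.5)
Your proposal is correct and follows essentially the same argument as the paper. Both instantiate $G$ with the $\mathcal O$-relative implementation of $\mathcal P$ (the paper's $C$, your $V^{\mathcal O}$) and get membership in $\mathcal F_{\mathcal Q}$ from the black-box construction condition, with efficiency by composition. Both then derive security by contradiction: a QPT adversary $A^{\mathcal O}$ would make $S^{A^{\mathcal O}}$ a QPT-relative-to-$\mathcal O$ channel that $\mathcal P$-breaks the secure implementation.
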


\begin{proof}
This proof is essentially the same as the proof of~\cite[Theorem 5.17]{CCS24}.

    Suppose there exists a fully black-box construction of $\mathcal Q$ with channel access to $\mathcal P$. Then, by definition, there exist QPT oracle algorithms $G^{(\cdot)}$ and $S^{(\cdot)}$ such that the conditions in Definition~\ref{fullybb-channel} holds.

    Let $\mathcal O$ be a quantum {\bf channel} oracle, relative to which $\mathcal P$ exists, that is, there exists a quantum channel $C$, efficiently computable relative to $\mathcal O$, such that $C$ is an efficient channel implementation of $\mathcal P$. Moreover, $C$ satisfies the security condition relative to $\mathcal O$.

    By the black-box construction condition in Definition~\ref{fullybb-channel}, $G^C$ is an implementation of $\mathcal Q$. We show that the following QPT oracle algorithm $\tilde G^{\mathcal O}$ is an efficient implementation relative to $\mathcal O$. The algorithm $\tilde G^{\mathcal O}$ runs as follows: implement $G^C$ by running $G$, and simulate each call to $C$ with the oracle $\mathcal O$. Since $C$ is a QPT algorithm relative to $\mathcal O$, $\tilde G^{\mathcal O}$ is also a QPT algorithm relative to $\mathcal O$. Since $G^C\in \mathcal F_{\mathcal Q}$, and $G^C$ is equivalent to $\tilde G^{\mathcal O}$, $\tilde G^{\mathcal O}\in \mathcal F_{\mathcal Q}$.

    We need to show that $\tilde G^{\mathcal O}$ is a secure implementation relative to $\mathcal O$. Suppose for a contradiction that there exists a QPT oracle channel $A^{\mathcal O}$ $\mathcal Q$-breaks $\tilde G^{\mathcal O}$. Then, by the black-box security reduction condition in Definition~\ref{fullybb-channel}, $S^{A^{\mathcal O}}$ $\mathcal P$-breaks $C$. Since $S^{(\cdot)}$ and $A^{(\cdot)}$ are QPT algorithms, $S^{A^{\mathcal O}}$ is QPT relative to $\mathcal O$. Thus, we constructed a QPT algorithm that $\mathcal P$-breaks $C$, which contradicts the assumption that $C$ is a secure implementation of $\mathcal P$. Therefore, $\tilde G^{\mathcal O}$ is secure.
\end{proof}

Theorem~\ref{relativize} immediately yields the following.

\begin{thm}\label{1PRS-1PRS separation}
    There is no fully black-box construction of $\mathsf{1PRS}$ with input length $n$ and output length $m(n)=\Omega(n^{2+\epsilon})$, with channel access to $\mathsf{1PRS}$ with input length $n$ and output length $m(n)=1.1n$.
\end{thm}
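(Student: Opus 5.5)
The plan is a contradiction argument that combines Theorem~\ref{relativize} with the two halves of the oracle separation. Suppose, for contradiction, that there is a fully black-box construction $(G^{(\cdot)},S^{(\cdot)})$ in the sense of Definition~\ref{fullybb-channel}. Here $\mathcal Q$ is the $\oPRS$ primitive with output length $m(n)=\Omega(n^{2+\epsilon})$ and $\mathcal P$ is the $\oPRS$ primitive with output length $1.1n$. Both are cast as primitives in the sense of Definition~\ref{CryptoPrimitive}:
\begin{itemize}
\item $\mathcal F_{\mathcal P}$ is the set of channels that, on input $(1^n,k)$, output a pure $1.1n$-qubit state;
\item $\mathcal R_{\mathcal P}$ consists of the pairs $(G,A)$ such that $A$ distinguishes a single copy of $G$'s output, for a random $k$, from a Haar random state with non-negligible advantage;
\item $\mathcal Q$ is defined analogously, with output length $m(n)$.
\end{itemize}

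The key steps are as follows. First, fix an oracle $\OO=(\OO_1,\OO_2)$ from Section~\ref{sec:separation} lying in the intersection of the two measure-$1$ sets given by Theorem~\ref{Superposition1PRS relative to O} and Theorem~\ref{main}. Their intersection still has measure $1$, so in particular it is nonempty. Relative to this $\OO$, a $1.1n$-length $\oPRS$ exists, namely the construction $Z^k\otimes\II$ applied to $\ket{\psi_m}$. So $\mathcal P$ exists relative to the channel oracle $\OO$. Applying Theorem~\ref{relativize}, $\mathcal Q$ then also exists relative to $\OO$, i.e.\ there is a $\oPRS$ of length $\Omega(n^{2+\epsilon})$ relative to $\OO$. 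This contradicts Theorem~\ref{main}.

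The step I expect to need the most care is checking that the objects match the hypotheses of Theorem~\ref{relativize}. That theorem requires that $\OO$ be a quantum \emph{channel} oracle. It also requires that the shorter generator $C$ be a QPT channel relative to $\OO$ and belong to $\mathcal F_{\mathcal P}$. Finally, the simulated $\tilde G^{\OO}$ must belong to $\mathcal F_{\mathcal Q}$, which includes the purity requirement.

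Two points must be verified here. First, $\OO_2$ outputs only a bit and discards its workspace, so it is a channel, and $\OO_1$ is an isometry and hence also a channel; $\OO$ therefore qualifies. Second, the black-box construction condition must guarantee that $G^C$ outputs a \emph{pure} state whenever $C$ does. Since the construction condition quantifies over all channel implementations $V\in\mathcal F_{\mathcal P}$, this holds by definition. This is exactly where the "isometry access" interpretation of Section~\ref{sec:5} enters: purity of $C$'s output means each call to $C$ acts as an isometry.

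I would also confirm that the adversary produced by Theorem~\ref{main} is a QPT oracle channel. This is the case, since it makes one query to $\OO_2$ and $8nT(n)$ queries to $\OO_1$ and outputs only a bit. With that, the security-reduction step of Theorem~\ref{relativize}, which gives $S^{A^{\OO}}$ breaking $C$, applies verbatim and completes the contradiction.
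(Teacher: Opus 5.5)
Your proposal is correct and is the same argument as the paper's. You assume the construction, fix an $\OO$ in the measure-one intersection of the sets from Theorem~\ref{Superposition1PRS relative to O} and Theorem~\ref{main}, relativize via Theorem~\ref{relativize}, and contradict Theorem~\ref{main}; you just spell out the hypothesis checks that the paper leaves implicit. Your aside that purity of $C$'s output makes each call to $C$ act as an isometry is not needed and not quite accurate, since a channel with pure outputs on classical keys may still decohere a superposition of keys, but nothing in your argument depends on it.
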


\begin{proof}
    If there exists such a black-box construction, then it relativizes to any quantum channel oracle. But that contradicts Theorem~\ref{Superposition1PRS relative to O} and Theorem~\ref{main}.
\end{proof}

But black-box construction with channel access is a very weak construction. In fact, there is no known black-box construction of bit commitment with channel access to $\mathsf{1PRS}$. We need to upgrade Theorem~\ref{1PRS-1PRS separation} to a stronger version that rules out more possibilities of fully black-box constructions.

\begin{definition}[Isometry Pseudorandom States ($\sf{IsometryPRS}$)]\label{def:SuperpositionPRS}
    An isometry pseudorandom state generator with output length $m(\cdot)$ is a QPT algorithm $\mathsf{Gen}$ that takes as input $(1^n,\ket{\zeta})$, where $\ket{\zeta}$ is an $n$-qubit quantum state, and has the following properties:
    \begin{itemize}
        \item \textbf{Generation in Superposition:} For every $k\in \{0,1\}^n$, there exists a \textit{pure} state $\ket{\phi_k}$ consisting of $m=m(n)$ qubits (intended to be the output) and (potentially) a state $\ket{\eta_k}$ (intended to be the ancillas) such that $$\mathsf{Gen}(1^n,\ket{k})=\ket{k}\otimes \ket{\phi_k}\otimes \ket{\eta_k}\,.$$ Moreover, for any input $\ket{\zeta}=\sum_{i=0}^{2^n-1}c_i\ket{i}$, $$\mathsf{Gen}(1^n,\ket{\zeta})=\sum_{i=0}^{2^n-1} c_i\ket{i}\otimes\ket{\phi_i}\otimes \ket{\eta_i}\,.$$
        \item \textbf{Security:} For any polynomial $t=t(n)$ and any QPT adversary $\mathcal A$, there exists a negligible function $\mathsf{negl}$ such that for all $n$,
        $$\left|\mathrm{Pr}_{k\in \{0,1\}^n}\left[\mathcal A(\ket{\phi_k}^{\otimes t(n)})=1\right]-\mathrm{Pr}_{\ket{\phi}\sim \mu_{2^m}}\left[\mathcal A(\ket{\phi}^{\otimes t(n)})=1\right]\right|=\mathsf{negl}(n),$$
        where $\mu_{2^m}$ is the Haar measure on $m$ qubits.
    \end{itemize}
\end{definition}

\begin{definition}[Single-copy Isometry Pseudorandom States ($\SPRS$)]\label{Superposition1PRS} A single-copy isometry pseudorandom state generator with output length $m(\cdot)$ is an $\sf{IsometryPRS}$ (as in Definition~\ref{def:SuperpositionPRS}) with $m(n)>n$, for every $n\in \NN$, except that the security property only holds for $t=1$.
\end{definition}

One can verify that the construction in Theorem~\ref{Superposition1PRS relative to O} is an $\mathsf{Isometry1PRS}$. Hence, our result shows the following.

\begin{thm}\label{separation-main}
    There is no fully black-box construction of $\mathsf{1PRS}$ with input length $n$ and output length $m(n)=\Omega(n^{2+\epsilon})$, with channel access to $\mathsf{Isometry1PRS}$ with input length $n$ and output length $m(n)=1.1n$.
\end{thm}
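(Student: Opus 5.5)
The plan is to reduce Theorem~\ref{separation-main} to the relativization argument of Theorem~\ref{relativize}, with the primitive $\mathcal P$ now being $\SPRS$ of output length $1.1n$. The argument runs by contradiction. Suppose $(G^{(\cdot)},S^{(\cdot)})$ is a fully black-box construction of $\oPRS$ with output length $\Omega(n^{2+\epsilon})$ that has channel access to $\SPRS$ with output length $1.1n$. Each such implementation is a channel $V$ realizing the isometry $\ket{k}\mapsto\ket{k}\ket{\phi_k}\ket{\eta_k}$ on superposed inputs, and $G$ makes its queries through this channel. We then fix an oracle $\OO$ drawn from the intersection of the measure-one sets in Theorem~\ref{Superposition1PRS relative to O} and Theorem~\ref{main}; this intersection is again of measure one, so such an $\OO$ exists.

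The first step is to verify that the construction used in Theorem~\ref{Superposition1PRS relative to O}, namely $\ket{k}\mapsto \ket{k}\otimes (Z^k\otimes \II)\ket{\psi_m}$, is an efficient implementation of $\SPRS$ relative to $\OO$. On input $\ket{\zeta}=\sum_k c_k\ket{k}$, the generator queries $\OO_{1,m}$ once to obtain $\ket{\psi_m}$, which is independent of $k$. It then applies $Z^{k}$ controlled on the key register, coherently and qubit by qubit. The result is $\sum_k c_k\ket{k}\otimes Z^k\ket{\psi_m}$, which meets the generation-in-superposition requirement with empty ancilla $\ket{\eta_k}$. Since the output on basis keys coincides with the $\oPRS$ of Theorem~\ref{statistically secure 1PRS}, single-copy security relative to $\OO$ carries over verbatim from Theorem~\ref{Superposition1PRS relative to O}. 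This step is the crux, because the superposition condition is the only new requirement. It holds because the CHRS state is key-independent and no garbage depends on $k$; otherwise $\ket{\eta_k}$ could not be chosen consistently.

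Next we invoke Theorem~\ref{relativize} with $\mathcal P=\SPRS$ of length $1.1n$ and $\mathcal Q=\oPRS$ of length $\Omega(n^{2+\epsilon})$. Let $C$ denote the isometry channel above. Then $\tilde G^{\OO}$, which runs $G$ and answers each query to $C$ using $\OO$, is a QPT implementation of $\oPRS$ relative to $\OO$. Its outputs are pure states of the required length, because $G^V$ is an implementation of $\mathcal Q$ for every implementation $V$ of $\mathcal P$, and $C$ is one such implementation. Security follows as in the proof of Theorem~\ref{relativize}. Any QPT $A^{\OO}$ breaking $\tilde G^{\OO}$ would make $S^{A^{\OO}}$ a QPT algorithm relative to $\OO$ that breaks $C$, contradicting the previous step.

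Finally, a secure $\oPRS$ of length $\Omega(n^{2+\epsilon})$ relative to $\OO$ contradicts Theorem~\ref{main} for our choice of $\OO$, which completes the proof. The main obstacle is purely definitional. One must check that the channel-access formalism of Definition~\ref{fullybb-channel} is satisfied by an isometry implementation, so that $G$ may query $C$ on superposed keys while the reduction $S$ still uses $A$ only as a channel. Theorem~\ref{relativize} and both oracle theorems are stated for channel oracles, and the isometry $C$ is in particular a channel, so this check appears to be routine bookkeeping.
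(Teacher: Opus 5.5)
Your proposal is correct and is essentially the paper's argument. You check that the $Z^k$ construction from Theorem~\ref{Superposition1PRS relative to O} is an $\SPRS$ relative to $\OO$, relativize the hypothetical black-box construction via Theorem~\ref{relativize}, and contradict Theorem~\ref{main} on the measure-one intersection of good oracles. One minor aside: Definition~\ref{def:SuperpositionPRS} explicitly allows key-dependent ancillas $\ket{\eta_k}$, so your remark that ``no garbage depends on $k$'' is not what the superposition property needs. What matters is that the generator is a coherent isometry that makes a single key-independent query to $\OO_1$, makes no call to the channel $\OO_2$, and leaves the key register intact.
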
 

More generally, we have the following primitives that are easy to understand, and serve as alternatives to replace the various kind of black-box access introduced in~\cite{CCS24}.

\begin{definition}\label{refinedPrimitives}
    Given a primitive $\mathcal P$,
    \begin{itemize}
        \item The primitive $\mathsf{Isometry}\mathcal P=(\mathcal F_{\mathsf{Isometry}\mathcal P},\mathcal R_{\mathsf{Isometry}\mathcal P})$ is defined as follows: $$\mathcal F_{\mathsf{Isometry}\mathcal P}=\{(G,G_{\text{isometry}}):G\in \mathcal F_{\mathcal P},G_{\text{isometry}}\text{ is an isometry implementation of }G\}$$ $$\mathcal R_{\mathsf{Isometry}\mathcal P}=\{(G,G_{\text{isometry}},A):(G,A)\in \mathcal R_{\mathcal P},(G,G_{\text{isometry}})\in \mathcal F_{\mathsf{Isometry}\mathcal P}\}$$
        \item The primitive $\mathsf{Unitary}\mathcal P=(\mathcal F_{\mathsf{Unitary}\mathcal P},\mathcal R_{\mathsf{Unitary}\mathcal P})$ is defined as follows: $$\mathcal F_{\mathsf{Unitary}\mathcal P}=\{(G,G_{\text{unitary}}: \\G\in \mathcal F_{\mathcal P},G_{\text{unitary}}\text{ is a unitary implementation of }G\}$$ $$\mathcal R_{\mathsf{Unitary}\mathcal P}=\{(G,G_{\text{unitary}},A):(G,A)\in \mathcal R_{\mathcal P},(G,G_{\text{unitary}})\in\mathcal F_{\mathsf{Unitary}\mathcal P}\}$$
        \item The primitive $\mathsf{InverseAccess}\mathcal P=(\mathcal F_{\mathsf{InverseAccess}\mathcal P},\mathcal R_{\mathsf{InverseAccess}\mathcal P})$ is defined as follows: $$\mathcal F_{\mathsf{InverseAccess}\mathcal P}=\{(G,G_{\text{unitary}},G_{\text{unitary}}^{-1}):(G,G_{\text{unitary}})\in \mathcal F_{\mathsf{Unitary}\mathcal P}\}$$ $$\mathcal R_{\mathsf{InverseAccess}\mathcal P}=\{(G,G_{\text{unitary}},G^{-1}_{\text{unitary}},A):\\(G,A)\in \mathcal R_{\mathcal P},(G,G_{\text{unitary}},G^{-1}_{\text{unitary}})\in\mathcal F_{\mathsf{InverseAccess}\mathcal P}\}$$
    \end{itemize}
\end{definition}

The primitive $\mathsf{Isometry1PRS}$, introduced in Definition~\ref{Superposition1PRS}, is equivalent to $\mathsf{Isometry\mathcal P}$ with $\mathcal P=\mathsf{1PRS}$.

We should notice that there is no straightforward fully black-box construction from $\mathcal P$ to $\mathsf{Isometry}\mathcal P
$, $\mathsf{Unitary}\mathcal P$, or $\mathsf{InverseAccess}\mathcal P$. For example, there is no straightforward fully black-box construction from $\mathsf{1PRS}$ to $\mathsf{Isometry1PRS}$, as the naive construction, that querying the $\mathsf{Gen}\ket{k}$ controlled on the input being $\ket{k}$ where $\mathsf{Gen}$ is the generator of $\mathsf{1PRS}$, is not a QPT construction.

We have the following lemma that characterize the black-box separation between $\mathsf{Isometry}\mathcal P$ and $\mathcal Q$.

\begin{lemma}\label{discussion1}
    Suppose there exists no fully black-box construction of $\mathcal Q$ with channel access to $\mathcal P$. Denote any element in $\mathcal F_{\mathsf{Isometry}\mathcal P}$ as $(C,C_{\text{isometry}})$. Then any fully black-box construction of $\mathcal Q$ with channel access to $\mathsf{Isometry}\mathcal P$ must query $C_{\text{isometry}}$; that is, accessing the purification of the output of $C$.
\end{lemma}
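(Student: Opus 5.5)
We argue by contradiction. Suppose there is a fully black-box construction $(G^{(\cdot)},S^{(\cdot)})$ of $\mathcal Q$ with channel access to $\mathsf{Isometry}\mathcal P$ that never queries $C_{\text{isometry}}$. That is, on every input and for every implementation $(C,C_{\text{isometry}})\in\mathcal F_{\mathsf{Isometry}\mathcal P}$, the oracle algorithm $G^{(C,C_{\text{isometry}})}$ only calls the channel component $C$. We will turn this into a fully black-box construction of $\mathcal Q$ with channel access to $\mathcal P$ itself, contradicting the hypothesis.

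The construction is $G'^{(\cdot)}$, defined by $G'^{C}:=G^{(C,\cdot)}$. This is well defined because $G$ never touches the second oracle component, so it can be run with $C$ alone. The security reduction is $S':=S$, which is unchanged because the reduction only gets channel access to the adversary.

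\textbf{Construction condition.} Take any channel implementation $C\in\mathcal F_{\mathcal P}$. Every channel has a Stinespring dilation, so some isometry $C_{\text{isometry}}$ implements $C$ (purifying its output). Hence $(C,C_{\text{isometry}})\in\mathcal F_{\mathsf{Isometry}\mathcal P}$. By assumption, $G^{(C,C_{\text{isometry}})}$ is an implementation of $\mathcal Q$. It equals $G'^{C}$, since the isometry component is never queried. So $G'^C\in\mathcal F_{\mathcal Q}$.

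\textbf{Reduction condition.} Let $A$ be any adversary that $\mathcal Q$-breaks $G'^C$. Then $A$ also $\mathcal Q$-breaks $G^{(C,C_{\text{isometry}})}$, which is the same channel. By the original reduction, $S^A$ $\mathsf{Isometry}\mathcal P$-breaks $(C,C_{\text{isometry}})$. By the definition of $\mathcal R_{\mathsf{Isometry}\mathcal P}$, this means $(C,S^A)\in\mathcal R_{\mathcal P}$, so $S^A$ $\mathcal P$-breaks $C$.

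Together these give a fully black-box construction of $\mathcal Q$ with channel access to $\mathcal P$, which contradicts the hypothesis. Therefore every such construction must query $C_{\text{isometry}}$, i.e.\ access the purification of the output of $C$.

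The only delicate point is making precise what ``does not query $C_{\text{isometry}}$'' means. If whether $G$ queries it may depend on the implementation, negating the conclusion gives only a single pair $(C,C_{\text{isometry}})$ on which $G$ avoids $C_{\text{isometry}}$, and that is not enough to define $G'$ for every $C$. I would therefore read the lemma syntactically: the oracle algorithm $G$ never issues a query to the second oracle. Under that reading the argument above goes through directly. The existence of a Stinespring isometry for every $C$ is standard.
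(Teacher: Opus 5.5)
Your argument is correct and matches the paper's proof. Both argue that if $G$ never queries $C_{\text{isometry}}$, then the same pair $(G,S)$ is a fully black-box construction of $\mathcal Q$ with channel access to $\mathcal P$, because $\mathsf{Isometry}\mathcal P$-breaking $(C,C_{\text{isometry}})$ is by definition $\mathcal P$-breaking $C$; your explicit Stinespring step (so every $C\in\mathcal F_{\mathcal P}$ extends to some pair in $\mathcal F_{\mathsf{Isometry}\mathcal P}$) and your syntactic reading of ``does not query'' just make precise what the paper leaves implicit.
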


\begin{lemma}\label{discussion2}
    Suppose there exists no fully black-box construction of $\mathcal Q$ with channel access to $\mathsf{Isometry}\mathcal P$. Denote any element in $\mathcal F_{\mathsf{Unitary}\mathcal P}$ as $(C,C_{\text{unitary}})$, so that $\ket{\zeta}\mapsto C_{\text{unitary}}(\ket{\zeta}\ket{0})$ is an isometry implementation of $C$. Then any fully black-box construction of $\mathcal Q$ with channel access to $\mathsf{Unitary}\mathcal P$ must query $C_{\text{unitary}}$ outside the subspace spanned by $\ket{\zeta}\ket{0}$; that is, querying on some non-zero ancilla input.
\end{lemma}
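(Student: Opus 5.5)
We argue by contraposition, following the pattern of Lemma~\ref{discussion1} but one level up. Suppose $(G^{(\cdot)},S^{(\cdot)})$ is a fully black-box construction of $\mathcal Q$ with channel access to $\mathsf{Unitary}\mathcal P$. Suppose further that, for every $(C,C_{\text{unitary}})\in\mathcal F_{\mathsf{Unitary}\mathcal P}$, every query $G$ makes to $C_{\text{unitary}}$ has its input supported on the subspace $\{\ket{\zeta}\ket{0}\}$, i.e.\ the ancilla register is always set to zero. We will turn $(G,S)$ into a fully black-box construction of $\mathcal Q$ with channel access to $\mathsf{Isometry}\mathcal P$. This contradicts the hypothesis.

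\textbf{Step 1: simulating the generator.} Take $(C,C_{\text{isometry}})\in\mathcal F_{\mathsf{Isometry}\mathcal P}$. Extend the isometry $\ket{\zeta}\mapsto C_{\text{isometry}}\ket{\zeta}$ to a unitary $C_{\text{unitary}}$ by choosing any unitary extension on the orthogonal complement of $\{\ket{\zeta}\ket{0}\}$. Then $(C,C_{\text{unitary}})\in\mathcal F_{\mathsf{Unitary}\mathcal P}$. Define $G'^{(C,C_{\text{isometry}})}$ to run $G$ as follows:
\begin{itemize}
\item queries to $C$ are forwarded unchanged;
\item each query to $C_{\text{unitary}}$ on $\ket{\zeta}\ket{0}$ is answered by discarding (or never allocating) the zero ancilla and calling $C_{\text{isometry}}$ on $\ket{\zeta}$.
\end{itemize}
By assumption every query lies in the subspace where $C_{\text{unitary}}$ and $C_{\text{isometry}}$ agree, so $G'^{(C,C_{\text{isometry}})}$ behaves exactly as $G^{(C,C_{\text{unitary}})}$. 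Hence it is an implementation of $\mathcal Q$, and it is QPT whenever $G$ is.

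\textbf{Step 2: the security reduction.} Let $A$ $\mathcal Q$-break $G'^{(C,C_{\text{isometry}})}$. Then $A$ also $\mathcal Q$-breaks $G^{(C,C_{\text{unitary}})}$, since the two implementations are identical. By the black-box security reduction for $(G,S)$, $S^A$ $\mathsf{Unitary}\mathcal P$-breaks $(C,C_{\text{unitary}})$. By Definition~\ref{refinedPrimitives}, this means $(C,S^A)\in\mathcal R_{\mathcal P}$. The relation depends only on $C$ and the adversary, so $S^A$ also $\mathsf{Isometry}\mathcal P$-breaks $(C,C_{\text{isometry}})$. Setting $S'=S$ gives the required construction $(G',S')$ with channel access to $\mathsf{Isometry}\mathcal P$, which is the contradiction.

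\textbf{Main obstacle.} The delicate point is making ``never queries outside $\{\ket{\zeta}\ket{0}\}$'' precise when queries are made coherently or on mixed or entangled inputs. The condition we need is that the input reduced to the ancilla register is exactly $\ket{0}$ for all implementations. Under that condition the actions of $C_{\text{unitary}}$ and $C_{\text{isometry}}\otimes\mathrm{id}$ (with the ancilla appended) coincide on all inputs, including when the query register is entangled with $G$'s workspace, so Step 1 is an exact simulation. A secondary point: the construction must succeed for \emph{every} unitary extension, so we need that whatever $C_{\text{unitary}}$ does off the subspace cannot affect $G$'s behavior. Under our assumption this holds immediately, because those inputs never occur.
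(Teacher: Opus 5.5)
Your proposal is correct and takes essentially the paper's approach. The paper proves this lemma by the same contradiction as Lemma~\ref{discussion1}: if $G$ never uses the extra access, then $(G,S)$ is already a construction with channel access to $\mathsf{Isometry}\mathcal P$, because both $\mathcal R_{\mathsf{Unitary}\mathcal P}$ and $\mathcal R_{\mathsf{Isometry}\mathcal P}$ reduce to $\mathcal R_{\mathcal P}$. You additionally spell out the simulation step the paper leaves implicit: extending $C_{\text{isometry}}$ to a unitary, and answering zero-ancilla queries with $C_{\text{isometry}}$.
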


\begin{lemma}\label{discussion3}
    Suppose there exists no fully black-box construction of $\mathcal Q$ with channel access to $\mathsf{Unitary}\mathcal P$. Denote any element in $\mathcal F_{\mathsf{InverseAccess}\mathcal P}$ as $(C,C_{\text{unitary}},C_{\text{unitary}}^{-1})$. Then any fully black-box construction of $\mathcal Q$ with channel access to $\mathsf{InverseAccess}\mathcal P$ must query $C_{\text{unitary}}^{-1}$.
\end{lemma}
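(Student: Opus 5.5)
We argue by contraposition: assume there is a fully black-box construction $(G^{(\cdot)},S^{(\cdot)})$ of $\mathcal Q$ with channel access to $\mathsf{InverseAccess}\mathcal P$ whose construction algorithm never queries the third component $C_{\text{unitary}}^{-1}$. From it we build a fully black-box construction of $\mathcal Q$ with channel access to $\mathsf{Unitary}\mathcal P$, which contradicts the hypothesis. The approach mirrors Lemmas~\ref{discussion1} and~\ref{discussion2}: a construction that never touches an oracle component can be rewired to an oracle lacking that component.

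\textbf{Construction algorithm.} Given an implementation $(C,C_{\text{unitary}})\in\mathcal F_{\mathsf{Unitary}\mathcal P}$, set $G'^{(C,C_{\text{unitary}})}$ to run $G$ and forward every query to $C$ or $C_{\text{unitary}}$ unchanged. Because $G$ never queries the inverse, $G'^{(C,C_{\text{unitary}})}$ is the same channel as $G^{(C,C_{\text{unitary}},C_{\text{unitary}}^{-1})}$. Moreover $(C,C_{\text{unitary}},C_{\text{unitary}}^{-1})\in\mathcal F_{\mathsf{InverseAccess}\mathcal P}$ by Definition~\ref{refinedPrimitives}. The black-box construction condition for $G$ then gives $G'^{(C,C_{\text{unitary}})}\in\mathcal F_{\mathcal Q}$, and $G'$ is QPT whenever $G$ is.

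\textbf{Security reduction.} Take $S'=S$, with $S$ using the adversary only as a channel exactly as before. Suppose $A$ $\mathcal Q$-breaks $G'^{(C,C_{\text{unitary}})}$. Since this is the identical channel $G^{(C,C_{\text{unitary}},C_{\text{unitary}}^{-1})}$, $A$ also $\mathcal Q$-breaks the latter. By the reduction condition, $S^A$ then $\mathcal P$-breaks $(C,C_{\text{unitary}},C_{\text{unitary}}^{-1})$. Unfolding $\mathcal R_{\mathsf{InverseAccess}\mathcal P}$, this is equivalent to $(C,S^A)\in\mathcal R_{\mathcal P}$. By the definition of $\mathcal R_{\mathsf{Unitary}\mathcal P}$, that in turn means $S^A$ $\mathcal P$-breaks $(C,C_{\text{unitary}})$. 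Hence $(G',S')$ is a fully black-box construction of $\mathcal Q$ with channel access to $\mathsf{Unitary}\mathcal P$, the desired contradiction.

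\textbf{Main obstacle.} The only delicate point is fixing what ``must query $C_{\text{unitary}}^{-1}$'' means. It has to concern the construction algorithm $G$, since the reduction $S$ only has channel access to the adversary and never receives the inverse. It must also be read robustly: if $G$ queries the inverse only on some security parameters, or with negligible weight, the argument applies on the remaining instances. We will state it as: some implementation and some $n$ exist on which $G$ actually issues an inverse query. We will also check carefully that the breaking relations in Definition~\ref{refinedPrimitives} depend only on $(G,A)\in\mathcal R_{\mathcal P}$, since that is what makes the relation transfer immediate.
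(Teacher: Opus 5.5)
Your proof is correct and follows essentially the same route as the paper, which writes out only Lemma~\ref{discussion1} and states that this case is the same argument. Assuming $G$ never queries $C_{\text{unitary}}^{-1}$, you reuse $(G,S)$ verbatim against $\mathsf{Unitary}\mathcal P$ and observe that both breaking relations reduce to $(C,S^A)\in\mathcal R_{\mathcal P}$; your ``robust reading'' aside about negligible-weight queries is unnecessary, since the statement reads ``query'' syntactically and exact membership in $\mathcal F_{\mathcal Q}$ (e.g.\ output purity for $\mathsf{1PRS}$) need not survive a negligible perturbation.
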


We will only prove Lemma~\ref{discussion1}. The proofs of Lemmas~\ref{discussion2} and~\ref{discussion3} are essentially the same.

\begin{proof}[Proof of Lemma~\ref{discussion1}]
    Suppose $G^{(\cdot)}$ and $S^{(\cdot)}$ are QPT algorithms as in Definition~\ref{fullybb-channel}. Then $G^{(C,C_{\text{isometry}})}$ is an implementation of $\mathcal Q$, and for any adversary $A$ that $\mathcal Q$-breaks $G^{(C,C_{\text{isometry}})}$, $S^A$ $\mathsf{Isometry}\mathcal P$-breaks $(C,C_{\text{isometry}})$; that is, $(C,C_{\text{isometry}},S^A)\in \mathcal R_{\mathsf{Isometry}\mathcal P}$. By the definition of $\mathsf{Isometry}\mathcal P$, $(C,S^A)\in \mathcal R_{\mathcal P}$. Therefore, $S^A$ $\mathcal P$-breaks $C$. Suppose for contradiction that $G^{(\cdot)}$ does not query $C_{\text{isometry}}$. Then $G^C$ is an implementation of $\mathcal Q$, so $(G^{(\cdot)},S^{(\cdot)})$ is a fully black-box construction of $\mathcal Q$ with channel access to $\mathcal P$. That leads to contradiction.
\end{proof}

What is the relationship between the fully black-box construction from $\mathcal P$ with isometry access (resp. unitary access, access to the inverse) as introduced in~\cite{CCS24}, and the fully black-box construction from $\mathsf{Isometry}\mathcal P$ (resp. $\mathsf{Unitary}\mathcal P$, $\mathsf{InverseAccess}\mathcal P$)? We have the following characterizations.

\begin{lemma}\label{discussion4}
    Suppose there is no fully black-box construction of $\mathcal Q$ with channel access to $\mathsf{Isometry}\mathcal P$, and there exists a fully black-box construction $(G^{(\cdot)},S^{(\cdot)})$ of $\mathcal Q$ with {\bf isometry} access to $\mathcal P$. Denote $C$ as an isometry implementation of $\mathcal P$, and $A$ is an adversary that $\mathcal Q$-breaks $G^C$, $A\ket{\phi}=\tr_1((\bra{0}\otimes \II)A_{\text{isometry}}\ket{\phi})$. Then $S^{(\cdot)}$ must query $A_{\text{isometry}}$; that is, using not only the first output qubits of $A_{\text{isometry}}$.
\end{lemma}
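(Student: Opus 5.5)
The argument is by contradiction and mirrors the proof of Lemma~\ref{discussion1}. We assume that $S^{(\cdot)}$ uses $A_{\text{isometry}}$ only through the channel $A$: it applies $A_{\text{isometry}}$ and then traces out (or never again touches) everything except the first output qubit. The goal is to show that $(G^{(\cdot)},S^{(\cdot)})$, possibly slightly modified, is then a fully black-box construction of $\mathcal Q$ with channel access to $\mathsf{Isometry}\mathcal P$, which the hypothesis rules out.

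First, I would rewrite the construction side. The definition of isometry access in~\cite{CCS24} lets $G$ call an isometry implementation $C$ of $\mathcal P$. This is exactly the second component $C_{\text{isometry}}$ of an element $(C',C_{\text{isometry}})\in\mathcal F_{\mathsf{Isometry}\mathcal P}$, where $C'$ is the channel obtained from $C_{\text{isometry}}$ by tracing out ancillas. So for every implementation $(C',C_{\text{isometry}})$ of $\mathsf{Isometry}\mathcal P$, the algorithm $G^{(C',C_{\text{isometry}})}$, which ignores $C'$ and calls $C_{\text{isometry}}$ as a channel, is an implementation of $\mathcal Q$. Calling an isometry as a channel oracle is legitimate here: a channel oracle whose action happens to be an isometry, with all outputs returned, is the same object. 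This gives the construction condition of Definition~\ref{fullybb-channel}.

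Second, I would handle the reduction side. Let $A$ $\mathcal Q$-break $G^{(C',C_{\text{isometry}})}$. The reduction $S$ was only guaranteed to work given isometry access to the adversary, that is, access to some $A_{\text{isometry}}$ with $A\ket\phi=\tr_1((\bra0\otimes\II)A_{\text{isometry}}\ket\phi)$. Under our assumption, every call $S$ makes discards all registers except the first output qubit. So each call can be replaced by a call to the channel $A$, and the resulting output distribution is identical. Hence $S^{A}$, with only channel access, $\mathcal P$-breaks $C$. By the definition of $\mathcal R_{\mathsf{Isometry}\mathcal P}$, since $(C,S^A)\in\mathcal R_{\mathcal P}$, $S^A$ also $\mathsf{Isometry}\mathcal P$-breaks $(C',C_{\text{isometry}})$. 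Together with the first step, this yields a fully black-box construction of $\mathcal Q$ with channel access to $\mathsf{Isometry}\mathcal P$, contradicting the hypothesis.

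The main obstacle is making precise what ``not querying $A_{\text{isometry}}$'' means, so that the simulation in the second step is exact. The issue is that $S$ might apply a later operation to the discarded registers, or feed them back into a subsequent query, before eventually tracing them out. I would define the condition to mean that after each query the non-output registers are never acted on again. Under that definition, deferring the trace to the moment of the query does not change any measurement statistics of $S$, because operations on disjoint registers commute. With this convention the replacement is exactly faithful and the argument goes through.
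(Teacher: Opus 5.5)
Your proposal is correct and follows essentially the same argument as the paper. You assume $S$ uses $A_{\text{isometry}}$ only through the channel $A$, and you note that $G$'s isometry access to $\mathcal P$ is exactly channel access to the isometry component of an $\mathsf{Isometry}\mathcal P$ implementation. You then use the definition of $\mathcal R_{\mathsf{Isometry}\mathcal P}$ to conclude that $(G^{(\cdot)},S^{(\cdot)})$ would be a fully black-box construction of $\mathcal Q$ with channel access to $\mathsf{Isometry}\mathcal P$, which is a contradiction. Your explicit convention for ``not querying $A_{\text{isometry}}$'' (the discarded registers are never acted on again, so the trace can be deferred) is a precision the paper leaves implicit.
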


\begin{lemma}\label{discussion5}
    Suppose there is no fully black-box construction of $\mathcal Q$ with channel access to $\mathsf{Unitary}\mathcal P$, and there exists a fully black-box construction $(G^{(\cdot)},S^{(\cdot)})$ of $\mathcal Q$ with {\bf unitary} access to $\mathcal P$. Denote $C$ as a unitary implementation of $\mathcal P$, and $A$ is an adversary that $\mathcal Q$-breaks $G^C$, $A\ket{\phi}=\tr_1((\bra{0}\otimes \II)A_{\text{unitary}}\ket{\phi}\ket{0^\eta})$. Then $S^{(\cdot)}$ must query $A_{\text{unitary}}$; that is, using not only the first output qubits of $A_{\text{unitary}}$, or querying to $A_{\text{unitary}}$ on some non-zero ancilla input.
\end{lemma}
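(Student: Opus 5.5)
The statement to prove is Lemma~\ref{discussion5}. I would follow the same contrapositive template as the proof of Lemma~\ref{discussion1}, chained with the reasoning behind Lemma~\ref{discussion4}.

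Suppose, for contradiction, that $(G^{(\cdot)},S^{(\cdot)})$ is a fully black-box construction of $\mathcal Q$ with unitary access to $\mathcal P$. Suppose also that $S^{(\cdot)}$ uses the adversary only through the channel $A\ket{\phi}=\tr_1((\bra{0}\otimes\II)A_{\text{unitary}}\ket{\phi}\ket{0^\eta})$. Concretely, this means every call to $A_{\text{unitary}}$ is made on input of the form $\ket{\phi}\ket{0^\eta}$, and afterwards only the designated output qubits are kept, while the work registers are traced out or never touched again. Under this assumption, $S^{A_{\text{unitary}}}$ coincides as a channel with $S^{A}$, where $S$ is given plain channel access to $A$.

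Next I would build a construction with channel access to $\mathsf{Unitary}\mathcal P$, exactly as in Lemma~\ref{discussion1}. An element of $\mathcal F_{\mathsf{Unitary}\mathcal P}$ is a pair $(C,C_{\text{unitary}})$. Let $G'$ run $G$ and answer each of its unitary queries using $C_{\text{unitary}}$. Then $G'^{(C,C_{\text{unitary}})}=G^{C_{\text{unitary}}}$, which is an implementation of $\mathcal Q$ by the construction condition. For the reduction, let $A$ be any channel adversary that $\mathcal Q$-breaks $G'^{(C,C_{\text{unitary}})}$. I fix a Stinespring unitary dilation $A_{\text{unitary}}$ of $A$, which always exists. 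By the security-reduction condition of the unitary-access construction, $S^{A_{\text{unitary}}}$ $\mathcal P$-breaks $C_{\text{unitary}}$. By the assumption from the first step this equals $S^{A}$, so $S^A$ $\mathcal P$-breaks $C$. By the definition of $\mathcal R_{\mathsf{Unitary}\mathcal P}$, $S^A$ therefore $\mathsf{Unitary}\mathcal P$-breaks $(C,C_{\text{unitary}})$. Hence $(G',S)$ is a fully black-box construction of $\mathcal Q$ with channel access to $\mathsf{Unitary}\mathcal P$, contradicting the hypothesis.

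The main obstacle is the first step: making rigorous that ``$S$ uses only the first output qubits and only zero ancillas'' implies that $S^{A_{\text{unitary}}}$ depends only on the channel $A$, independently of which dilation is chosen. I would argue this directly. Any two dilations agree on the subspace $\ket{\phi}\ket{0^\eta}$ up to a unitary acting on the discarded environment. Since $S$ never acts on the environment afterwards, tracing it out yields identical reduced states at every step, and an induction over the queries of $S$ carries this through the whole reduction. A second, smaller point is to check that the class of adversaries in the unitary-access definition of \cite{CCS24} is stated through dilations in a compatible way. If it is, the reduction's guarantee for every $A_{\text{unitary}}$ transfers to the channel $A$.
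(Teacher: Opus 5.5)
Your proposal is correct and follows the paper's argument. The paper proves Lemma~\ref{discussion4} by exactly this contrapositive and states that Lemma~\ref{discussion5} follows in the same way: if $S$ only ever uses the adversary as the channel $A$, then $(G,S)$ is itself a fully black-box construction of $\mathcal Q$ with channel access to $\mathsf{Unitary}\mathcal P$, because $\mathcal R_{\mathsf{Unitary}\mathcal P}$ is inherited from $\mathcal R_{\mathcal P}$. Your explicit check that $S^{A_{\text{unitary}}}$ depends only on the channel $A$, and not on the chosen dilation, when every query uses zero ancillas and discards the work registers, is left implicit in the paper, and it is the right justification.
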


\begin{lemma}\label{discussion6}
    Suppose there is no fully black-box construction of $\mathcal Q$ with channel access to $\mathsf{InverseAccess}\mathcal P$, and there exists a fully black-box construction $(G^{(\cdot)},S^{(\cdot)})$ of $\mathcal Q$ with access {\bf to the inverse} to $\mathcal P$. Denote $C$ as a unitary implementation of $\mathcal P$, and $A$ is an adversary that $\mathcal Q$-breaks $G^{(C,C^{-1})}$, $A\ket{\phi}=\tr_1((\bra{0}\otimes \II)A_{\text{unitary}}\ket{\phi}\ket{0^\eta})$. Then $S^{(\cdot)}$ must query $A_{\text{unitary}}$ or its inverse; that is, using not only the first output qubits of $A_{\text{unitary}}$, or querying on some non-zero ancilla input, or querying $A_{\text{unitary}}^{-1}$.
\end{lemma}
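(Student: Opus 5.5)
The plan is to argue by contradiction, mirroring the proof of Lemma~\ref{discussion1}: if $S^{(\cdot)}$ only ever uses the adversary as a channel, then $(G^{(\cdot)},S^{(\cdot)})$, suitably reinterpreted, is a fully black-box construction of $\mathcal Q$ with channel access to $\mathsf{InverseAccess}\mathcal P$. That contradicts the first hypothesis.

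\textbf{Step 1 (spelling out the hypothesis).} I will use the notion of ``access to the inverse'' from~\cite{CCS24}. It says two things. First, for every unitary implementation $C$ of $\mathcal P$, the algorithm $G^{(C,C^{-1})}$ is an implementation of $\mathcal Q$. Second, for every adversary $A$, given via a unitary $A_{\text{unitary}}$ with $A\ket{\phi}=\tr_1((\bra{0}\otimes \II)A_{\text{unitary}}\ket{\phi}\ket{0^\eta})$, that $\mathcal Q$-breaks $G^{(C,C^{-1})}$, the reduction $S^{(A_{\text{unitary}},A_{\text{unitary}}^{-1})}$ $\mathcal P$-breaks $C$. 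Suppose, for contradiction, that $S$ uses $A_{\text{unitary}}$ only as the channel $A$. Concretely, this means three things: every query is made with zero ancilla input, only the first output qubit is kept and the rest is discarded, and $A_{\text{unitary}}^{-1}$ is never queried. Then $S^{(A_{\text{unitary}},A_{\text{unitary}}^{-1})}=S^{A}$ as channels, so the reduction depends only on the channel $A$.

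\textbf{Step 2 (building the channel-access construction).} Let $(C,C_{\text{unitary}},C_{\text{unitary}}^{-1})\in\mathcal F_{\mathsf{InverseAccess}\mathcal P}$ be arbitrary. Define $G'$, which has channel access to this triple, to run $G$ and answer its forward and inverse queries with the second and third components. Then $G'$ is exactly $G^{(C_{\text{unitary}},C_{\text{unitary}}^{-1})}$, which is an implementation of $\mathcal Q$ by Step~1, since $C_{\text{unitary}}$ is a unitary implementation of $\mathcal P$. For security, take any channel adversary $A$ that $\mathcal Q$-breaks $G'$. Let $A_{\text{unitary}}$ be any Stinespring dilation of $A$ in the form above. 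Then $A_{\text{unitary}}$ is an adversary to which the original reduction applies, so $S^{A}$ $\mathcal P$-breaks $C$, i.e.\ $(C,S^A)\in\mathcal R_{\mathcal P}$. By Definition~\ref{refinedPrimitives}, this gives $(C,C_{\text{unitary}},C_{\text{unitary}}^{-1},S^A)\in\mathcal R_{\mathsf{InverseAccess}\mathcal P}$. Hence $(G',S)$ is a fully black-box construction of $\mathcal Q$ with channel access to $\mathsf{InverseAccess}\mathcal P$, which is the desired contradiction.

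\textbf{Main obstacle.} The delicate point is the identification $S^{(A_{\text{unitary}},A_{\text{unitary}}^{-1})}=S^A$, and the fact that it holds independently of which dilation is chosen. I must be careful that ``not querying $A_{\text{unitary}}$'' exactly means the three restrictions listed in Step~1, so that the reduction's behavior is a function of the channel $A$ alone. I must also check that every channel adversary $A$ admits such a dilation, so the breaking hypothesis of the inverse-access construction genuinely applies. The analogous statement for $G'$, that it only uses the oracles it is given, is immediate. Lemmas~\ref{discussion4} and~\ref{discussion5} follow by the same argument with fewer components.
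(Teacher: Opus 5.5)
Your proposal is correct and takes essentially the same route as the paper. The paper proves this lemma by the contradiction argument of Lemma~\ref{discussion4}: if $S$ uses the adversary only as a channel, then $(G^{(\cdot)},S^{(\cdot)})$ is already a fully black-box construction with channel access to $\mathsf{InverseAccess}\mathcal P$, because $(C,S^A)\in\mathcal R_{\mathcal P}$ lifts to $\mathcal R_{\mathsf{InverseAccess}\mathcal P}$ by Definition~\ref{refinedPrimitives}. Your explicit treatment of the dilation step (every channel adversary admits a unitary dilation of the required form, and under the contradiction hypothesis $S^{A}$ does not depend on which dilation is chosen) fills in a detail the paper leaves implicit.
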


The proofs are essentially equivalent to Lemma~\ref{discussion1}. We will only prove Lemma~\ref{discussion4}.

\begin{proof}[Proof of Lemma~\ref{discussion4}]
    Suppose $G^{(\cdot)}$ and $S^{(\cdot)}$ are QPT algorithms as in the definition of fully black-box construction with isometry access. Then $G^{(C,C_{\text{isometry}})}$ is an implementation of $\mathcal Q$, and for any adversary $(A,A_{\text{isometry}})$ that $\mathcal Q$-breaks $G^{(C,C_{\text{isometry}})}$, $S^{(A,A_{\text{isometry}})}$ $\mathcal P$-breaks $(C,C_{\text{isometry}})$; that is, $(C,S^{(A,A_{\text{isometry}})})\in \mathcal R_{\mathcal P}$. By the definition of $\mathsf{Isometry}\mathcal P$, $(C,C_{\text{isometry}},S^{(A,A_{\text{isometry}})})\in \mathcal R_{\mathsf{Isometry}\mathcal P}$. Therefore, $S^{(A,A_{\text{isometry}})}$ $\mathsf{Isometry}\mathcal P$-breaks $C$. Suppose for contradiction that $S^{(\cdot)}$ does not query $A_{\text{isometry}}$. Then $S^A$ $\mathsf{Isometry}\mathcal P$-breaks $C$, so $(G^{(\cdot)},S^{(\cdot)})$ is a fully black-box construction of $\mathcal Q$ with channel access to $\mathsf{Isometry}\mathcal P$. That leads to contradiction.
\end{proof}

It is worth mentioning that the difference of $\mathcal P$ and the primitives introduced in Definition~\ref{refinedPrimitives} is only meaningful from the black-box point of view. In the real world, if the primitive $\mathcal P$ exists, then the primitives in Definition~\ref{refinedPrimitives} also exist, though there may not be a fully black-box way to construct $\mathsf{Isometry}\mathcal P$ from $\mathcal P$.

Although in principle the situation in Lemmas~\ref{discussion4}, \ref{discussion5}, or \ref{discussion6} could happen, and the black-box reduction can access the full adversary rather than only the first qubit, it is hard to give an instance of such situation. In fact, all fully black-box constructions known to the authors use the adversary as a channel.

We argue by the following lemma that even in the oracle world, our point of view in terms of primitives is consistent with the point of view in terms of type of access in~\cite{CCS24}.

\begin{lemma}
    Under an {\bf isometry} oracle $\mathcal O$ (resp. {\bf unitary oracle}, {\bf unitary oracle and its inverse}), if $\mathcal P$ exists, then $\mathsf{Isometry}\mathcal P$ (resp. $\mathsf{Unitary}\mathcal P$, $\mathsf{InverseAccess}\mathcal P$) exists.
\end{lemma}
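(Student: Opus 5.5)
We will prove the statement for isometry oracles in detail and then indicate the changes for the unitary case and for the unitary-plus-inverse case. Let $\mathcal O$ be an isometry oracle relative to which $\mathcal P$ exists. Then there is a QPT oracle algorithm $G^{\mathcal O}$ that is an efficient and secure implementation of $\mathcal P$. The plan is to take $G$'s oracle circuit and purify it. The result is an efficient isometry implementation $G_{\text{isometry}}^{\mathcal O}$ such that the pair $(G^{\mathcal O},G_{\text{isometry}}^{\mathcal O})$ lies in $\mathcal F_{\mathsf{Isometry}\mathcal P}$ and is still secure.

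\textbf{Construction.} A QPT oracle circuit relative to $\mathcal O$ consists of three kinds of elements: unitary gates, ancilla initializations, queries to the isometry $\mathcal O$, and finally measurements or discarding of qubits. To purify it we apply the principle of deferred measurement. We replace each intermediate measurement by a CNOT onto a fresh ancilla. We then keep every register that would have been traced out. Each oracle query is already an isometry, so it is left unchanged. The result is an oracle circuit $G_{\text{isometry}}^{\mathcal O}$ that is a composition of isometries, and hence an isometry. It also satisfies
\[
G^{\mathcal O}(\rho)=\tr_{\mathsf{junk}}\!\left(G_{\text{isometry}}^{\mathcal O}\,\rho\,(G_{\text{isometry}}^{\mathcal O})^\dagger\right).
\]
The size of the new circuit grows only linearly, so it remains QPT relative to $\mathcal O$. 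Consequently $(G^{\mathcal O},G_{\text{isometry}}^{\mathcal O})\in\mathcal F_{\mathsf{Isometry}\mathcal P}$, and it is an efficient implementation.

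\textbf{Security.} By Definition~\ref{refinedPrimitives}, $(G,G_{\text{isometry}},A)\in\mathcal R_{\mathsf{Isometry}\mathcal P}$ holds exactly when $(G,A)\in\mathcal R_{\mathcal P}$. Therefore any QPT adversary $A^{\mathcal O}$ that $\mathsf{Isometry}\mathcal P$-breaks the pair also $\mathcal P$-breaks $G^{\mathcal O}$. This contradicts the assumption that $G^{\mathcal O}$ is secure, so security carries over directly.

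\textbf{Unitary and inverse cases.} For a unitary oracle we purify further. Each ancilla initialization is treated as an input register that is assumed to be $\ket{0}$. The circuit then becomes a unitary $G_{\text{unitary}}^{\mathcal O}$ built from gates and calls to $\mathcal O$. For a unitary oracle together with its inverse, we obtain $(G_{\text{unitary}}^{\mathcal O})^{-1}$ by reversing the circuit. This reversal replaces each gate by its adjoint and each call to $\mathcal O$ by a call to $\mathcal O^{-1}$, and the latter is available.

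\textbf{Main obstacle.} The hardest point is the isometry case when $G$ itself calls $\mathcal O$ inside the computation. Purification must not require anything beyond forward access to $\mathcal O$. Deferring measurements and refraining from discarding registers only adds ordinary gates and ancillas, so forward queries suffice. Care is needed to ensure that every query sits in a context where it acts as an isometry on the extended space. This holds because an isometry tensored with the identity is again an isometry.
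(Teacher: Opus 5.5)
Your proposal is correct and follows the same route as the paper: the polynomial-size oracle circuit built from universal gates and calls to the isometry (resp.\ unitary, unitary-plus-inverse) oracle is itself an efficient isometry (resp.\ unitary, invertible) implementation of the channel. Security then transfers verbatim because $\mathcal R_{\mathsf{Isometry}\mathcal P}$ and its analogues are defined through $\mathcal R_{\mathcal P}$.

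Your deferred-measurement purification only makes explicit what the paper assumes when it models QPT algorithms as circuits of gates and queries alone. There is one spot where your claim that deferral adds ``only ordinary gates and ancillas'' needs care: an intermediate outcome that classically controls whether $\mathcal O$ is queried would turn into a controlled query, which you do not have. You can avoid this by always querying $\mathcal O$ on either the real register or a dummy register, chosen via controlled swaps.
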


\begin{proof}
    We will prove the statement for an isometry oracle $\mathcal O$; the respective statement follows by replacing ``isometry" in the proof with the corresponding terms.

    The primitive $\mathcal P$ exists under the isometry oracle $\mathcal O$. By Definition~\ref{CryptoPrimitive}, there exists a secure QPT algorithm $C^{\mathcal O}$ that queries $\mathcal O$, and $C^{\mathcal O}\in \mathcal F_{\mathcal P}$. By the definition of a QPT algorithm, there exists a uniform family of quantum circuits of polynomial size, the gates in which are either the universal gates or queries to $\mathcal O$. Since $\mathcal O$ is an isometry, and the universal gates are unitaries with inverse access, this family of circuits is a QPT isometry implementation of $C^{\mathcal O}$ under the oracle $\mathcal O$. The security of $\mathsf{Isometry}\mathcal P$ is equivalent to the security of $\mathcal P$. Therefore, we get a secure QPT implementation of $\mathsf{Isometry}\mathcal P$ under the isometry oracle $\mathcal O$.
\end{proof}

So, what does our separation result mean in the real world? It shows, there is no fully black-box construction to stretch the length of $\mathsf{1PRS}$ by an arbitrary polynomial, \emph{if we have the isometry access to the generator of the $\mathsf{1PRS}$, and the channel access to the adversary}. Thus we can conclude that any construction of long-stretching $\oPRS$ from short-stretching $\oPRS$ must be either non-black-box or with a stronger black-box access model: the construction of the long-stretching $\oPRS$ must use the code of the short-stretching $\oPRS$, or use the ancilla and/or the inverse of the generation algorithm.

\bibliographystyle{alpha}
\bibliography{ref.bib}
\appendix

\newpage

    \section{Time Complexity of Our Attack}\label{AppendixA}

    \subsection{Computing $V$ in Lemma~\ref{Boyang's Lemma}}\label{Computing V_k}

    The time complexity of our attack essentially depends on finding the isometry $V$ corresponding to a large unitary $G$, as in Lemma~\ref{Boyang's Lemma}. However, Lemma~\ref{Boyang's Lemma} guarantees only the existence of the isometry $V$ corresponding to $G$. We need to actually compute the linear isometry $V$, in order to use it in our future analysis.

    Let us first review the proof of Lemma~\ref{Boyang's Lemma}. It states that for any $$\ket{\Theta}=\left(\bigotimes_{i=1}^{s} \ket{\theta_i}^{\otimes r(i)}\right)\otimes \ket{0}^{\otimes  t}$$ as the input of $G$, where $\ket{\theta_i}=\sum_{j=0}^{2^i-1} x^i_j\ket j$, the resulting state $G\ket{\Theta}$ is always of the following form
    $$G\ket{\Theta}=\sum_{k=0}^{2^m-1}\sum_{l=0}^{2^d-1} A_{kl}(\cdots,x^i_j,\cdots)\ket{k}\ket{l}=\ket{\zeta}\otimes \ket{\tilde \zeta}$$
    where $$\ket{\zeta}=\sum_{k=0}^{2^m-1} p_k(\cdots,x^i_j,\cdots)\ket{k} \qquad \ket{\tilde \zeta}=\sum_{l=0}^{2^d-1} q_l(\cdots,x^i_j,\cdots)\ket{l}$$
    $A_{kl},p_k,q_l$ are homogeneous polynomials of the coefficients $x^i_j$ and $d$ is the length of the state $\ket{\tilde \zeta}$.

    The key to compute $V$ is that it suffices to find the coefficients of $p_k$. Since we have already proved the existence of $V$, for every $i\in [s]$, $p_k$ is homogeneous of degree $\ell(i)$ for every set of variables $\{x^i_j\}_{j=0}^{2^i-1}$. Thus, counting the degree of the polynomial $p_k$, one can immediately get the parameters $\ell(i)$.
    
    We shall view $x^i_j$ as the dual element of the vector $\ket{j}\in \mathbb C^{2^i}$. Define $\mathrm{Poly}^{\ell(i)} \CC^{2^i}$ as the vector space consisting of all degree-$\ell(i)$ homogeneous polynomials over $\CC^{2^i}$. Then, it is known that\footnote{For a linear space $\mathcal H$, we use $(\mathcal H)^\vee$ to denote its dual space.} $$\mathrm{Poly}^{\ell(i)} \CC^{2^i}\cong \Sym_{\ell(i)} (\CC^{2^i})^{\vee} =(\Sym_{\ell(i)} \CC^{2^i})^{\vee}$$ and the isomorphism is given by
    \begin{multline*}(x^i_0)^{n_0}(x^i_1)^{n_1}\cdots (x^i_{2^i-1})^{n_{2^i-1}}\\\mapsto c\cdot\Pi^{i,\ell(i)}_{\text{sym}}\left(\underbrace{\ket{0}\otimes\cdots\otimes \ket{0}}_{n_0\text{ times}}\otimes\underbrace{ \ket{1}\otimes\cdots\otimes\ket{1}}_{n_1\text{ times}}\otimes\cdots\otimes\underbrace{\ket{2^i-1}\otimes\cdots\otimes \ket{2^i-1}}_{n_{2^i-1} \text{ times}}\right)^\vee\end{multline*}
    (see, for instance, \cite[Chapter XVI, Section 8]{SergeLangAlgebra}). As one may notice, the right side is the dual of a Schur basis in the $\ell(i)$-th symmetric tensor space.

    Since for every $i\in [s]$, $p_k$ is homogeneous of degree $\ell(i)$ for every set of variables $\{x^i_j\}_{j=0}^{2^i-1}$, $p_k$ can be seen as a multilinear form, thus a dual element of $\bigotimes_{i=1}^{s}\left(\Sym_{\ell(i)}\CC^{2^i}\right)$, and each monomial in $p_k$ is dual to a Schur basis of $\bigotimes_{i=1}^{s}\left(\Sym_{\ell(i)}\CC^{2^i}\right)$. Therefore, writing $p_k$ as the sum of monomials immediately gives the $k$-th row of the matrix of $V$ with the domain represented in the Schur basis.

    The polynomials $A_{kl}$ can be easily calculated, if one computes every entry of the matrix corresponding to the generator $G$. This computation requires exponential time in the number of qubits that $G$ acts on. Once we have the exact representation of $G$, we can start searching for the polynomials $p_k$ using the equations given by $G\ket{\Theta}=\ket{\zeta}\otimes \ket{\tilde \zeta}$, that is, for all $k,l$:
    $$A_{kl}=p_kq_l.$$
    
    One way to compute the polynomials $p_k$ is by calculating for every $k$ the polynomial greatest common divisor (GCD) of $\{A_{kl}\}_{l=0}^{2^d-1}$. Indeed, we have the following lemma.
    
    \begin{lemma}\label{computingGCDsuffices}
        For every $k$, let $\bar p_k=\gcd (A_{k0},A_{k1},\cdots,A_{k(2^d-1)})$. Then $\bar p_k=C_kp_k$, where $C_k$ is a constant (that does not depend on $x^i_j$).
    \end{lemma}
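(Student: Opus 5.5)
The plan is to use unique factorization in the polynomial ring $R=\CC[\{x^i_j\}]$, which is a UFD, together with the identity $A_{kl}=p_kq_l$ for all $k,l$. Here the coefficients $A_{kl}$ of $G\ket{\Theta}$ are polynomials in the $x^i_j$; they are homogeneous because the input $\ket{\Theta}$ is multilinear in the copies. The identity $A_{kl}=p_kq_l$ is meant as an identity of polynomials, not just of values. I would first justify this. The tensor-product decomposition holds for all unit vectors $\ket{\theta_i}$, and by homogeneity it extends to all nonzero vectors, so the $2^m\times 2^d$ matrix $(A_{kl})$ has rank at most one on a Zariski-dense set. Hence all of its $2\times2$ minors $A_{kl}A_{k'l'}-A_{kl'}A_{k'l}$ vanish identically in $R$.

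From the vanishing minors I would produce polynomial factors, working in the fraction field. Fix indices $k_0,l_0$ with $A_{k_0l_0}\neq 0$ in $R$. Then $A_{kl}=A_{kl_0}A_{k_0l}/A_{k_0l_0}$. Set $g=\gcd_k(A_{kl_0})$ and $h=\gcd_l(A_{k_0l})$, and write $A_{kl_0}=g\,a_k$ and $A_{k_0l}=h\,b_l$, where $\gcd_k(a_k)=\gcd_l(b_l)=1$. The rank-one identity gives $A_{k_0l_0}\,A_{kl}=gh\,a_kb_l$. The factor $gh$ divides $A_{k_0l_0}=g a_{k_0}=h b_{l_0}$ times something; a cleaner route uses the fact that $A_{k_0l_0}$ must divide $gh\,a_kb_l$ for every $k,l$. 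Since the $a_k$ are jointly coprime and the $b_l$ are jointly coprime, this forces $A_{k_0l_0}\mid gh$. One then gets $A_{kl}=c\,a_kb_l$ for the polynomial $c=gh/A_{k_0l_0}$, and we may absorb $c$ into the $a_k$. The upshot is that the factors $p_k,q_l$ can be taken to be polynomials, determined up to reciprocal scalar multiples. That scalar ambiguity is exactly what the constant $C_k$ in the statement allows for. Alternatively, one can use the $p_k,q_l$ already given in the text, which are stated to be polynomials.

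With polynomial factors in hand, the main computation is
\[
\gcd_l(A_{kl})=\gcd_l(p_kq_l)=p_k\cdot\gcd_l(q_l),
\]
using the UFD property that $\gcd(fa_1,\dots,fa_N)=f\gcd(a_1,\dots,a_N)$ up to units. It then remains to show that $\gcd_l(q_l)$ is a nonzero constant. Since the whole vector $(q_l)_l$ is fixed only up to a scalar, I would choose the factorization so that the $q_l$ carry no common polynomial factor. Concretely, if $r=\gcd_l(q_l)$ is nonconstant, I would replace $q_l$ by $q_l/r$ and $p_k$ by $p_kr$. After this replacement, $\bar p_k=C_kp_k$ with $C_k$ a unit of $R$, that is, a nonzero constant.

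The main obstacle is showing that this normalization agrees with the intended $p_k$. The intended $p_k$ should be the polynomial of multidegree exactly $\ell(i)$ from Lemma~\ref{Boyang's Lemma}, whose coefficients give $V$. So I must rule out the possibility that $q_l$ genuinely shares a nonconstant factor, which would shift degree from $\ket{\tilde\zeta}$ into $\ket{\zeta}$. For this I would invoke the norm argument from the proof of Lemma~\ref{Boyang's Lemma}. There, $\braket{\zeta_{\Theta'}|\zeta_\Theta}$ and $\braket{\tilde\zeta_{\Theta'}|\tilde\zeta_\Theta}$ were polynomials whose product is $\prod_i(\sum_j x^i_jy^i_j)^{r(i)}$. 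A common factor $r(x)$ of all the $q_l$ would appear as the factor $r(x)\overline{r}(y)$ in $\braket{\tilde\zeta_{\Theta'}|\tilde\zeta_\Theta}$. But every irreducible factor of that product has the form $\sum_j x^i_jy^i_j$, which mixes $x$ and $y$, whereas $r(x)\overline{r}(y)$ splits into a pure-$x$ part and a pure-$y$ part. This forces $r$ to be constant. The same argument applied to the $p_k$ shows that the $\ell(i)$ are recovered exactly, which completes the lemma.
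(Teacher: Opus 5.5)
Your proposal is correct, but it reaches the key step by a different route than the paper.

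\textbf{Shared reduction.} Both proofs reduce the lemma to showing that the $q_l$ have no nonconstant common factor. You get there via $\gcd_l(p_kq_l)=p_k\gcd_l(q_l)$ in the UFD. The paper gets there by showing that $\Delta_k=\bar p_k/p_k$ divides every $q_l$.

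\textbf{How the common factor is ruled out.}
\begin{itemize}
\item The paper argues that such a factor divides every $A_{k'l}=p_{k'}q_l$. Since $G$ is an isometry, $\ket{\Theta}=G^\dagger(G\ket{\Theta})$, so the factor divides every entry of $\ket{\Theta}$. These entries include the monomials $\prod_i (x^i_{j_i})^{r(i)}$, whose gcd is constant.
\item You instead use the Gram identity $\braket{\zeta_{\Theta'}|\zeta_\Theta}\braket{\tilde\zeta_{\Theta'}|\tilde\zeta_\Theta}=\prod_i\bigl(\sum_j x^i_jy^i_j\bigr)^{r(i)}$, extended to a polynomial identity by homogeneity. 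You combine it with the irreducibility of the bilinear forms $\sum_j x^i_jy^i_j$. A common factor $r(x)$ of the $q_l$ would contribute an irreducible factor depending only on $x$, and this cannot be associate to any of these mixed factors.
\end{itemize}

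\textbf{What each route buys.} The paper's route is more elementary. It needs only the isometry property and a trivial monomial gcd, with no doubled variable set $(x,y)$ and no irreducibility argument. Yours reuses the machinery of Lemma~\ref{Boyang's Lemma} and applies symmetrically to the $p_k$.

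\textbf{Your first step.} It supplies something the paper simply assumes: that $A_{kl}=p_kq_l$ holds as an identity of polynomials with polynomial factors. The vanishing $2\times 2$ minors together with joint coprimality (giving $A_{k_0l_0}\mid gh$) do establish this, and your ``cleaner route'' is sound. The muddled sentence just before it should be removed.

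\textbf{Unnecessary parts.}
\begin{itemize}
\item The normalization detour, replacing $q_l$ by $q_l/r$, is not needed once you have shown that $\gcd_l(q_l)$ is already constant for the given factorization.
\item The closing remark about recovering the $\ell(i)$ is not needed for this lemma.
\end{itemize}
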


    \begin{proof}
        Consider a fixed $k\in \{0,1\}^m$. The polynomials $\{A_{kl}\}_{l=0}^{2^d-1}$ have a common factor $p_k$. Therefore, $\frac{\bar p_k}{p_k}$ is a polynomial, with variables $x^i_j$. We denote this polynomial by $\Delta_k$. Therefore, $\bar p_k=\Delta_kp_k$. We have
        $$A_{kl}=p_kq_l=\frac{\bar p_k}{\Delta_k}q_l.$$

        Since $\bar p_k$ is a common factor of $\{A_{kl}\}_{l=0}^{2^d-1}$, $\frac{q_l}{\Delta_k}$ is also a polynomial for every $l$. So, $\Delta_k$ is a common factor of $\{q_l\}_{l=0}^{2^d-1}$, so it must be a common factor of $\{A_{k'l}:0\leq k'\leq 2^{m}-1,0\leq l\leq 2^{d}-1\}$.

        Next we show that any common factor of $\{A_{k'l}:0\leq k'\leq 2^{m}-1,0\leq l\leq 2^{d}-1\}$ must be a constant. Indeed, by linearity, all the entries of $G^\dagger (G\ket{\Theta})=\ket{\Theta}$ must be linear combinations of the polynomials $A_{k'l}$. Thus, since each $\Delta_k$ is a common divisor of all $A_{k'l}$, then each $\Delta_k$ must be a common divisor of the entries of $\ket{\Theta}$. But
        $$\ket{\Theta}=\left(\bigotimes_{i=1}^s \left(\sum_{j=0}^{2^i-1}x^i_j\ket{j}\right)^{\otimes r(i)}\right)\otimes \ket{0}^{\otimes  t}.$$
        This means that the entries of $\ket{\Theta}$ consist of every polynomial of the form $(x^1_{j_1})^{r(1)}(x^2_{j_2})^{r(2)}\cdots(x^s_{j_s})^{r(s)}$, where $j_\nu\in\{0,1,\ldots,2^\nu-1\}$ for every $\nu\in[s]$. The greatest common divisor of this type of polynomials is known to be a constant.
    \end{proof}

    By Lemma~\ref{computingGCDsuffices}, in order to compute the isometry $V$, it only needs to compute the multivariate polynomial GCD classically.\footnote{Note that we need to compute the coefficients symbolically and accurately; this can be done by choosing a proper set of the universal gate set, for example $\{\mathrm{Toffoli},H\otimes H\}$, where the coefficients of both gates are rational.} This can be done by a sufficiently powerful classical pre-processing Turing machine.

    Let $\Gen^{(\cdot)}$ be a QPT oracle algorithm whose runtime is bounded by $T(n)$, a polynomial in $n$. By definition, $\Gen^\OO$ can be simulated in time $2^{2^{T(n)}}$. Our adversary $\adv^\OO$ simulates this process, and gains a quantum circuit $G$, where $G$ runs on $2^{\poly(n)}$ qubits for at most $2^{2^{T(n)}}$ steps. To compute the matrix corresponding to $G$ we essentially have to compute the product of $2^{2^{T(n)}}$ matrices, each of size $2^{2^{\poly(n)}}$. That can be done in time $2^{2^{\poly(n)}}$.

    After that, we substitute the input to the circuit by a vector of monomials. The output vector is of dimension $2^{2^{\poly(n)}}$, each entry being a polynomial of $2^{\poly(n)}$ variables, $2^{2^{\poly(n)}}$ terms, and degree $2^{2^{\poly(n)}}$. It turns out that a very conservative complexity bound for computing the GCD of $N$ polynomials of $T$ variables and degree $D$ requires time $O(D^{\poly(T)}\poly(N,D,T))$, by the subresultant algorithm~\cite{PolynomialGCD}. Therefore, our algorithm requires $2^{2^{\poly(n)}}$ time to compute the polynomial GCD, and can be run inside $\mathcal O_2$.

\subsection{Implementing the Projective Measurement $\Pi_k$ in Equation~\eqref{eq:Pi_k}}\label{Computing Q_k and V_sch,k}
    In the previous section, we discussed how to compute $U_k$ explicitly. This computation result is a matrix, whose domain is represented in the Schur basis. We need to convert that into the computational basis, so that the Permutation test in~\cite{KNY08} can be easily implemented.

    Note that there are two projections on the right hand side of Equation~\eqref{eq:Pi_k}: the orthogonal projection $Q_k$ and the Permutation test $\Pi_{\text{sym}}$. The measurement $\{\Pi_k,\II-\Pi_k\}$ can be implemented as follows: apply $\widetilde U_k^\dagger$, perform the projective measurement $Q_k$, apply $\mathbf V_{\text{sch},k}$, and perform the projective measurement corresponding to the Permutation test $\bigotimes_{i=1}^{s(n)} \Pi_{\text{sym}}^{(i,\ell(k,i)+2^{2i/5}+8n)}$. The result of the measurement $\{\Pi_k,\II-\Pi_k\}$ is the AND of both projective measurements.

    First, we need to implement the projective measurement $Q_k$. This can be done by an (invertible) classical computation: $Q_k$ compares a number with $2^m-a(k)$, and accepts if that number is less than $2^m-a(k)$.

    Next, we need to split a state in $\bigotimes_{i=1}^{s(n)}\Sym_{\ell(i,k)}\CC^{2^i}$ into different registers, each register corresponds to a space $(\CC^{2^i})^{\ell(i,k)}$, but the elements in each register are represented in the Schur basis. This is also classical: a standard divide-and-modular algorithm suffices to do this conversion.

    And finally, we need to convert each register $(\CC^{2^i})^{\ell(i,k)}$ from a state represented in the Schur basis to the computational basis. This is the inverse of the Schur transformation~\cite{Har05}.

    Combining these methods, we get an efficient circuit implementation of $\mathbf{V}_{\text{sch}}Q_k$. The circuit implementation of Permutation tests can be found on \cite{KNY08}, which uses only exponentially many gates and only polynomially many ancillas (in the number of registers).

\end{document}